\newtheorem{theorem}{Theorem}
\newtheorem{lemma}[theorem]{Lemma}
\newtheorem{conjecture}[theorem]{Conjecture}
\newtheorem{proposition}[theorem]{Proposition}
\theoremstyle{definition}
\newtheorem{remark}[theorem]{Remark}
\newcommand{\Z}{{\mathbb Z}}
\newcommand{\C}{{\mathbb C}}
\newcommand{\ok}{{\rm{\bf k}}}
\newcommand{\am}{{\rm{\bf a}}^{\!-} }
\newcommand{\ap}{{\rm{\bf a}}^{\!+} }
\newcommand{\qo}{{\mathcal O}}
\newcommand{\qw}{{\mathcal W}}
\begin{document}

\title[]{New solutions to the tetrahedron equation
\\ associated with 
quantized six-vertex models}

\maketitle

\begin{center}
Atsuo Kuniba,  Shuichiro Matsuike,  Akihito Yoneyama
\vspace{0.3cm}
\\
Institute of Physics, Graduate School of Arts and Sciences,
\\
University of Tokyo, Komaba, Tokyo 153-8902, Japan
\end{center}

\vspace{0.5cm}
\begin{center}{\bf Abstract}
\end{center}

We present a family of new solutions to the tetrahedron equation
of the form $RLLL=LLLR$, where $L$ operator may be regarded as
a quantized six-vertex model whose
Boltzmann weights are specific representations of the 
$q$-oscillator or $q$-Weyl algebras. 
When the three $L$'s are associated with the $q$-oscillator algebra,
$R$ coincides with the known intertwiner of the quantized coordinate ring 
$A_q(sl_3)$.
On the other hand, $L$'s based on the $q$-Weyl algebra lead to  
new $R$'s  whose elements are either factorized or expressed as a terminating 
$q$-hypergeometric type series.

\vspace{0.4cm}

\section{Introduction}

Tetrahedron equation \cite{Zam80} is a key to integrability for lattice models in statistical mechanics in three dimensions.
Among its several versions and formulations,  let us focus on the so-called $RLLL$ relation:
\begin{align}\label{1}
R_{456}L_{236}L_{135}L_{124}  = L_{124}L_{135}L_{236}R_{456}.
\end{align}
Indices here specify the tensor components on which the associated operators act non-trivially.
When the spaces $4, 5, 6$ are evaluated away appropriately, it reduces to the Yang-Baxter equation 
$L_{23}L_{13}L_{12} = L_{12}L_{13}L_{23}$ \cite{Bax}.
Thus (\ref{1}) may be viewed as a quantization of the Yang-Baxter equation along the direction of the 
auxiliary spaces 4, 5 and 6.
It has appeared in several guises and studied from various point of view.
See for example 
\cite{MN89, Kor93, Ser98, BS06, KuS13,Y21} and the references therein.
A survey from a quantum group theoretical perspective is available in \cite{Ku22}.

In this paper we take the spaces 1, 2, 3 as $V=\C^2$ and consider 
the three kinds of $L$ operators:
\begin{align}
L^Z &\in \mathrm{End}(V\otimes V) \otimes \pi_Z(\mathcal{W}_q),\\
L^X &\in \mathrm{End}(V\otimes V) \otimes \pi_X(\mathcal{W}_q),\\
L^O &\in \mathrm{End}(V\otimes V) \otimes \pi_O(\mathcal{O}_q).
\end{align}
They all have the six-vertex model structure \cite{Bax}, i.e., weight conservation property, with respect to the component 
$V \otimes V$.
The last component is taken from specific representations $\pi_X, \pi_Z$ of the 
$q$-Weyl algebra $\mathcal{W}_q$ (\ref{qw}) on $F = \oplus_{m \in \Z} \C|m\rangle$
or $\pi_O$ of the $q$-oscillator algebra $\mathcal{O}_q$ (\ref{qoa}) on 
$F_+ = \oplus_{m \in \Z_{\ge 0}} \C|m\rangle$.
In short, these $L$ operators may be viewed as quantized six-vertex models whose Boltzmann weights are
$\mathrm{End}(F)$ or $\mathrm{End}(F_+)$-valued.
They naturally lead to the generalizations of (\ref{1}) to 
\begin{align}\label{2}
R_{456}L^C_{236}L^B_{135}L^A_{124}  = L^A_{124}L^B_{135}L^C_{236}R_{456},
\end{align}
where A,B and C can be any one of Z, X and O. 
Let us temporarily call it the $RLLL$ relation of type ABC.

The main result of this paper is the explicit solution $R$ for types
ZZZ, OZZ, ZZO, ZOZ, OOZ, ZOO, OZO, OOO, XXZ, ZXX and XZX.
They turn out to be unique up to normalization in each sector specified by a parity condition
in an appropriate sense.
Elements of $R$ are either factorized or expressed as a 
terminating $q$-hypergeometric type series. 
See Table \ref{tab:kekka} in Section \ref{sec:dis} for a summary.
They are new except for type OOO,
where the $RLLL$ relation \cite{BS06} is equivalent 
(cf. Section \ref{ss:OOOint} and \cite[Lem 3.22]{Ku22}) with the 
intertwining relation of the quantized coordinate ring $A_q(sl_3)$, and the $R$ coincides 
with the intertwiner obtained in \cite{KV94}. 
We will show a similar link to $A_q(sl_3)$ also for type ZZZ in Proposition \ref{pr:z}.

The representations $\pi_Z$ and $\pi_X$  of the $q$-Weyl algebra $XZ=qZX$ are natural ones 
in which $Z$ and $X$ become diagonal, respectively. 
See (\ref{piz}) and (\ref{pix}).
They are $q$-analogue of the coordinate and the momentum representations of the canonical commutation relation,
which are formally interchanged via a $q$-difference analogue of the Fourier transformation.
The representation 
$\pi_O$ is a restriction of the special case of $\pi_X$ as explained around (\ref{pio}).
One of our motivation is to investigate systematically how these $L$ operators, 
including their mixtures, 
lead to a variety of solutions $R$ for the associated $RLLL$ relation.
The new $R$'s obtained in this paper 
will be important inputs to many interesting future problems which will be discussed 
in the last section.

The layout of the paper is as follows.
In Section \ref{sec:6v},
the $L$ operators $L^Z, L^X$ associated with the $q$-Weyl algebra and 
$L^O$ for the $q$-oscillator algebra are introduced.
$L^O$ is a restriction of $L^X$, and appeared in the earlier works
\cite{BS06, BMS08, KuS13, KP18, Y21}.
The $RLLL$ relation is formulated.
In Section \ref{sec:ZO} and \ref{sec:ZX}, 
the solutions $R$ are presented 
for the choices $L= L^Z, L^O$ and $L=L^Z, L^X$, respectively.
Some results in the former case can be reproduced as a limit of the latter.
In Section \ref{sec:rep}, a connection to the representation theory of 
$A_q(sl_3)$ is explained. A new result is Proposition \ref{pr:z}. 
Section \ref{sec:dis} contains a summary and discussion on the 
tetrahedron equation of the form $RRRR=RRRR$.
Conjecture \ref{con:r4} is promising.
Appendix \ref{app:eqs} provides the list of explicit forms of the 
$RLLL$ relation for type ZZZ.

\section{Quantized six-vertex models}\label{sec:6v}

We assume that $q$ is generic throughout the paper.

\subsection{$q$-Weyl algebra $\qw_q$ and $q$-oscillator algebra $\qo_q$}

Let $\qw_q$ be the $q$-Weyl algebra,  which is an associative algebra 
with generators $X^{\pm 1}, Z^{\pm 1}$ obeying the relation
\begin{equation}\label{qw}
XZ = qZX
\end{equation}
and those following from the obvious ones 
$X X^{-1} =X^{-1}X  = Z Z^{-1} = Z^{-1}Z = 1$.
Introduce the infinite dimensional vector spaces\footnote{
The actual coefficient field will contain 
many parameters introduced subsequently including $q$.}:
\begin{align}\label{F}
F= \bigoplus_{m \in \Z}\C |m\rangle,
\qquad 
F_+= \bigoplus_{m \in \Z_{\ge 0}}\C |m\rangle.
\end{align}
The algebra $\qw_q$ has irreducible representations 
$\pi_Z$ (resp. $\pi_X$) on $F$ where $Z$ (resp. $X$) is diagonal:
\begin{align}
\pi_{Z}:
\;\;
X|m\rangle = |m-1\rangle, \quad X^{-1}|m\rangle = |m+1\rangle,\quad
Z|m\rangle = q^m|m\rangle, \quad Z^{-1}|m\rangle = q^{-m}|m\rangle,
\label{piz}
\\
\pi_{X}:
\;\;
X|m\rangle = q^m|m\rangle, \quad X^{-1}|m\rangle = q^{-m}|m\rangle,\quad
Z|m\rangle = |m+1\rangle, \quad Z^{-1}|m\rangle = |m-1\rangle.
\label{pix}
\end{align}
They are $q$-analogue of the ``coordinate" and the ``momentum" representations 
of the canonical commutation relation. 

Let $\qo_q$ be the $q$-oscillator algebra, which is an associative algebra 
with generators $\ap, \am, \ok$ obeying the relation
\begin{equation}\label{qoa}
{\rm{\bf k}} \,{\rm{\bf a}}^+ = q\,{\rm{\bf a}}^+{\rm{\bf k}},\quad
{\rm{\bf k}}\,{\rm{\bf a}}^- = q^{-1}{\rm{\bf a}}^-{\rm{\bf k}},\quad
{\rm{\bf a}}^- {\rm{\bf a}}^+ = 1-q^2{\rm{\bf k}}^2,\quad 
{\rm{\bf a}}^+{\rm{\bf a}}^- = 1-{\rm{\bf k}}^2.
\end{equation}
There is an embedding $\iota: \qo_q \hookrightarrow \qw_q$ given by 
\begin{align}\label{emb}
\iota:\;\;
\ok \mapsto X, \quad 
\ap \mapsto Z, \quad
\am \mapsto Z^{-1}(1-X^2).
\end{align}
The composition $\qo_q \overset{\iota}{\hookrightarrow} \qw_q \overset{\pi_X}{\longrightarrow}
\mathrm{End}(F)$ yields the representation:
\begin{align}\label{pio}
{\rm{\bf k}}|m\rangle = q^m |m\rangle,\quad
{\rm{\bf a}}^+|m\rangle = |m+1\rangle,\quad
{\rm{\bf a}}^-|m\rangle = (1-q^{2m})|m-1\rangle.
\end{align}
Due to ${\rm{\bf a}}^-|0\rangle = 0$, the subspace $F_+ \subset F$ becomes invariant and irreducible.
We let $\pi_O: \qo_q \rightarrow \mathrm{End}(F_+)$ denote the resulting irreducible representation 
obtained by restricting (\ref{pio}) to $m \ge 0$.

\subsection{3D $L$ operator}
Let $V = \C v_0 \oplus \C v_1$ be the two dimensional vector space.
We consider $q$-Weyl algebra-valued $L$ operator
\begin{align}
&\mathcal{L} = \mathcal{L}_{r,s,t,w} 
= \sum_{a,b,i,j=0,1} E_{ai}\otimes E_{bj} \otimes \mathcal{L}^{ab}_{ij} 
\in \mathrm{End}(V \otimes V)  \otimes \qw_q,
\label{L1}
\\
&\mathcal{L}^{ab}_{ij}=0\; \text{unless}\; a+b=i+j,
\label{L2}
\\
&\mathcal{L}^{00}_{00} = r,\;\;  \mathcal{L}^{11}_{11} = s,\;\; \mathcal{L}^{10}_{10} = tw X,\;\;
\mathcal{L}^{01}_{01} = -qtX, 
\;\;  \mathcal{L}^{10}_{01} = Z, \;\; \mathcal{L}^{01}_{10} = Z^{-1}(rs- t^2wX^2).
\label{L3}
\end{align} 
Here $r,s,t,w$ are parameters whose dependence has been suppressed in the notation 
$\mathcal{L}^{ab}_{ij}$.
They are assumed to be generic throughout.
The symbol $E_{ij}$ denotes the matrix unit on $V$ acting on the basis as $E_{ij}v_k = \delta_{jk}v_i$.
The $L$ operator $\mathcal{L}$ may be viewed as a quantized six-vertex model
where the Boltzmann weights are $\qw_q$-valued.
See Figure \ref{fig:6vW} for a graphical representation.

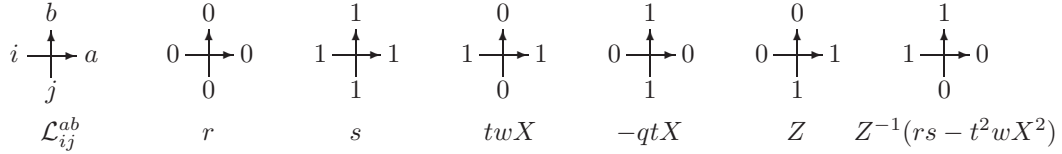
\begin{figure}[H]
\begin{picture}(330,60)(10,30)
{\unitlength 0.011in
\put(6,80){
\put(-11,0){\vector(1,0){23}}\put(0,-10){\vector(0,1){22}}
}
\multiput(81,80.5)(70,0){6}{
\put(-11,0){\vector(1,0){23}}\put(0,-10){\vector(0,1){22}}
}
\put(-74,0){
\put(60.5,77){$i$}\put(77.5,60){$j$}
\put(96,77){$a$}\put(77.5,96.5){$b$}
}
\put(61,77){0}\put(78,60){0}\put(96,77){0}\put(78,96.5){0}
\put(70,0){
\put(61,77){1}\put(78,60){1}\put(96,77){1}\put(78,96.5){1}
}
\put(140,0){
\put(61,77){1}\put(78,60){0}\put(96,77){1}\put(78,96.5){0}
}
\put(210,0){
\put(61,77){0}\put(78,60){1}\put(96,77){0}\put(78,96.5){1}
}
\put(280,0){
\put(61,77){0}\put(78,60){1}\put(96,77){1}\put(78,96.5){0}
}
\put(350,0){
\put(61,77){1}\put(78,60){0}\put(96,77){0}\put(78,96.5){1}
}
\put(78,40){
\put(-77,0){$\mathcal{L}^{ab}_{ij}$}
\put(0,0){$r$} \put(70,0){$s$} \put(134,0){$twX$}
\put(197,0){$-qtX$} \put(278,0){$Z$} \put(310,0){$Z^{-1}(rs-t^2w X^2)$}
}}
\end{picture}
\caption{$\mathcal{L} = \mathcal{L}_{r,s,t,w}$ 
as a $\qw_q$-valued six-vertex model.
Assigning another perpendicular arrow corresponding to 
the $\mathcal{W}_q$-modules leads to a unit of the three dimensional (3D) lattice.
In this context, $L$ will also be called the 3D $L$ operator.}
\label{fig:6vW}
\end{figure}

Note that $\mathcal{L}$ does not contain $X^{-1}$, 
which will be a key in Remark \ref{re:w} below. 
Although $t$ can be absorbed into the normalization of $X$,
we keep it for convenience.
It is easy to see
\begin{align}\label{lwinv}
(\mathcal{L}_{r,s,t,w})^{-1} = (rs)^{-1} \mathcal{L}_{s,r,tw,w^{-1}}.
\end{align}
For the special choice of the parameters $(r,s,t,w) = (1,1,\mu^{-1},\mu^2)$,
$\mathcal{L}$ 
only contains the combinations appearing in the RHS of (\ref{emb}) 
which can be pulled back to the $q$-oscillator algebra.
Therefore we regard it as $\qo_q$-valued, i.e., 
\begin{align}\label{Lmu}
\mathcal{L}_{1,1,\mu^{-1},\mu^2} \in \mathrm{End}(V \otimes V) \otimes \qo_q.
\end{align}
Its elements are given by 
\begin{align}
&\mathcal{L}^{00}_{00} = 1,\;\, \mathcal{L}^{11}_{11} = 1,\;\,
\mathcal{L}^{10}_{10} = \mu \ok, \;\, \mathcal{L}^{01}_{01} = -q\mu^{-1}\ok, \;\, 
\;\, \mathcal{L}^{10}_{01} = \ap, \;\, \mathcal{L}^{01}_{10} = \am.
\label{cLO}
\end{align}
See Figure \ref{fig:6vO}.

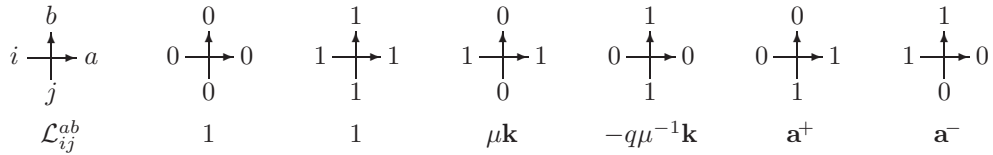
\begin{figure}[H]
\begin{picture}(330,60)(10,30)
{\unitlength 0.011in
\put(6,80){
\put(-11,0){\vector(1,0){23}}\put(0,-10){\vector(0,1){22}}
}
\multiput(81,80.5)(70,0){6}{
\put(-11,0){\vector(1,0){23}}\put(0,-10){\vector(0,1){22}}
}
\put(-74,0){
\put(60.5,77){$i$}\put(77.5,60){$j$}
\put(96,77){$a$}\put(77.5,96.5){$b$}
}
\put(61,77){0}\put(78,60){0}\put(96,77){0}\put(78,96.5){0}
\put(70,0){
\put(61,77){1}\put(78,60){1}\put(96,77){1}\put(78,96.5){1}
}
\put(140,0){
\put(61,77){1}\put(78,60){0}\put(96,77){1}\put(78,96.5){0}
}
\put(210,0){
\put(61,77){0}\put(78,60){1}\put(96,77){0}\put(78,96.5){1}
}
\put(280,0){
\put(61,77){0}\put(78,60){1}\put(96,77){1}\put(78,96.5){0}
}
\put(350,0){
\put(61,77){1}\put(78,60){0}\put(96,77){0}\put(78,96.5){1}
}
\put(78,40){
\put(-77,0){$\mathcal{L}^{ab}_{ij}$}
\put(0,0){1} \put(70,0){1} \put(135,0){$\mu \ok$}
\put(191,0){$-q\mu^{-1}\ok$} \put(278,0){$\ap$} \put(348,0){$\am$}
}}
\end{picture}
\caption{$\mathcal{L} = \mathcal{L}_{1,1,\mu^{-1},\mu^2}$ 
as an $\qo_q$-valued six-vertex model. 
The last two relations in (\ref{qoa}) is a quantization of the 
free Fermion condition 
\cite[Fig. 10.1, eq.(10.16.4)$|_{\omega_7=\omega_8=0}$]{Bax}.}
\label{fig:6vO}
\end{figure}

Now we introduce the three types of (represented) $L$ operators:
\begin{align}
L^Z &= L^Z_{r,s,t,w} = (1 \otimes 1 \otimes \pi_Z)(\mathcal{L}_{r,s,t,w})
\in \mathrm{End}(V \otimes V \otimes F),
\label{LW}
\\
L^X &= L^X_{r,s,t,w} = (1 \otimes 1 \otimes \pi_X)(\mathcal{L}_{r,s,t,w})
\in \mathrm{End}(V \otimes V \otimes F),
\label{LX}
\\
L^O &= L^O_{\mu} = (1 \otimes 1 \otimes \pi_O)(\mathcal{L}_{1,1,\mu^{-1},\mu^2})
\in \mathrm{End}(V \otimes V \otimes F_+).
\label{LO}
\end{align}
From (\ref{lwinv}) and (\ref{Lmu}) we have
\begin{align}\label{Linv}
(L^Z_{r,s,t,w})^{-1} = (rs)^{-1}L^Z_{s,r,tw,w^{-1}},
\quad
(L^X_{r,s,t,w})^{-1} = (rs)^{-1}L^X_{s,r,tw,w^{-1}},
\quad 
(L^O_\mu)^{-1} = L^O_{\mu^{-1}}.
\end{align}

\begin{remark}\label{re:w}
The operator $L^Z$ in (\ref{LW}) keeps the subspace 
$V \otimes V \otimes \bigoplus_{m\le n} \C  |m\rangle \subset F$  invariant 
for any $n \in \Z$.
\end{remark}

\subsection{$RLLL$ relation}
Quantized six-vertex model satisfies the quantized Yang-Baxter equation.
It is a version of the tetrahedron equation having the form of  
the Yang-Baxter equation up to conjugation:
\begin{align}\label{rlll}
R_{456} L_{236}L_{135}L_{124} = L_{124}L_{135}L_{236}R_{456}.
\end{align}
We also call it $RLLL$ relation.
The indices denote the tensor components on which the respective operators act non-trivially.
The operator $L$ will be taken as $L^Z, L^X$ or $L^O$ in (\ref{LW})--(\ref{LO}).
The conjugation operator $R$, which we call 3D $R$ in this paper, 
will be the main object of study in what follows.
In terms of the components of $L$, (\ref{rlll}) reads as 
\begin{align}\label{qybe}
R  \sum_{\alpha, \beta, \gamma}(\mathcal{L}^{\alpha \beta}_{ij} 
\otimes \mathcal{L}^{a \gamma}_{\alpha k} \otimes \mathcal{L}^{bc}_{\beta\gamma})  
= \sum_{\alpha, \beta, \gamma}
(\mathcal{L}^{ab}_{\alpha\beta} \otimes 
\mathcal{L}^{\alpha c}_{i\gamma} \otimes \mathcal{L}^{\beta\gamma}_{jk})\, R
\end{align}
for arbitrary $a,b,c,i,j,k \in \{0,1\}$.
See Figure \ref{fig:qybe}.

\begin{figure}[H]
\begin{picture}(250,60)(-80,-8)

\put(120,0){
\put(0,0){
\put(-27,-8){\line(0,1){58}}\put(-27,-8){\line(1,0){4}}\put(-27,50){\line(1,0){4}}}
\put(-83,18){$= \;\;\; \sum_{\alpha, \beta, \gamma}$}
\put(1,1){\vector(0,1){50}} \put(-1,53.5){$c$}
\put(-13,5){\vector(2,1){55}}\put(45,31){$b$}
\put(-13,40){\vector(2,-1){55}}\put(45,10){$a$}
\put(-18,38){$i$}\put(-20,2){$j$}\put(-1,-8.5){$k$}
\put(11,31){$\alpha$}\put(10,9){$\beta$}\put(-7,22){$\gamma$}
\put(1,0){\put(55,-8){\line(0,1){58}}\put(55,50){\line(-1,0){4}}\put(55,-8){\line(-1,0){4}}
\put(55,18){\; $\circ \; R $}}
}
\put(-190,0){
\put(72,18){$\sum_{\alpha, \beta, \gamma}\; R \; \circ\;$}
\put(156,0){
\put(-27,-8){\line(0,1){58}}\put(-27,-8){\line(1,0){4}}\put(-27,50){\line(1,0){4}}
\put(25,1){\vector(0,1){50}}\put(22.5,53.5){$c$}
\put(-13,31){\vector(2,-1){55}}\put(45,39){$b$}
\put(-13,13){\vector(2,1){55}}\put(45,0){$a$}
\put(-18,31){$i$}\put(-20,10){$j$}\put(23,-8.5){$k$}
\put(11,31){$\beta$}\put(10,10){$\alpha$}\put(28,22){$\gamma$}
\put(1,0){\put(55,-8){\line(0,1){58}}\put(55,50){\line(-1,0){4}}\put(55,-8){\line(-1,0){4}}}
}
}
\end{picture}
\caption{A pictorial representation of 
the quantized Yang-Baxter equation (\ref{qybe}). }
\label{fig:qybe}
\end{figure}

From the conservation condition (\ref{L2}), the equation (\ref{qybe}) becomes $0=0$
unless $a+b+c=i+j+k$.
There are 20 choices of $(a,b,c,i,j,k) \in \{0,1\}^6$ satisfying it.
Among them, the cases $(0,0,0,0,0,0)$ and $(1,1,1,1,1,1)$ 
yield the trivial relation $R(1 \otimes 1 \otimes 1) = (1 \otimes 1 \otimes 1)R$
for any choice of $L = L^Z, L^X, L^O$.
Thus there are 18 non-trivial equations on $R$.
By setting\footnote{$a,b,c,i,j,k$ here are labels of the basis of $F$ or $F_+$ and 
have different meaning from those in (\ref{qybe}) labeling the basis of $V$.}
\begin{align}\label{rdef}
R(|i\rangle \otimes |j\rangle \otimes |k\rangle) = \sum_{a,b,c} R^{a,b,c}_{i,j,k}
 |a\rangle \otimes |b\rangle \otimes |c\rangle,
 \end{align}
they are translated into linear recursion relations on the matrix elements $R^{a,b,c}_{i,j,k}$.
We say that 
$R$ is {\em locally finite} if the sum (\ref{rdef}) consists of finitely many terms, i.e.,
$R^{a,b,c}_{i,j,k}=0$ for all but finitely many $(a,b,c)$ for any given $(i,j,k)$.

\section{Solutions of $RLLL$ relation for $L = L^Z$ and $L^O$}\label{sec:ZO}

In this section we treat the cases in which 
$L_{124}, L_{135}$ and $L_{236}$ are chosen as $L^Z$or $L^O$ independently.
It turns out that they always admit a unique $R$ up to normalization 
in a sector specified by appropriate parity conditions.
Their explicit forms will be presented case by case.
We write the characteristic function as 
$\theta(\text{true}) =1, \theta(\text{false})=0$, 
$\delta^a_b = \theta(a=b)$ and use the following notation:
\begin{align}
&(z;q)_m = \frac{(z;q)_\infty}{(zq^m;q)_\infty},
\quad 
(z;q)_\infty = \prod_{n\ge 0}(1-zq^n),
\quad
\binom{n}{m}_{\!q} = \frac{(q;q)_n}{(q;q)_m(q;q)_{n-m}},
\label{qfac}\\
&
{}_2\phi_1\left({\alpha, \beta  \atop \gamma }; q, z\right)
= \sum_{n \ge 0} \frac{(\alpha;q)_n(\beta;q)_n }{(\gamma;q)_n(q;q)_n}z^n.
\label{qhyp}
\end{align}
The above convention for 
$(z;q)_m$ valid for any  $m \in \Z$ is standard and essential 
in the working below.
In particular $1/(q;q)_a =0$ for $a \in \Z_{<0}$, and we will freely use 
$(z;q)_m = 1/(zq^m;q)_{-m}$ and 
$(z;q)_m/(z;q)_n = (zq^n;q)_{m-n}$, etc.
The $q$-binomial $\binom{n}{m}_{\!q}$ is zero unless $0 \le m \le n$.
The $q$-hypergeometric series will always appear in the terminating situation, i.e., 
$\alpha$ or $\beta\in q^{\Z_{\le 0}}$.

\subsection{ZZZ type}

We consider the $RLLL$ relation
\begin{align}\label{ZZZ}
R_{456}L^{Z}_{236}L^{Z}_{135}L^{Z}_{124}
=L^{Z}_{124}L^{Z}_{135}L^{Z}_{236}R_{456},
\end{align}
where $L^{Z}_{124}, L^{Z}_{135}, L^{Z}_{236}$ are given by (\ref{LW}) with 
$(r,s,t,w)=(r_1, s_1,t_1,w_1)$, $(r_2, s_2,t_2,w_2)$, $(r_3, s_3,t_3,w_3)$.
In this case, 
$R \in \mathrm{End}(F \otimes F \otimes F)$ and  the sum (\ref{rdef}) 
extends over $a,b,c \in \Z$.
The equality (\ref{ZZZ}) holds in  
$\mathrm{End}(V\otimes V \otimes V \otimes F \otimes F \otimes F)$.

The 18 equations (\ref{qybe}) corresponding to (\ref{ZZZ}) have been listed 
in Appendix \ref{app:eqs}.
As an illustration consider the cases $(a,b,c,i,j,k) =$ (0,0,1,0,0,1), (1,0,0,1,0,0), 
(1,0,0,0,0,1), (1,1,0,0,1,1), (1,0,1,0,1,1) and (1,1,0,1,0,1):
\begin{align}
&R(1\otimes X\otimes X) = (1\otimes X \otimes X)R,
\quad 
R(X \otimes X \otimes 1) = (X \otimes X \otimes 1)R,
\\
&-r_1r_3 R(1\otimes Z \otimes 1) =  (q t_1t_3w_1 X \otimes Z \otimes X -r_2 Z \otimes 1 \otimes Z)R,
\\
&R(-qt_1t_3w_3 X \otimes Z \otimes X + s_2 Z \otimes 1 \otimes Z) = s_1s_3 (1 \otimes Z \otimes 1)R,  
\\
&t_1 R (X \otimes Z \otimes Z^{-1}(r_3 s_3 - t_3^2 w_3 X^2) + s_2 t_3 Z \otimes 1 \otimes X) = s_3 t_2 (Z \otimes X \otimes 1)R, \\
& R (t_3 w_3Z^{-1}(r_1 s_1 - t_1^2 w_1 X^2) \otimes Z \otimes X + s_2 t_1 w_1 X \otimes 1 \otimes Z) = s_1 t_2 w_2 (1 \otimes X \otimes Z)R.
\end{align}
Taking their matrix elements for the transition
$|i\rangle \otimes |j\rangle \otimes |k\rangle \mapsto |a\rangle \otimes |b\rangle \otimes |c\rangle$,
we get the recursion relations for elements of $R$:
\begin{align}
&R^{a,b,c}_{i,j-1,k-1} =R^{a,b+1,c+1}_{i,j,k},
\quad 
R^{a,b,c}_{i-1,j-1,k} =R^{a+1,b+1,c}_{i,j,k},
\label{sm:rec1}\\
&(q^{a+c}r_2 - q^j r_1r_3)R^{a,b,c}_{i,j,k} = q^{1+b}t_1t_3w_1R^{a+1,b,c+1}_{i,j,k},
\label{sm:rec2}\\
&(q^{i+k}s_2-q^bs_1s_3)R^{a,b,c}_{i,j,k}= q^{1+j}t_1t_3w_3R^{a,b,c}_{i-1,j,k-1},
\label{sm:rec3}\\
& q^j r_3 s_3 t_1 R^{a,b,c}_{i-1,j,k} - q^{j+2} t_1 t_3^2 w_3 R^{a,b,c}_{i-1,j,k-2} + q^{i+k} s_2 t_3 R^{a,b,c}_{i,j,k-1} = q^{a+k} s_3 t_2 R^{a,b+1,c}_{i,j,k}, 
\label{sm:rec4}\\
& q^j r_1 s_1 t_3 w_3 R^{a,b,c}_{i,j,k-1} - q^{j+2} t_1^2 t_3w_1  w_3 R^{a,b,c}_{i-2,j,k-1} + q^{i+k} s_2 t_1 w_1 R^{a,b,c}_{i-1,j,k} = q^{c+i} s_1 t_2 w_2 R^{a,b+1,c}_{i,j,k}.
\label{sm:rec5}
\end{align}
Each recursion relation is actually a collection of infinitely many linear equations on infinitely many $R^{a,b,c}_{i,j,k}$'s  
depending on the choice of $(a,b,c,i,j,k) \in \Z^6$.

Given two integers $d$ and $d'$, we write the pair 
$(d\;\mathrm{mod\, 2},d'\;\mathrm{mod\, 2}) \in \Z_2 \times \Z_2$
simply as $(d,d')_{\mathrm{mod \, 2}}$.

\begin{proposition}\label{pr:d12}
(i) Any recursion relation  
consists of only those 
$R^{a,b,c}_{i,j,k}$'s having the same parity pair $(d_1, d_2)_{\mathrm{mod \, 2}}$, where 
 $d_1=a+c-j$ and $d_2=b-i-k$.
(ii) Each subsystem of recursion relations corresponding to a given $(d_1, d_2)_{\mathrm{mod \, 2}}$
allows a solution of dimension at most one.
\end{proposition}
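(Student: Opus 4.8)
The plan is to analyze the recursion relations (\ref{sm:rec1})--(\ref{sm:rec5}) together with the remaining equations in Appendix \ref{app:eqs} in two stages: first establish the parity decomposition (i), then show that within each parity sector the solution space is at most one-dimensional (ii). For part (i), I would examine each of the 18 recursion relations and check, term by term, that the two quantities $d_1 = a+c-j$ and $d_2 = b-i-k$ are preserved modulo $2$ as one moves between the $R^{\bullet}_{\bullet}$ appearing in a single relation. Concretely, each relation shifts the six indices $(a,b,c,i,j,k)$ by a fixed integer vector between any two of its terms; for instance (\ref{sm:rec2}) relates $(a,b,c,i,j,k)$ to $(a+1,b,c+1,i,j,k)$, under which $d_1 \mapsto d_1+2$ and $d_2 \mapsto d_2$, while (\ref{sm:rec4}) involves shifts such as $(i,k)\mapsto(i-1,k)$, $(i,k)\mapsto(i,k-1)$, $(b)\mapsto(b+1)$, $(i,k)\mapsto(i-1,k-2)$, all of which change $d_1$ and $d_2$ by even amounts. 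I expect this to be a finite but somewhat tedious case check over the explicit list in the appendix; the key structural reason it works is the weight-conservation constraint $a+b+c=i+j+k$ and the fact that every $L$-matrix element in (\ref{L3}) carries either no power of $X$ or exactly $X^{\pm 1}$ or $X^{\pm 2}$, so the exponent bookkeeping in (\ref{qybe}) always produces shifts of a controlled parity.

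For part (ii), the idea is that the recursion relations, read in an appropriate direction, let one propagate the value of $R^{a,b,c}_{i,j,k}$ throughout a fixed parity sector starting from a single normalization constant. I would first isolate a ``pivot'' relation that expresses one matrix element in terms of another with strictly smaller value of some linear functional of the indices --- for example, (\ref{sm:rec2}) solves for $R^{a+1,b,c+1}_{i,j,k}$ in terms of $R^{a,b,c}_{i,j,k}$ (using that $q$ is generic so the scalar coefficient $q^{1+b}t_1t_3w_1$ is nonzero), and (\ref{sm:rec3}) similarly lowers $(i,k)$ while (\ref{sm:rec1}) ties together $(b,c)$-shifted and $(j,k)$-shifted elements. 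The plan is to show that by composing such moves one can reach, from any $R^{a,b,c}_{i,j,k}$ in a given sector, a fixed reference element (say with $i=j=k=0$ and $a,b,c$ minimal subject to the parity condition), so that the whole solution in that sector is determined by one scalar; hence $\dim \le 1$. One must be careful that the several chains of recursions are mutually consistent --- but consistency is automatic here, since any actual solution of the $RLLL$ relation gives a genuine element of the solution space, and we are only claiming an upper bound on its dimension, so no compatibility needs to be verified for the bound itself.

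The main obstacle I anticipate is organizing the propagation argument cleanly: the index space is $\Z^3$ (for the source $(i,j,k)$) times $\Z^3$ (for the target $(a,b,c)$), constrained to the codimension-one locus $a+b+c=i+j+k$ and then further split by the two parities, so I need to produce an explicit total (or partial) order on this lattice along which every element is reduced, by one of the available recursions, to elements strictly earlier in the order, terminating at a unique minimal element per sector. Choosing the right monovariant --- some positive-coefficient linear combination of $|i|+|j|+|k|$ and $a+b+c$, or a lexicographic refinement --- and checking that for every lattice point at least one recursion strictly decreases it is the crux. A secondary subtlety is boundary behavior: because of the conventions $1/(q;q)_a = 0$ for $a<0$ and the fact (Remark \ref{re:w}) that $L^Z$ does not involve $X^{-1}$, some recursions degenerate when an index hits a wall, and I would need to confirm that even then enough relations survive to pin down the element. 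I would handle this by noting that the list in Appendix \ref{app:eqs} contains, for each of the 18 weight-conserving $(a,b,c,i,j,k)$, a relation with shifts in a different direction, so that the union of available moves covers all lattice directions and no point is stranded.
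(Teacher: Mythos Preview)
Your overall strategy---direct verification for (i) and a recursion-chain reduction to a single normalization for (ii)---is the same as the paper's, and is sound. The paper carries out (ii) concretely: it first uses (\ref{sm:rec1}) and (\ref{sm:rec3}) to drive $b,c,k$ to $0$, then combines (\ref{sm:rec4}) and (\ref{sm:rec5}) at $b=c=k=0$ to eliminate one term and obtain a recursion lowering $(a,j)$ simultaneously, and finally derives step-$2$ recursions in $i$ and in $j$ separately, landing on $R^{0,0,0}_{p_2,p_1,0}$ with $p_1,p_2\in\{0,1\}$. Your plan leaves the choice of monovariant open; the paper's explicit order shows one that works.

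Two points in your write-up need correcting, as they would mislead you in execution. First, there is \emph{no} conservation law $a+b+c=i+j+k$ on the $F$-indices of $R^{a,b,c}_{i,j,k}$ in the ZZZ case: that constraint in (\ref{qybe}) applies to the $V$-indices of $\mathcal{L}$, not to the Fock-space labels of $R$; indeed the four terminal elements $R^{0,0,0}_{p_2,p_1,0}$ with $p_1+p_2\neq 0$ violate it. So do not build your reduction around a codimension-one lattice. Second, your ``boundary behavior'' worry is misplaced here: in type ZZZ all six indices range over $\Z$ with no walls, the factor $1/(q;q)_a$ does not appear in the recursions themselves, and Remark \ref{re:w} concerns an invariant subspace of $L^Z$, not a boundary condition on $R$. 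There is nothing to check at any edge; the recursions are uniform over $\Z^6$.
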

\begin{proof}
Claim (i) can be checked directly. 
Let us prove Claim (ii). 
First, we reduce $b, c$ and $k$ to $0$ by using (\ref{sm:rec1}) and (\ref{sm:rec3}).
The result reads
\begin{equation}
    R^{a,b,c}_{i,j,k} = q^{(c+i-j)(c-k)} \left(\frac{t_1 t_3 w_3}{s_2}\right)^{-c+k} \frac{1}{(q^{b-i-k}\frac{s_1 s_3}{s_2}; q^2)_{-c+k}} R^{a-b+c,0,0}_{i-k-b+2c,j-b,0}.
    \label{sm:prd12_1}
\end{equation}
Applying this to (\ref{sm:rec4}) and (\ref{sm:rec5}) with $b=c=k=0$ we get 
\begin{align}
&q^j r_3 t_1^2 w_3 R^{a,0,0}_{i-1,j,0} + q^{-j} s_1 (q^{i+1} s_2 - s_1 s_3) R^{a,0,0}_{i+1,j,0} 
= q^a t_1 t_2 w_3 R^{a-1,0,0}_{i-1,j-1,0},
\label{sm:prd12_2}
\\
&q^2 s_3 t_1^2 w_1 R^{a,0,0}_{i-1,j,0} + r_1(q^{i+1}s_2 - s_1 s_3) R^{a,0,0}_{i+1,j,0} 
= q^{i+1} t_1 t_2 w_2 R^{a-1,0,0}_{i-1,j-1,0}.
\label{sm:prd12_3}
\end{align}
Eliminating $R^{a,0,0}_{i+1,j,0}$ here leads to the recursion relation
\begin{equation}
 R^{a,0,0}_{i,j,0} = q^i \frac{t_2 w_2}{s_3 t_1 w_1} 
\frac{1 - q^{a-i+j-2}\frac{r_1 w_3}{s_1 w_2}}{1 - q^{2j-2} \frac{r_1 r_3 w_3}{s_1 s_3 w_1}}
 R^{a-1,0,0}_{i,j-1,0}.
\label{sm:prd12_4}
\end{equation}
We remark that combination of (\ref{sm:prd12_1}) and (\ref{sm:prd12_4}) allows one to express 
$R^{a,b,c}_{i,j,k}$ in terms of $R^{0,0,0}_{i-k-b+2c,j-a-c,0}$ whose indices satisfy 
$i-k-b+2c \equiv d_2$ and $j-a-c \equiv d_1$ mod 2.

Next, consider (\ref{sm:rec2}) and  (\ref{sm:rec4}) again with $a=b=c=k=0$.
Reducing them to the relations among $R^{0,0,0}_{\bullet, \bullet, 0}$ by the above remark, and 
taking a suitable combination, we get 
\begin{align}
 R^{0,0,0}_{i,j,0} 
 &= q^{2+i} \frac{t_2^2w_2} {r_2s_1 s_3} \frac{(1-q^{j-2+i} \frac{r_3 w_2}{s_3 w_1}) (1-q^{j-2-i} \frac{r_1 w_3}{s_1 w_2})}{(1-q^j \frac{r_1 r_3}{r_2}) (1-q^{2j-2} \frac{r_1 r_3 w_3}{s_1 s_3 w_1}) (1-q^{2j-4} \frac{r_1 r_3 w_3}{s_1 s_3 w_1})} R^{0,0,0}_{i,j-2,0}, 
\label{sm:prd12_6}
\\    
R^{0,0,0}_{i,j,0} &= q^{-2i+j+2} \frac{s_3 t_1^2 w_1 w_3}{s_1 s_2 w_2} \frac{1 - q^{i+j-2} \frac{r_3w_2}{s_3w_1}}{(1-q^{-i}\frac{s_1s_3}{s_2})(1-q^{-i+j}\frac{r_1w_3}{s_1w_2})} R^{0,0,0}_{i-2,j,0}.
\label{sm:prd12_5}
\end{align}
Thus we find any $R^{a,b,c}_{i,j,k}$ is uniquely expressed as $R^{0,0,0}_{p_2,p_1,0}$ times 
known factors, where $p_1, p_2 \in \{0,1\}$ are determined by 
$p_1\equiv d_1, p_2 \equiv d_2$ mod 2.
\end{proof}

For $a,b,c,i,j,k \in \Z$ set 
\begin{align}
\begin{split}
R^{a,b,c}_{i,j,k}&= 
\left(\frac{r_2}{t_1t_3w_1}\right)^{\frac{d_1}{2}}
\left(\frac{s_2}{t_1t_3w_3}\right)^{\frac{d_2}{2}}
\left( \frac{t_2}{s_1t_3}\right)^{\frac{d_3}{2}}
\left(\frac{t_2w_2}{s_3t_1w_1}\right)^{\frac{d_4}{2}}
\\
&\quad \times q^{\varphi}
\frac{\Phi_{d_2}\left(\frac{s_1s_3}{s_2}\right)
\Phi_{d_3}\left(\frac{r_3w_2}{s_3w_1}\right)
\Phi_{d_4}\left(\frac{r_1w_3}{s_1w_2}\right)}
{\Phi_{-d_1}\left(\frac{q^2r_1r_3}{r_2}\right)\Phi_{d_3+d_4}\left(\frac{r_1r_3w_3}{s_1s_3w_1}\right)},
\end{split}
\label{rwww}
\\
\varphi &= \frac{1}{4}\bigl((d_1-d_2)(d_1+d_2+d_3+d_4)+d_3d_4\bigr)-d_1,
\label{phi}\\
\begin{pmatrix}
d_1\\ d_2
\end{pmatrix}&=
\begin{pmatrix}
a+c-j \\  b-i-k
\end{pmatrix},\quad
\begin{pmatrix}
d_3\\ d_4
\end{pmatrix}=
\begin{pmatrix} -a-b+c+i+j-k 
\label{ds}\\
a-b-c-i+j+k
\end{pmatrix},
\\
\Phi_m(z)&= \frac{1}{(zq^m;q^2)_\infty}\quad (m \in \Z),
\label{pm}
\end{align}
where $d_1$ and $d_2$ are the same as those in Proposition \ref{pr:d12}.
It is easy to see $\varphi \in \Z+ (d_1-1)d_2/2$.
The dependence on $t_1,t_2,t_3$ is actually 
by the combination $t_1^{-a+i}t_2^{-b+j}t_3^{-c+k}$, 
which corresponds to a similarity transformation.

By Proposition \ref{pr:d12}, we know that the solution $R$ of (\ref{ZZZ}), if exists, 
is unique up to normalization in each sector specified by $(d_1, d_2)_{\mathrm{mod \, 2}}$.
The following result establishes the existence together with an explicit form.

\begin{theorem}\label{th:www}
The 3D $R$ defined by (\ref{rwww})--(\ref{pm})  
satisfies the $RLLL$ relation (\ref{ZZZ}).
\end{theorem}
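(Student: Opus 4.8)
The plan is to verify the $RLLL$ relation (\ref{ZZZ}) directly by reducing it to the system of recursion relations and checking that the explicit candidate (\ref{rwww})--(\ref{pm}) solves them. By Proposition \ref{pr:d12}(ii), each parity sector $(d_1,d_2)_{\mathrm{mod\,2}}$ carries at most a one-dimensional space of solutions; moreover, the proof of that proposition is essentially constructive: it shows that the recursion relations (\ref{sm:rec1})--(\ref{sm:rec5}) force $R^{a,b,c}_{i,j,k}$ to equal $R^{0,0,0}_{p_2,p_1,0}$ times explicitly determined factors, where $p_1\equiv d_1$, $p_2\equiv d_2$ mod $2$. So the first step is to assemble those forced factors — coming from (\ref{sm:prd12_1}), (\ref{sm:prd12_4}), (\ref{sm:prd12_6}), (\ref{sm:prd12_5}) — into a closed-form product and confirm that it matches (\ref{rwww}) after fixing the normalization $R^{0,0,0}_{0,0,0}$, $R^{0,0,0}_{1,0,0}$, $R^{0,0,0}_{0,1,0}$, $R^{0,0,0}_{1,1,0}$ appropriately (these four normalizations are exactly the prefactor powers of the $t$'s and the four ``$d_i/2$'' exponents in (\ref{rwww})). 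This reduces the theorem to: (a) the candidate (\ref{rwww}) is well-defined and consistent, i.e. the various ways of reaching $R^{0,0,0}_{p_2,p_1,0}$ via (\ref{sm:prd12_1})--(\ref{sm:prd12_5}) agree, and (b) it satisfies \emph{all} $18$ equations of Appendix \ref{app:eqs}, not just the six used to pin it down.

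For step (a), I would first check that the exponent $\varphi$ in (\ref{phi}) together with the $\Phi_m$-factors in (\ref{rwww}) really is the solution of the one-step recursions. Concretely: substitute (\ref{rwww}) into each of (\ref{sm:rec1})--(\ref{sm:rec5}) (equivalently, into the shift relations (\ref{sm:prd12_1}), (\ref{sm:prd12_4})) and verify the identities. The shifts change $(d_1,d_2,d_3,d_4)$ by fixed integer vectors — e.g. $b\mapsto b+1$ in the first relation of (\ref{sm:rec1}) sends $d_2\mapsto d_2+1$, $d_3\mapsto d_3-1$, $d_4\mapsto d_4-1$ while $d_1$ is unchanged — and one checks that the ratio of prefactors, $q^{\varphi}$ factors, and $\Phi_m$ factors telescopes correctly. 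The key algebraic facts needed are the elementary identities already flagged after (\ref{qhyp}): $(z;q)_m = 1/(zq^m;q)_{-m}$, $(z;q)_m/(z;q)_n = (zq^n;q)_{m-n}$, and $\Phi_m(z)/\Phi_{m+2}(z) = (zq^m;q^2)_\infty^{-1}/(zq^{m+2};q^2)_\infty^{-1}\cdot(\ldots)$, i.e. $\Phi_m(z) = (1-zq^m)\Phi_{m+2}(z)$. The $q^{\varphi}$ bookkeeping is the most error-prone part; the quadratic form in (\ref{phi}) is designed precisely so that these shifts close, and I would verify it one generator-shift at a time. Consistency (independence of the path from $(a,b,c,i,j,k)$ to $(p_2,p_1,0)$) then follows because the $d_i$'s are themselves functions of $(a,b,c,i,j,k)$ only, so the closed form (\ref{rwww}) is manifestly single-valued once it is checked to solve the recursions.

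For step (b), having a single-valued $R$ that solves the six recursions (\ref{sm:rec1})--(\ref{sm:rec5}), I would then verify it solves the remaining equations of Appendix \ref{app:eqs}. Here the efficient route is to note that the six chosen equations already determine $R$ uniquely up to scalar in each sector (this is the content of Proposition \ref{pr:d12}(ii)); hence any solution of the full $RLLL$ relation must coincide with (\ref{rwww}). Conversely, to show (\ref{rwww}) actually \emph{is} a solution of the full system, it suffices to exhibit that the remaining twelve equations are consequences — for instance by checking that the $\qw_q$-valued operators $\mathcal{L}^Z$ satisfy the parent identity (\ref{rlll}) whenever the component recursions hold, or more pedestrian-ly by substituting (\ref{rwww}) into each remaining equation of Appendix \ref{app:eqs} and reducing via the same $q$-shifted-Pochhammer identities. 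I would organize this by the value of $a+b+c=i+j+k \in \{1,2\}$ (the cases $0$ and $3$ being trivial, and the two values related by the evident $0\leftrightarrow 1$ symmetry of the six-vertex weights), cutting the work roughly in half.

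The main obstacle I anticipate is step (b): confirming that the closed form simultaneously satisfies all eighteen equations, and in particular keeping the $q^{\varphi}$ exponent and the five distinct $\Phi$-arguments $\frac{s_1s_3}{s_2}$, $\frac{r_3w_2}{s_3w_1}$, $\frac{r_1w_3}{s_1w_2}$, $\frac{q^2 r_1 r_3}{r_2}$, $\frac{r_1 r_3 w_3}{s_1 s_3 w_1}$ correctly tracked through each shift. The quadratic exponent $\varphi$ and the particular pairing of $\Phi$-indices with $d_2,d_3,d_4,-d_1,d_3+d_4$ is exactly what makes everything telescope, so the verification is in principle routine but combinatorially heavy; I would lean on the observation that all shifts are generated by a small number of elementary moves, verify closure under each elementary move once, and then invoke multiplicativity. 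A secondary, milder obstacle is making sure the conventions for $(z;q)_m$ with $m<0$ (so that $1/(q;q)_a=0$ for $a<0$) are consistently applied at the boundaries of the recursion — but Remark \ref{re:w} and the stated conventions after (\ref{qhyp}) handle this.
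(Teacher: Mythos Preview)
Your proposal is correct and follows essentially the same route as the paper: verify the eighteen recursion relations directly, using only the telescoping property $\Phi_{m+2}(z)=(1-zq^m)\Phi_m(z)$ (this is (\ref{prec}); note your stated direction $\Phi_m=(1-zq^m)\Phi_{m+2}$ is inverted). The paper streamlines the bookkeeping you flag as the main obstacle by first replacing $\Phi_m(z)$ with the finite product $\tilde\Phi_m(z)$ of (\ref{pt}); since $d_3\equiv d_4\equiv d_1+d_2\pmod 2$ and each recursion relation preserves the parity pair $(d_1,d_2)_{\mathrm{mod}\,2}$ by Proposition~\ref{pr:d12}(i), this substitution only rescales each individual recursion by an overall constant, after which every equation becomes an identity among finitely many rational functions and the check is mechanical. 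Your shift-by-shift telescoping is the same mechanism unpacked, so the difference is purely organizational.
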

\begin{proof}
From Proposition \ref{pr:d12} and $d_3\equiv d_4 \equiv d_1+d_2\mod 2$,
the replacement
\begin{align}\label{pt}
\Phi_m(z) \rightarrow {\tilde \Phi}_m(z) = 
\begin{cases}
(z;q^2)_\infty/(zq^m;q^2)_\infty  = (z;q^2)_{m/2} & (m \in 2\Z),
\\
(zq;q^2)_\infty/(zq^m;q^2)_\infty = (zq;q^2)_{(m-1)/2} & (m \in 2\Z+1)
\end{cases}
\end{align}
changes the individual recursion relations only by an overall scalar.
The results become the relations among 
finitely many rational functions.
To check them is straightforward.
\end{proof}
 
As the above proof indicates, one may just postulate the property 
\begin{align}\label{prec}
\Phi_{m+2}(z) =  (1-zq^m)\Phi_m(z)
\end{align}
instead of specifying $\Phi_m(z)$ concretely as (\ref{pm}).
Another option of such sort is to make the replacement 
\begin{align}
1/\Phi_{-d_1}\left(\frac{q^2r_1r_3}{r_2}\right)
\rightarrow
q^{-\frac{d_1^2}{4}+\frac{d_1}{2}}
\left(-\frac{r_1r_3}{r_2}\right)^{\frac{d_1}{2}}
\Phi_{d_1}\left(\frac{r_2}{r_1r_3}\right),
\end{align}
which makes the formula (\ref{rwww}) 
more symmetric with respect to $d_1$ and $d_2$
at the cost of the appearance of the factor $(-1)^{d_1/2}$.
The $R$ is not locally finite.
From (\ref{Linv}), its inverse is given by 
\begin{align}\label{zzzinv}
R^{-1} = 
(\text{scalar}) 
R\left.\right|_{r_i \leftrightarrow s_i, 
t_i \rightarrow t_iw_i,w_i\rightarrow w^{-1}_i\,(i=1,2,3)}.
\end{align}

The parity condition on $(d_1, d_2)$ 
mixes the indices $i,j,k$ labeling incoming states and $a,b,c$ concerning outgoing ones.
See (\ref{rdef}). 
To illustrate the resulting sectors, we introduce the subspace 
\begin{align}\label{Fp}
\mathcal{F}_{p_1,p_2} = \bigoplus_{i+k \equiv p_1,\, j \equiv p_2 \,
\mathrm{mod}\, 2} 
\C |i \rangle \otimes |j \rangle \otimes | k \rangle \subset F ^{\otimes 3}
\qquad (p_1, p_2 = 0,1).
\end{align}
From the proof of Proposition \ref{pr:d12}, 
the solution space of $R$ is four dimensional whose basis corresponds to 
the ``initial condition" of the recursion relation taken as 
$(R^{0,0,0}_{0,0,0},R^{0,0,0}_{1,0,0},R^{0,0,0}_{0,1,0},R^{0,0,0}_{1,1,0}) =
(1,0,0,0)$, $(0,1,0,0)$, $(0,0,1,0)$, $(0,0,0,1)$.
Call them $R[0,0], R[0,1], R[1,0], R[1,1]$ respectively so that $R[p_1,p_2]$ 
is the base corresponding to the choice $R^{0,0,0}_{p_2,p_1,0} = 1$ 
according to the remark after (\ref{sm:prd12_4}).\footnote{The 
formula (\ref{rwww}) has not been so normalized.}
Then they act on (\ref{Fp}) as in Figure \ref{fig:Rpp}.

\begin{figure}[H]
\begin{picture}(400,110)(-5,-10)

\put(0,80){
\put(61,-21){$R[0,0]$}
\put(0,0){$\mathcal{F}_{0,0}$} \put(22,-2){\begin{LARGE}\rotatebox{90}{$\circlearrowleft$}\end{LARGE}}
\put(51,0){$\mathcal{F}_{0,1} \rightleftarrows \mathcal{F}_{1,0}$}
\put(120,0){\put(0,0){$\mathcal{F}_{1,1}$} \put(22,-2){\begin{LARGE}\rotatebox{90}{$\circlearrowleft$}\end{LARGE}}}
}

\put(220,80){
\put(61,-21){$R[1,1]$}
\put(0,0){$\mathcal{F}_{0,1}$} \put(22,-2){\begin{LARGE}\rotatebox{90}{$\circlearrowleft$}\end{LARGE}}
\put(51,0){$\mathcal{F}_{0,0} \rightleftarrows \mathcal{F}_{1,1}$}
\put(120,0){\put(0,0){$\mathcal{F}_{1,0}$} \put(22,-2){\begin{LARGE}\rotatebox{90}{$\circlearrowleft$}\end{LARGE}}}
}

\put(10,10){
\put(52,-20){$R[1,0]$}
\put(-3,3){\line(-1,0){10}}\put(-13,3){\line(0,1){15}}
\put(-13,18){\line(1,0){82}}
\put(0,0){$\mathcal{F}_{0,0}  \longrightarrow
\mathcal{F}_{1,0}  \longrightarrow
\mathcal{F}_{1,1}  \longrightarrow \mathcal{F}_{0,1}$}
\put(150,18){\vector(-1,0){82}}
\put(140,3){\line(1,0){10}}\put(150,3){\line(0,1){15}}
}

\put(230,10){
\put(52,-20){$R[0,1]$}
\put(-3,3){\line(-1,0){10}}\put(-13,3){\line(0,1){15}}
\put(-13,18){\line(1,0){82}}
\put(0,0){$\mathcal{F}_{0,0}  \longrightarrow
\mathcal{F}_{0,1}  \longrightarrow
\mathcal{F}_{1,1}  \longrightarrow \mathcal{F}_{1,0}$}
\put(150,18){\vector(-1,0){82}}
\put(140,3){\line(1,0){10}}\put(150,3){\line(0,1){15}}
}

\end{picture}
\caption{Action of the four fundamental solutions 
$R[0,0], R[1,0],R[0,1],R[1,1]$ on the subspaces $\mathcal{F}_{p_1, p_2}$ defined in (\ref{Fp}).
For example in $R[1,0]$, the condition $(d_1,d_2)=(a+c-j,b-i-k) \equiv (1,0)$ on 
$|i\rangle \otimes |j\rangle \otimes |k\rangle \mapsto |a\rangle \otimes |b\rangle \otimes |c\rangle$
enforces $R[1,0] \mathcal{F}_{0,0}  \subseteq \mathcal{F}_{1,0}$,
$R[1,0] \mathcal{F}_{1,0}  \subseteq \mathcal{F}_{1,1}$,
$R[1,0] \mathcal{F}_{1,1}  \subseteq \mathcal{F}_{0,1}$ and 
$R[1,0] \mathcal{F}_{0,1}  \subseteq \mathcal{F}_{0,0}$.}
\label{fig:Rpp}
\end{figure}
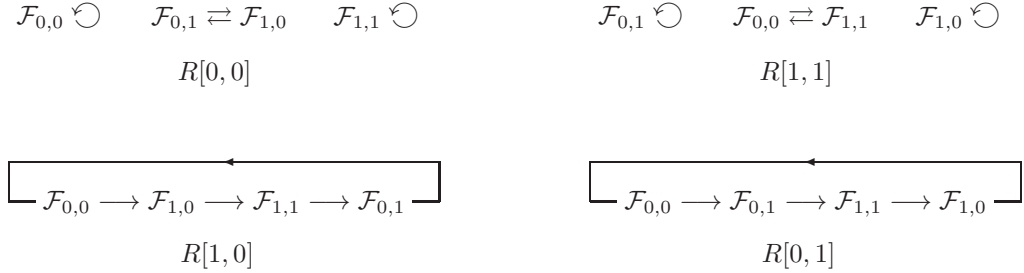

Similar decompositions according to a parity condition also take place in the forthcoming 
Theorems \ref{th:ooz}, \ref{th:zoo}, \ref{th:owo}, 
\ref{th:xxz}, \ref{th:zxx} and \ref{th:xzx}.

\begin{remark}
Let $L^{Z_\pm}$ be the 3D $L$ operator (\ref{LW}) with $\pi_Z$ in (\ref{piz}) replaced by 
\begin{align}\label{xz2}
X|m\rangle = |m\mp 1\rangle,\quad X^{-1}|m\rangle = |m \pm 1\rangle,
\quad  Z|m\rangle = q^{\pm m}|m\rangle, \quad  Z^{-1}|m\rangle = q^{ \mp m}|m\rangle.
\end{align}
Theorem \ref{th:www} is concerned with $L^{Z} = L^{Z_+}$.
Consider a variant of (\ref{ZZZ}) given by 
\begin{align}\label{WWW2}
R(\varepsilon_1, \varepsilon_2, \varepsilon_3)_{456}L^{Z_{\varepsilon_3}}_{236}
L^{Z_{\varepsilon_2}}_{135}L^{Z_{\varepsilon_1}}_{124}
=L^{Z_{\varepsilon_1}}_{124}
L^{Z_{\varepsilon_2}}_{135}L^{Z_{\varepsilon_3}}_{236}
R(\varepsilon_1, \varepsilon_2, \varepsilon_3)_{456}
\qquad (\varepsilon_1, \varepsilon_2, \varepsilon_3 \in \{-1,1\}).
\end{align}
 Then elements of $R(\varepsilon_1, \varepsilon_2, \varepsilon_3)$ is given by 
 \begin{align}\label{rwww2}
 R(\varepsilon_1, \varepsilon_2, \varepsilon_3)^{a,b,c}_{i,j,k} 
 = R^{\varepsilon_1a, \,\varepsilon_ 2 b, \,\varepsilon_3 c}_{\varepsilon_1i, \, \varepsilon_ 2 j, \,\varepsilon_3 k},
 \end{align}
where the RHS is defined by (\ref{rwww})--(\ref{pm}) which corresponds to 
$R(+,+,+)$.
\end{remark}

\subsection{OZZ type}

We consider the $RLLL$ relation
\begin{align}\label{OWW}
R_{456}L^{Z}_{236}L^{Z}_{135}L^{O}_{124}
=L^{O}_{124}L^{Z}_{135}L^{Z}_{236}R_{456},
\end{align}
where $L^{Z}_{135}$ and $L^{Z}_{236}$ are given by (\ref{LW}) with 
$(r,s,t,w)=(r_2, s_2,t_2,w_2)$ and $(r_3, s_3,t_3,w_3)$, respectively.
In this case, 
$R \in \mathrm{End}(F_+\otimes F \otimes F)$ and  the sum (\ref{rdef}) 
extends over $a \in \Z_{\ge 0}$ and $b,c \in \Z$.
The equality (\ref{OWW}) holds in  
$\mathrm{End}(V\otimes V \otimes V \otimes F_+ \otimes F \otimes F)$.

Here are some examples of the $RLLL$ relation (\ref{OWW}):
\begin{align}
&R^{a,b,c}_{i,j-1,k-1}=R^{a,b+1,c+1}_{i,j,k},
\quad
q^{i}R^{a,b,c}_{i,j-1,k}= q^a R^{a,b+1,c}_{i,j,k},
\\
&q^{a+c}r_2R^{a,b,c}_{i,j,k} - q^k\mu t_2R^{a,b,c}_{i,j-1,k}+q^b(1-q^{2a+2})\mu t_3R^{a+1,b,c+1}_{i,j,k}=0,
\\
&q^cr_2R^{a-1,b,c}_{i,j,k}-q^jr_3R^{a,b,c}_{i,j,k}-q^{1+a+b}\mu t_3R^{a,b,c+1}_{i,j,k}=0,
\label{3t}
\\
&q^{i+j}r_3s_3R^{a,b,c}_{i,j,k} +  q^k \mu s_2t_3 R^{a,b,c}_{i+1,j,k-1}
- q^k\mu s_3t_2 R^{a-1,b+1,c}_{i,j,k} - q^{2+i+j}t_3^2w_3R^{a,b,c}_{i,j,k-2} = 0.
\end{align}
The boundary condition 
\begin{align}\label{bc1}
R^{a,b,c}_{i,j,k} = 0 \quad \text{if} \;\min(a,i)<0.
\end{align}
has to be taken into account.
Thus for example when $a=0$,  (\ref{3t}) is to be understood as 
 $q^jr_3R^{0,b,c}_{i,j,k}+q^{1+b}\mu t_3R^{0,b,c+1}_{i,j,k}=0$.

For $a,b,c,i,j,k \in \Z$, set
\begin{align}\label{roww}
\begin{split}
R^{a,b,c}_{i,j,k} &= 
\left(\frac{r_2}{r_3}\right)^a
\left(\frac{s_3}{s_2}\right)^i
\left(\frac{t_2w_2}{\mu s_2}\right)^{-b+j}
\left(-\frac{\mu t_3}{r_3}\right)^{-c+k}\frac{1}{(q^2;q^2)_a}q^{(a - b + j-1) c - (i - b + j-1) k-aj + bi}
\\
& \quad \times \sum_{\beta=0}^i q^{\beta (\beta + 2 j - 2 b - 1)} 
\left(-y\right)^{\beta}
\binom{i}{\beta}_{\!q^2}
\Bigl(xq^{2k-2c-2\beta+2};q^2\Bigr)_a,
\end{split}
\\
x &= \frac{\mu^2 s_2}{r_2w_2}, \qquad y = \frac{r_3w_3}{\mu^2 s_3},
\qquad z = xq^{2k-2c+2}.
\label{xy}
\end{align}
For the convenience of the proof of Theorem \ref{th:oww}, 
we have enlarged the range of the indices $a$ and $i$ from 
$\Z_{\ge 0}$ to $\Z$.
The property (\ref{bc1}) is satisfied thanks to the 
factor $\binom{i}{\beta}_{q^2}/(q^2;q^2)_a$.
The formula (\ref{roww})  is also presented as a terminating $q$-hypergeometric series:
\begin{equation}\label{roww2}
\begin{split}
R^{a,b,c}_{i,j,k} &= \theta(i\ge 0)
\left(\frac{r_2}{r_3}\right)^a
\left(\frac{s_3}{s_2}\right)^i
\left(\frac{t_2w_2}{\mu s_2}\right)^{-b+j}
\left(-\frac{\mu t_3}{r_3}\right)^{-c+k}
\frac{(z;q^2)_a}{(q^2;q^2)_a}q^{(a - b + j-1) c - (i - b + j-1) k-aj + bi}
\\
& \quad \times 
{}_2\phi_1\left({q^{-2i}, z^{-1}q^2 \atop z^{-1}q^{-2a+2}}; q^2, yq^{2i+2j-2a-2b}\right).
\end{split}
\end{equation}

\begin{theorem}\label{th:oww}
The $RLLL$ relation (\ref{OWW}) has a unique solution $R$ up to normalization.
It is given by (\ref{roww})--(\ref{roww2}).
\end{theorem}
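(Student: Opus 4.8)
The plan is to follow the same two-step strategy that worked for Theorem~\ref{th:www}: first establish uniqueness up to normalization by a parity analysis of the recursion system, then verify that the explicit formula~(\ref{roww}) solves all $18$ equations. For uniqueness, I would argue exactly as in Proposition~\ref{pr:d12}. The relations coming from the cases with $i+j=a+b$ (the ``$X$-type'' and ``conservation'' equations) let one shift $b,c,k$ freely, reducing every $R^{a,b,c}_{i,j,k}$ to $R^{a',0,0}_{i',j',0}$ with $a',i',j'$ determined modulo the relevant parities; here, because the first slot carries $L^O$ rather than $L^Z$, the index $a$ is genuinely bounded below by $0$ and the boundary condition~(\ref{bc1}) has to be tracked. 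Then the remaining three-term relations (the analogues of~(\ref{sm:rec2})--(\ref{sm:rec5}), such as the displayed equations~(\ref{3t}) and the one after it) should pin down $R^{a,0,0}_{i,j,0}$ from $R^{a-1,0,0}_{i,j-1,0}$ and from $R^{a,0,0}_{i-1,j,0}$, leaving a one-dimensional solution space in each parity sector — the same bookkeeping as before but with one sector of indices truncated at $0$. Since the excerpt only \emph{states} the theorem, I would expect the paper to organize this as: (a) identify the parity invariant(s); (b) solve the ``easy'' shift relations; (c) reduce the hard relations to a single first-order recursion; (d) conclude at-most-one-dimensionality.

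For existence, the natural route is \emph{not} to re-derive~(\ref{roww}) but to substitute it and check. The cleanest device is the one used in the proof of Theorem~\ref{th:www}: replace the infinite products implicit in the answer by their terminating truncations so that each of the $18$ recursions becomes an identity among finitely many rational functions (or, here, among finite sums). Concretely, I would feed the explicit sum $\sum_{\beta=0}^{i} q^{\beta(\beta+2j-2b-1)}(-y)^\beta \binom{i}{\beta}_{q^2}(xq^{2k-2c-2\beta+2};q^2)_a$ into each recursion, use the $q$-Pascal identity $\binom{i}{\beta}_{q^2} = \binom{i-1}{\beta}_{q^2} + q^{2(i-\beta)}\binom{i-1}{\beta-1}_{q^2}$ (and its dual) to handle shifts in $i$, and use $(xq^{2k-2c-2\beta+2};q^2)_a = (1-xq^{2k-2c-2\beta+2})(xq^{2k-2c-2\beta+4};q^2)_{a-1}$ together with the complementary recursion in $a$ for the $q$-shifted Pochhammer. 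The $X$-type relations~(\ref{sm:rec1})-analogues are immediate from the prefactor's exponent $(a-b+j-1)c-(i-b+j-1)k-aj+bi$ in~(\ref{roww}); the genuinely combinatorial content is in the three-term relations mixing $i$ and $k$ shifts. Alternatively, one can phrase the check via the ${}_2\phi_1$ form~(\ref{roww2}) and invoke contiguous relations for the $q$-hypergeometric series, but the raw sum is probably more transparent for a uniform treatment of all $18$ cases.

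The main obstacle I anticipate is \emph{bookkeeping at the boundary $a=0$ and $i=0$}. Unlike the ZZZ case, where every index runs over all of $\Z$ and the recursions are symmetric, here the $L^O$ slot forces $a\ge 0$, and the reductions in~(\ref{roww2}) only make sense because of the $\theta(i\ge 0)$ and the $1/(q^2;q^2)_a$ factor killing negative $a$. One must check that the three-term relations which, read formally, would relate an ``$a=0$'' element to an ``$a=-1$'' element are consistent with $R^{\bullet,\bullet,\bullet}_{\bullet,\bullet,\bullet}=0$ there — i.e.\ that the degenerate forms noted after~(\ref{bc1}) (such as $q^j r_3 R^{0,b,c}_{i,j,k} + q^{1+b}\mu t_3 R^{0,b,c+1}_{i,j,k}=0$) actually hold for the proposed $R$. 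This should come out automatically from the $(z;q^2)_a$ factor with $z = xq^{2k-2c+2}$, but it is the place where a naive term-by-term manipulation could silently introduce a spurious $1/(q^2;q^2)_{-1}$. A secondary, purely technical nuisance is keeping the four monomial prefactors $(r_2/r_3)^a$, $(s_3/s_2)^i$, $(t_2w_2/\mu s_2)^{-b+j}$, $(-\mu t_3/r_3)^{-c+k}$ straight across equations that shift several indices at once; I would absorb them into the recursion coefficients at the very start so that the remaining identity is manifestly parameter-free.

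Finally, once existence and at-most-one-dimensionality are both in hand, uniqueness ``up to normalization'' follows immediately, and the consistency of~(\ref{roww}) with~(\ref{roww2}) is the $q$-binomial theorem rewrite $\sum_{\beta}(-y)^\beta q^{\beta(\beta-1)}\binom{i}{\beta}_{q^2}(w q^{-2\beta};q^2)_a = $ (prefactor)$\cdot {}_2\phi_1$, which is a standard transformation and can be quoted; I would relegate that equivalence to a one-line remark rather than part of the proof proper.
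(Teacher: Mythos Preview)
Your proposal is correct and follows essentially the paper's own approach: uniqueness by a Proposition~\ref{pr:d12}-style reduction of the recursion system, existence by direct substitution of~(\ref{roww}) (with the prefactor stripped off so that only the finite sum remains), and separate attention to the boundary at $a=0$, $i=0$ via~(\ref{bc1}). The only organizational difference is that the paper checks existence by expanding the sum as $\sum_\beta (-y)^\beta S^a_{i,j,k,\beta}(x)$ and matching coefficients of $y^\beta$, reducing each recursion to a finite polynomial identity in $x$, rather than your direct $q$-Pascal/Pochhammer manipulations; also be aware that in the OZZ case the parity analysis collapses to a \emph{single} sector (cf.\ Table~\ref{tab:kekka}), so your hedged ``parity invariant(s)'' is the right instinct but the answer is that none survive.
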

\begin{proof}
The first claim, i.e., uniqueness, 
can be shown by an argument similar to Proposition \ref{pr:d12}.
To prove the second claim, 
let $S^a_{i,j-b,k-c}(x,y)$ denote the second line of (\ref{roww}).
One sees that 
$S^a_{i,j,k}(x,y) = \sum_{\beta = 0}^i (-y)^\beta S^a_{i,j,k,\beta}(x)$, 
where $S^a_{i,j,k,\beta}(x) = q^{\beta(\beta+2j-1)}\binom{i}{\beta}_{q^2}
(xq^{2k-2\beta+2};q^2)_a$
is a polynomial in $x$ and $y$.
The equation  (\ref{OWW}) is reduced to the recursion relations among 
$S^a_{i,j,k}(x,y)$ with coefficients including $q,q^a,q^i,q^j,q^k,x, y$ only.
By picking the coefficients of $y^\beta$, 
they are reduced to the relations containing finitely many 
$S^a_{i,j,k,\beta}(x)$'s. To check them is straightforward.
This proves the recursion relations for generic $a$ and $i$.
This fact together with (\ref{bc1}) assure that they are also valid 
in the vicinity of $a=0$ and $i=0$.
\end{proof}

As for the last point of the proof, 
a similar and more detailed explanation is available 
in the proof of Theorem \ref{th:ooz}.
The $R$ is not locally finite.

\subsection{ZZO type}

We consider the $RLLL$ relation
\begin{align}\label{WWO}
R_{456}L^{O}_{236}L^{Z}_{135}L^{Z}_{124}
=L^{Z}_{124}L^{Z}_{135}L^{O}_{236}R_{456},
\end{align}
where $L^{Z}_{124}$ and $L^{Z}_{135}$ are given by (\ref{LW}) with 
$(r,s,t,w)= (r_1, s_1,t_1,w_1)$ and $(r_2, s_2,t_2,w_2)$, respectively.
In this case, 
$R \in \mathrm{End}(F\otimes F \otimes F_+)$ and  the sum (\ref{rdef}) 
extends over $a,b\in \Z$ and $c\in \Z_{\ge 0}$.
The equality (\ref{WWO}) holds in  
$\mathrm{End}(V\otimes V \otimes V \otimes F \otimes F \otimes F_+)$.

Here are some examples of the $RLLL$ relation (\ref{WWO}):
\begin{align}
&R^{a,b,c}_{i-1,j-1,k} = R^{a+1,b+1,c}_{i,j,k},\quad 
q^k R^{a,b,c}_{i,j-1,k} = q^c R^{a,b+1,c}_{i,j,k},
\\
&q^{a+c}\mu r_2R^{a,b,c}_{i,j,k} - q^i t_2w_2R^{a,b,c}_{i,j-1,k}
+ q^b(1-q^{2c+2})t_1w_1R^{a+1,b,c+1}_{i,j,k} = 0,
\\
&q^a \mu r_2R^{a,b,c-1}_{i,j,k} -q^j \mu r_1R^{a,b,c}_{i,j,k}
-q^{1+b+c}t_1w_1R^{a+1,b,c}_{i,j,k} = 0,
\\
&q^{j+k}\mu r_1s_1 R^{a,b,c}_{i,j,k} - q^is_1t_2w_2 R^{a,b+1,c-1}_{i,j,k}
-q^{2+j+k} \mu  t_1^2w_1R^{a,b,c}_{i-2,j,k} + q^i s_2t_1w_1R^{a,b,c}_{i-1,j,k+1}=0.
\end{align}
One has the boundary condition analogous to (\ref{bc1}):
\begin{align}\label{bc2}
R^{a,b,c}_{i,j,k} = 0 \quad \text{if} \;\min(c,k)<0.
\end{align}

For $a, b, c, i, j, k \in \Z_{\ge 0}$, set
\begin{align}\label{rwwo}
\begin{split}
R^{a,b,c}_{i,j,k} &= 
\left(\frac{r_2}{r_1}\right)^c
\left(\frac{s_1}{s_2}\right)^k
\left(\frac{\mu t_2}{s_2}\right)^{-b+j}
\left(-\frac{t_1w_1}{\mu r_1}\right)^{-a+i}
\frac{1}{(q^2;q^2)_c}q^{(c - b + j-1) a - (k - b + j-1)i -cj + bk}
\\
& \quad \times \sum_{\beta=0}^k q^{\beta (\beta + 2 j - 2 b - 1)} 
\left(-y\right)^{\beta}
\binom{k}{\beta}_{\!q^2}
\Bigl(xq^{2i-2a-2\beta+2};q^2\Bigr)_c,
\end{split}
\\
x &= \frac{s_2w_2}{\mu^2 r_2},\qquad y = \frac{\mu^2 r_1}{s_1w_1},
\qquad z = x q^{2i-2a+2},
\label{xy2}
\end{align}
where we have redefined $x,y,z$ changing (\ref{xy}).
It is also presented as a terminating $q$-hypergeometric series:
\begin{equation}\label{rwwo2}
\begin{split}
R^{a,b,c}_{i,j,k} &= \theta(k \ge 0)
\left(\frac{r_2}{r_1}\right)^c
\left(\frac{s_1}{s_2}\right)^k
\left(\frac{\mu t_2}{s_2}\right)^{-b+j}
\left(-\frac{t_1w_1}{\mu r_1}\right)^{-a+i}
\frac{(z;q^2)_c}{(q^2;q^2)_c}q^{(c - b + j-1) a - (k - b + j-1)i -cj + bk}
\\
& \quad \times 
{}_2\phi_1\left({q^{-2k}, z^{-1}q^2 \atop z^{-1}q^{-2c+2}}; q^2, yq^{2j+2k-2b-2c}\right).
\end{split}
\end{equation}

\begin{theorem}\label{th:wwo}
The $RLLL$ relation (\ref{WWO}) has a unique solution $R$ up to normalization.
It is given by (\ref{rwwo})--(\ref{rwwo2}).
\end{theorem}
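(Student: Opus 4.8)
The plan is to follow the two‑step strategy already used for Theorem \ref{th:oww}: first show that (\ref{WWO}) admits at most a one‑dimensional space of solutions, and then verify that the explicit $R$ of (\ref{rwwo})--(\ref{rwwo2}) is a genuine solution.

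\textbf{Uniqueness.} I would run the argument of Proposition \ref{pr:d12} adapted to this mixed setting. The $18$ component equations of (\ref{WWO}), of which the displayed ones are representative, are linear recursions for the $R^{a,b,c}_{i,j,k}$. The two relations in the first displayed line reduce, say, the indices $a$ and $b$ to a normal form, and the remaining three‑term relations then propagate a single initial value $R^{0,0,0}_{0,0,0}$ to all of $\mathrm{End}(F\otimes F\otimes F_+)$. Because the $F_+$ factor occupies the third slot and $L^O_{236}$ truncates the recursion at $c=0$ through (\ref{bc2}), the parity bookkeeping that produced four sectors in the ZZZ case collapses to a single sector here, so the solution is unique up to an overall scalar.

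\textbf{Existence.} I would copy the device in the proof of Theorem \ref{th:oww}. Let $S(x,y)$ denote the series on the second line of (\ref{rwwo}), a polynomial in $x$ and $y$; writing $S(x,y)=\sum_{\beta=0}^k(-y)^\beta S_\beta(x)$ with $S_\beta(x)=q^{\beta(\beta+2j-2b-1)}\binom{k}{\beta}_{q^2}(xq^{2i-2a-2\beta+2};q^2)_c$, one substitutes (\ref{rwwo}) into each component equation of (\ref{WWO}) and strips the common prefactor, arriving at identities among such series whose coefficients involve only $q,q^a,q^b,q^c,q^i,q^j,q^k,x$ and $y$. Extracting the coefficient of each power $y^\beta$ reduces these to finitely many polynomial identities in $x$, which are elementary (in essence $q$‑Pascal for $\binom{k}{\beta}_{q^2}$ together with the standard contiguous relations for the terminating ${}_2\phi_1$ in (\ref{rwwo2})). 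This establishes the recursions for generic $c$ and $k$; the factor $\binom{k}{\beta}_{q^2}/(q^2;q^2)_c$ then yields the boundary condition (\ref{bc2}), and, since the coefficients are analytic in $q^c$ and $q^k$, the recursions persist in the vicinity of $c=0$ and $k=0$ — the detailed version of this last point being the one spelled out in the proof of Theorem \ref{th:ooz}. Hence (\ref{rwwo}) solves (\ref{WWO}) on $F\otimes F\otimes F_+$.

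\textbf{Main obstacle.} The only real difficulty is bookkeeping: carrying the prefactor of (\ref{rwwo}) and the shifted arguments $xq^{2i-2a-2\beta+2}$ through the many index shifts in the $18$ equations so that the reduction to polynomial identities in $x$ genuinely closes. A slicker route, suggested by the near‑identical shape of (\ref{roww}) and (\ref{rwwo}), is to deduce Theorem \ref{th:wwo} from Theorem \ref{th:oww} by the left--right reflection of the $RLLL$ relation that exchanges the quantum spaces $1\leftrightarrow 3$ and $4\leftrightarrow 6$ (fixing $2$ and $5$), combined with transposition on the $V$‑factors and the parameter swap $(r_1,s_1,t_1,w_1)\leftrightarrow(r_3,s_3,t_3,w_3)$: this turns (\ref{OWW}) into (\ref{WWO}), maps the matrix elements (\ref{roww}) onto (\ref{rwwo}) with $x,y,z$ redefined as in (\ref{xy2}), and the uniqueness step leaves nothing further to check. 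I would carry out the direct route but record the symmetry as the conceptual explanation.
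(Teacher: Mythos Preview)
Your proposal is correct and follows essentially the same approach as the paper, which simply records that the proof is similar to that of Theorem \ref{th:oww}; your uniqueness and existence steps mirror that argument precisely. The reflection symmetry you mention as a slicker route is a nice extra observation not spelled out in the paper, but as stated it does not quite close: comparing (\ref{xy}) with (\ref{xy2}) and the prefactors of (\ref{roww}) with (\ref{rwwo}) shows that the swap $1\leftrightarrow 3$, $4\leftrightarrow 6$ plus $V$-transposition must be accompanied by an inversion of the type in (\ref{Linv}) (i.e.\ $r\leftrightarrow s$, $t\to tw$, $w\to w^{-1}$, $\mu\to\mu^{-1}$) for the parameters to match.
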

The proof is similar to Theorem \ref{th:oww}.
The $R$ is not locally finite.

\subsection{ZOZ type}\label{ss:wow}
We consider the $RLLL$ relation
\begin{align}\label{WOW}
R_{456}L^{Z}_{236}L^{O}_{135}L^{Z}_{124}
=L^{Z}_{124}L^{O}_{135}L^{Z}_{236}R_{456},
\end{align}
where 
$L^{Z}_{124}$ and $L^{Z}_{236}$ are given by (\ref{LW}) with 
$(r,s,t,w)= (r_1, s_1,t_1,w_1)$ and $(r_3, s_3,t_3,w_3)$, respectively.
In this case, 
$R \in \mathrm{End}(F\otimes F_+ \otimes F)$ and  the sum (\ref{rdef}) 
extends over $a,c\in \Z$ and $b\in \Z_{\ge 0}$.
The equality (\ref{WOW}) holds in  
$\mathrm{End}(V\otimes V \otimes V \otimes F \otimes F_+ \otimes F)$.

Here are some examples of the $RLLL$ relation (\ref{WOW}):
\begin{align}
&q^jR^{a,b,c}_{i,j,k-1} = q^bR^{a,b,c+1}_{i,j,k},\quad 
q^j R^{a,b,c}_{i-1,j,k} = q^b R^{a+1,b,c}_{i,j,k},
\\
&q^{a+c}R^{a,b,c}_{i,j,k} - r_1r_3R^{a,b,c}_{i,j+1,k}
- qt_1t_3w_1R^{a+1,b-1,c+1}_{i,j,k} = 0,
\\
&q^{i+k}R^{a,b,c}_{i,j,k} -s_1s_3R^{a,b-1,c}_{i,j,k}
-qt_1t_3w_3R^{a,b,c}_{i-1,j+1,k-1} = 0,
\\
&q^{a+j+k}r_1 R^{a,b,c}_{i,j,k} - \mu r_1s_1t_3R^{a,b-1,c+1}_{i,j,k}
-  q^{a+c}\mu t_1R^{a+1,b,c}_{i,j,k} + \mu t_1^2 t_3 w_1R^{a+2,b-1,c+1}_{i,j,k} = 0.
\end{align}
The boundary condition is given by 
\begin{align}\label{bc3}
R^{a,b,c}_{i,j,k} = 0 \quad \text{if} \;\min(b,j)<0.
\end{align}

For $a, b, c, i, j, k \in \Z_{\ge 0}$, set
\begin{align}
\begin{split}
R^{a,b,c}_{i,j,k} &= \theta(j \ge 0)
\frac{(s_1s_3)^b}{(r_1r_3)^j}
\left(\frac{r_1}{\mu t_1}\right)^{a-i}
\left(\frac{\mu r_3}{t_3w_3}\right)^{c-k}
\frac{1}{(q^2;q^2)_b}
q^{(j - b) (a + c) + b (a + c - i - k) - (i - a) (k - c)}
\\
&\times 
\sum_{\beta=0}^b q^{\beta(\beta+2i-2a+1)}
\left(-y\right)^\beta
\binom{b}{\beta}_{\! q^2}
\left(q^{2j+2k-2c-2\beta}x^{-1};q^{-2}\right)_\beta
\left(q^{2k-2c-2\beta+2}x^{-1};q^2\right)_{b-\beta},
\end{split}
\label{rwow}\\
x &= \frac{\mu^2 s_1}{r_1w_1},\quad 
\quad  y = \frac{\mu^2r_3}{s_3w_3}.
\label{xy3}
\end{align}
This can also be expressed as a terminating series 
similar to a generalized $q$-hypergeometric ${}_3\phi_2$:
\begin{align}
\begin{split}
R^{a,b,c}_{i,j,k} &= \theta(j \ge 0)
\frac{(s_1s_3)^b}{(r_1r_3)^j}
\left(\frac{r_1}{\mu t_1}\right)^{a-i}
\left(\frac{\mu r_3}{t_3w_3}\right)^{c-k}
\frac{(q^{2-2c+2k}x^{-1},q^2)_b}{(q^2;q^2)_b}
q^{(j - b) (a + c) + b (a + c - i - k) - (i - a) (k - c)}
\\
&\times \sum_{\beta=0}^b
(q^{2i+2j-2a-2b+2}y)^\beta
\frac{(q^{-2b};q^2)_\beta(q^{2c-2k}x;q^2)_\beta(q^{2c-2j-2k}x;q^2)_{2\beta}}
{(q^2;q^2)_\beta (q^{-2b+2c-2k}x;q^2)_{2\beta}(q^{2c-2j-2k}x;q^2)_{\beta}}.
\end{split}
\label{rwow2}
\end{align}
The difference from ${}_3\phi_2$ is the factors $(\bullet;q^2)_{2\beta}$.

\begin{theorem}\label{th:wow}
The $RLLL$ relation (\ref{WOW}) has a unique solution $R$ up to normalization.
It is given by (\ref{rwow})--(\ref{xy3}).
\end{theorem}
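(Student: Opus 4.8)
The plan is to follow the same two-part template already used for Theorems \ref{th:www}, \ref{th:oww} and \ref{th:wwo}: first establish uniqueness by a recursion-reduction argument, then verify that the explicit formula (\ref{rwow})--(\ref{xy3}) actually solves the $RLLL$ relation (\ref{WOW}). For uniqueness, I would mimic the proof of Proposition \ref{pr:d12}: list the eighteen nontrivial scalar equations coming from (\ref{qybe}) for type ZOZ (the four displayed examples are the ones with a single term on each side or with few terms, which are exactly the ones that drive the reduction), use the two ``shift'' relations $q^jR^{a,b,c}_{i,j,k-1}=q^bR^{a,b,c+1}_{i,j,k}$ and $q^jR^{a,b,c}_{i-1,j,k}=q^bR^{a+1,b,c}_{i,j,k}$ to move all dependence onto representatives with $a$ and $c$ minimal (together with the boundary condition (\ref{bc3}) to anchor the recursion on $b,j\ge 0$), and then show the remaining three-term relations determine everything from a one-dimensional space of initial data in each parity/weight sector. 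As in the earlier theorems, the sector is cut out by a conservation law mixing incoming and outgoing indices; I would identify the conserved quantities explicitly (they should be linear combinations of $a,b,c,i,j,k$ read off from which recursions connect which elements) and note that the formula (\ref{rwow}) is supported in a single such sector.

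For existence, I would substitute (\ref{rwow}) into each of the eighteen equations. The clean way, paralleling the proof of Theorem \ref{th:oww}, is to strip off the scalar prefactor and the $q$-power in (\ref{rwow}) and work with the inner sum; write it as $\sum_{\beta=0}^b(-y)^\beta T^{a,b,c}_{i,j,k,\beta}(x)$ where $T^{a,b,c}_{i,j,k,\beta}(x)=q^{\beta(\beta+2i-2a+1)}\binom{b}{\beta}_{q^2}(q^{2j+2k-2c-2\beta}x^{-1};q^{-2})_\beta(q^{2k-2c-2\beta+2}x^{-1};q^2)_{b-\beta}$. After dividing out the common prefactors, each $RLLL$ equation becomes a relation among finitely many shifts of this double-indexed object with coefficients that are monomials in $q,q^a,q^b,q^c,q^i,q^j,q^k,x,y$; extracting the coefficient of each power of $y$ reduces the check to a finite identity among products of $q$-shifted factorials, which is routine. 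The mild subtlety that (\ref{rwow}) has two Pochhammer factors $(\cdots;q^{-2})_\beta$ and $(\cdots;q^2)_{b-\beta}$ rather than a single $(\cdots;q^2)_b$ — equivalently the $(\bullet;q^2)_{2\beta}$ factors visible in the ${}_3\phi_2$-like form (\ref{rwow2}) — means a little care is needed when a recursion shifts $b$ or $c$, but these are handled by the standard splitting identities $(z;q^2)_{m+n}=(z;q^2)_m(zq^{2m};q^2)_n$ and $(z;q^{-2})_\beta=(zq^{-2\beta+2};q^2)_\beta$, which I would use freely (the excerpt already flags such manipulations as ``standard and essential'').

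The main obstacle I expect is the existence check for the two or three three-term (really four-term, cf. the fourth displayed example with its $t_1^2t_3w_1$ term) recursions that genuinely couple the variable $\beta$: unlike the ZZZ case where the $\Phi_m$ ansatz lets one replace $\Phi_m$ by a truncated $\tilde\Phi_m$ and reduce to rational-function identities, here the presence of the summation forces one to prove a genuine contiguous-relation-type identity for the (non-standard) terminating series. Concretely, after picking off the coefficient of $y^\beta$, one gets an identity relating $T$ at $(\beta)$, $(\beta-1)$ and possibly $(\beta$ with shifted $b$ or $c)$, and verifying it amounts to showing that a certain rational function of $q^\beta$ vanishes identically — this is mechanical but is the step where the precise exponents in (\ref{rwow}) (the $q^{\beta(\beta+2i-2a+1)}$ and the mismatched $q^{\pm2}$ bases) must all line up. Once that single family of identities is dispatched, the remaining equations are either the trivial shift relations or reduce to the same computation with fewer terms. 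I would also remark, as after Theorem \ref{th:oww}, that the argument proves the recursions for generic $a,c$ (where the index ranges are unconstrained) and that this together with the boundary condition (\ref{bc3}) forces validity near $b=0$ and $j=0$ as well; the detailed version of this last point is the one deferred to the proof of Theorem \ref{th:ooz}.
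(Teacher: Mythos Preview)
Your proposal is correct and follows essentially the same approach as the paper, which simply says ``The proof is similar to Theorem \ref{th:oww}'': uniqueness via the recursion-reduction argument of Proposition \ref{pr:d12}, existence by stripping off the prefactor, writing the inner sum as $\sum_\beta(-y)^\beta T_\beta(x)$, and checking the coefficient of each $y^\beta$ as a finite identity among $q$-shifted factorials, with the boundary condition (\ref{bc3}) handling the vicinity of $b=0$ and $j=0$. One small overstatement: for type ZOZ there is in fact no nontrivial parity or conservation constraint splitting the solution space (the table records $\sharp(\text{sector})=1$), so your ``single such sector'' is simply all of $\mathrm{End}(F\otimes F_+\otimes F)$.
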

The proof is similar to Theorem \ref{th:oww}.
The $R$ is not locally finite.

\subsection{OOZ type}\label{ss:oow}

We consider the $RLLL$ relation
\begin{equation}\label{OOW}
R_{456} L_{236}^ZL_{135}^O L_{124}^O= L_{124}^OL_{135}^OL_{236}^Z R_{456}, 
\end{equation}
where $L_{124}^O$ and $L_{135}^O$ are given by (\ref{LO}) with $\mu = \mu_1$ 
and $\mu_2$, respectively, and $L_{236}^Z$ is given by (\ref{LW}) with $(r, s, t, w) = (r_3, s_3, t_3, w_3)$. 
In this case, $R \in \mathrm{End}(F_+ \otimes F_+ \otimes F)$ 
and the sum (\ref{rdef}) extends over $a, b \in \mathbb{Z}_{\geq 0}$ and $c \in \mathbb{Z}$. 
The equality (\ref{OOW}) holds in $\mathrm{End}(V \otimes V \otimes V \otimes F_+ \otimes F_+ \otimes F)$. 

Here are some examples of the $RLLL$ relation (\ref{OOW}): 
\begin{align}
& (q^{i+j}-q^{a+b}) R_{i,j,k}^{a,b,c} = 0, \quad q^j R_{i,j,k-1}^{a,b,c} = q^b R_{i,j,k}^{a,b,c+1}, \\
& (\mu_2 q^{b+c} - \mu_1 q^{i+k})R_{i,j,k}^{a,b,c} =  (1-q^{2i}) t_3 w_3R_{i-1,j+1,k-1}^{a,b,c}, \\
& (\mu_2^{-1} q^{j+k}  - \mu_1^{-1} q^{a+c}) R_{i,j,k}^{a,b,c} = (1-q^{2a+2}) t_3 R_{i,j,k}^{a+1,b-1,c+1}, \\
& s_3 R_{i,j,k}^{a,b-1,c} -q^k R_{i+1,j,k}^{a,b,c} + q^{i+1} \mu_1^{-1} t_3 w_3 R_{i,j+1,k-1}^{a,b,c} = 0,
\label{84}\\
&q^k\mu_1\mu_2t_3R^{a,b,c}_{i+1,j,k-1}
+q^i\mu_2r_3s_3R^{a,b,c}_{i,j+1,k}
-q^{b+k}\mu_1s_3R^{a-1,b,c}_{i,j,k}
-q^{2+i}\mu_2t_3^2w_3R^{a,b,c}_{i,j+1,k-2}=0.
\end{align}
As these examples indicate, every recursion relation consists of those 
$R_{i,j,k}^{a,b,c}$ having the same parity of $a-c+j+k$. 

For $a, b, c, i, j, k, d \in \mathbb{Z}$, set
\begin{align}
\begin{split}
R(d)_{i,j,k}^{a,b,c} &=  \theta(e \in \Z)\theta(\min(i,j)\ge 0)\delta_{i+j}^{a+b}\,
s_3^i (\mu_2t_3)^{-a} \left(\frac{\mu_2 s_3}{t_3w_3}\right)^j\left(\frac{t_3^2w_3}{r_3s_3}\right)^e
q^{cj-bk}\frac{(q^{2+2e-2j};q^2)_j(q^{2a+2};q^2)_{i-a}}{(q^2;q^2)_f(q^{2a-2e};q^2)_{e-a}},
\end{split}
\label{roow}\\
e&=\frac{1}{2}(a-c+j+k+d),\quad f = \frac{1}{2}(b+c+i-k-d).
\label{ef1}
\end{align}
For the convenience of the proof of Theorem \ref{th:ooz}, 
we have defined $R(d)_{i,j,k}^{a,b,c}$ enlarging the range of the 
indices $a,b,i,j$ from $\Z_{\ge 0}$ to $\Z$.
We note that 
$R(d)^{a,b,c}_{i,j,k}= q^{bd}R(0)^{a,b,c}_{i,j,k+d}$ and 
$\theta(e \in \Z)\delta_{i+j}^{a+b} = \theta(f \in \Z)\delta_{i+j}^{a+b}$
since $e+f=i+j\in \Z$ holds when $a+b=i+j$.
The combinations $e$ and $f$ can be either positive or negative.

\begin{lemma}\label{le:ooz}
\begin{align}\label{bc4}
R(d)^{a,b,c}_{i,j,k} = 0 \quad \text{if }\;  \min(a,b,i,j) <0.
\end{align}
\end{lemma}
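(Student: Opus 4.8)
The statement asserts that the closed formula (\ref{roow}), although written for all integer indices, is supported on $a,b,i,j\ge 0$. The plan is to read this vanishing off directly from the product formula, since every ingredient of (\ref{roow}) is an explicit $q$-Pochhammer symbol or a monomial in $q$ and the (generic, nonzero) parameters. First I would dispose of the trivial cases: if $i<0$ or $j<0$ the prefactor $\theta(\min(i,j)\ge 0)$ already makes $R(d)^{a,b,c}_{i,j,k}=0$, if $a+b\ne i+j$ the Kronecker delta $\delta_{i+j}^{a+b}$ does, and if $e\notin\Z$ the factor $\theta(e\in\Z)$ does. So it remains to treat $i,j\ge 0$, $a+b=i+j$ and $e\in\Z$ (whence also $f=i+j-e\in\Z$, using $e+f=\tfrac12(a+b+i+j)=i+j$), and to show $R(d)^{a,b,c}_{i,j,k}=0$ whenever $a<0$ or $b<0$.

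For $a<0$: since $i\ge 0>a$ we have $i-a\ge 1$, so $(q^{2a+2};q^2)_{i-a}=\prod_{n=0}^{i-a-1}(1-q^{2a+2+2n})$ is a genuine finite product, and its term with $n=-a-1$ — which lies in $\{0,\dots,i-a-1\}$ because $0\le -a-1$ and $-a-1\le i-a-1$ — equals $1-q^{0}=0$. Thus this numerator factor vanishes. I would then record that no denominator factor of (\ref{roow}) can cancel it: by the convention recalled after (\ref{qfac}), $(q^2;q^2)_f$ is finite and nonzero if $f\ge 0$ and otherwise satisfies $1/(q^2;q^2)_f=0$, while $(q^{2a-2e};q^2)_{e-a}$ is finite and nonzero if $e\ge a$ and otherwise, written as $1/(q^{0};q^2)_{a-e}=1/(1;q^2)_{a-e}$ with $a-e\ge 1$, satisfies $1/(q^{2a-2e};q^2)_{e-a}=0$ (because $(1;q^2)_{a-e}$ contains the factor $1-q^0=0$); in no case is the denominator $0$. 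Hence $R(d)^{a,b,c}_{i,j,k}=0$.

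For $b<0$: now $a=i+j-b\ge i+j+1\ge 1$, and I would split on the sign of $e-a$. If $e\ge a$ then $e\ge i+j+1>i+j$, so $f=i+j-e<0$ and $1/(q^2;q^2)_f=0$. If instead $e<a$ then $e-a<0$, and writing $(q^{2a-2e};q^2)_{e-a}=1/(q^{0};q^2)_{a-e}=1/(1;q^2)_{a-e}$ with $a-e\ge 1$, the product $(1;q^2)_{a-e}$ contains $1-q^0=0$, so $1/(q^{2a-2e};q^2)_{e-a}=0$. In either subcase one factor of (\ref{roow}) is $0$ while the remaining factors are finite, so $R(d)^{a,b,c}_{i,j,k}=0$. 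This exhausts $\min(a,b,i,j)<0$.

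The only delicate point, and the one I would spell out most carefully in the final write-up, is this bookkeeping of $q$-Pochhammer symbols with non-positive subscripts in (\ref{roow}): one must be sure the numerator zero produced in the $a<0$ case is not formally annihilated by a simultaneous pole of the denominator (it is not, since neither denominator Pochhammer can equal $0$), and this is why in the $b<0$ case I argue by the explicit dichotomy $e\ge a$ versus $e<a$ rather than looking for a single uniform vanishing factor.
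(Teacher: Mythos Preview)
Your argument is correct and follows essentially the same route as the paper: dispose of $i<0$ or $j<0$ via the $\theta$-factor, kill $a<0$ by the zero in $(q^{2a+2};q^2)_{i-a}$, and for $b<0$ split on $e\ge a$ versus $e<a$ to exhibit $1/(q^2;q^2)_f=0$ or $1/(q^{2a-2e};q^2)_{e-a}=0$. Your additional bookkeeping that no denominator factor can cancel the numerator zero is a welcome point the paper leaves implicit.
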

\begin{proof}
The assertion is obvious if $\min(i,j)<0$.
Thus we are to show that $\min(a,b) <0$ leads to $R(d)^{a,b,c}_{i,j,k} = 0$ 
assuming that  $\min(i,j) \ge 0$.
Suppose $a<0$. Then 
(\ref{roow}) indeed vanishes due to 
$(q^{2a+2};q^2)_{i-a} = (q^{2a+2};q^2)_\infty/(q^{2i+2};q^2)_\infty =0$.
Suppose $b<0$. We may further concentrate on the non-trivial case $e\ge a$ 
since otherwise $1/(q^{2a-2e};q^2)_{e-a}=0$.
 Then $1/(q^2;q^2)_f=0$ because of $f=  i+j -e =(a-e) +b <0$. 
\end{proof}

When $a,b,i,j\ge 0$, 
$R(d)^{a,b,c}_{i,j,k}$ is divergence-free and $R(d)^{a,b,c}_{i,j,k}=0$ 
unless $e \ge \max(a,j)$ and $f \ge 0$.
From these conditions it follows that
\begin{align}\label{oozcg}
R(d)^{a,b,c}_{i,j,k}=0 \quad \text{unless}\; 
|b-i| \le k-c+d \le b+i.
\end{align}

\begin{theorem}\label{th:ooz}
The $RLLL$ relation (\ref{OOW}) has a non-trivial solution if and only if 
$d:= \log_q\bigl(\frac{\mu_1}{\mu_2}\bigr) \in \Z$.
Up to overall normalization it is given by 
$R^{a,b,c}_{i,j,k} = R(d)^{a,b,c}_{i,j,k}$ specified by (\ref{roow}) and (\ref{ef1}).
\end{theorem}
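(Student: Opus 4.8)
\textbf{Proof proposal for Theorem \ref{th:ooz}.}
The plan is to treat the two directions separately. For the ``only if'' direction, I would note that the parity observation recorded right after equation (84) --- every recursion relation links only those $R^{a,b,c}_{i,j,k}$ with a fixed parity of $a-c+j+k$ --- is a \emph{consistency} statement that requires $\mu_1/\mu_2 \in q^{\Z}$: examining the recursion relations in which the coefficients $\mu_1, \mu_2$ appear explicitly (such as $(\mu_2 q^{b+c}-\mu_1 q^{i+k})R^{a,b,c}_{i,j,k} = (1-q^{2i})t_3w_3 R^{a,b,c}_{i-1,j+1,k-1}$ and its $\mu^{-1}$ partner), one sees that the system forces $\log_q(\mu_1/\mu_2)$ to be an integer, since otherwise iterating two such relations around a loop produces an inconsistency (a nonzero scalar times $R=R$). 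I would make this precise by exhibiting a short cycle of recursions whose composite coefficient is $1$ only when $d \in \Z$. For the ``if'' direction, assuming $d \in \Z$, the task is to verify that the explicit $R(d)$ given in (\ref{roow})--(\ref{ef1}) solves all $18$ recursion relations.

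The key steps for the ``if'' direction would be as follows. First, I would invoke Lemma \ref{le:ooz} together with the vanishing property (\ref{oozcg}), which guarantees that $R(d)$ is locally finite in the relevant directions and that the boundary behaviour near $a=0$, $b=0$, $i=0$, $j=0$ is automatically consistent with the $q$-oscillator boundary conditions ${\bf a}^-|0\rangle = 0$. Second, mirroring the strategy of Theorem \ref{th:oww}, I would substitute $R^{a,b,c}_{i,j,k} = R(d)^{a,b,c}_{i,j,k}$ into each of the $18$ recursion relations; since the formula (\ref{roow}) is built from $q$-shifted factorials of the arguments $q^{2e}$, $q^{2f}$, $q^{2a}$, $q^{2j}$ etc., each recursion relation becomes, after dividing out a common monomial prefactor, an identity among finitely many ratios of $q$-Pochhammer symbols in the variables $q^a, q^b, q^c, q^i, q^j, q^k$ and the parameters $r_3,s_3,t_3,w_3$. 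Using the elementary identities $(z;q^2)_{m+1} = (1-z)(zq^2;q^2)_m$ and $(z;q^2)_m = 1/(zq^{2m};q^2)_{-m}$ recalled in the paragraph after (\ref{qhyp}), each such identity reduces to a polynomial identity which can be checked directly. Third, I would observe that the reduction $R(d)^{a,b,c}_{i,j,k} = q^{bd}R(0)^{a,b,c}_{i,j,k+d}$ noted before the lemma means it suffices in principle to verify the case $d=0$ and then track how the shift $k \mapsto k+d$ and the scalar $q^{bd}$ propagate through the recursions; this trims the bookkeeping.

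As in the proof of Theorem \ref{th:oww}, the point requiring care is that the recursions are first checked for \emph{generic} (complex) values of the indices $a,b,i,j$, where no vanishing occurs and the manipulations with $q$-Pochhammer symbols are unambiguous; one then argues that validity persists ``in the vicinity of'' $a=0$, $b=0$, $i=0$, $j=0$ because both sides of each relation, viewed as functions of $q^a,q^b,q^i,q^j$, are continuous and the boundary conditions (\ref{bc4}) are compatible with the generic identities. I expect this analytic-continuation step, together with the sheer number ($18$) of recursion relations and the two-parameter family indexed by $d$, to be the main obstacle --- not conceptually deep, but the place where errors are easy to make; the excerpt itself defers the ``more detailed explanation'' of exactly this point to the present proof, so I would spell out the continuity argument explicitly here. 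The uniqueness statement (solution unique up to normalization) follows by the same elimination-of-indices argument used in Proposition \ref{pr:d12} and Theorem \ref{th:oww}: the recursion relations allow one to express every $R^{a,b,c}_{i,j,k}$ in terms of a single seed value, so the solution space is at most one-dimensional, and $R(d)$ realizes it.
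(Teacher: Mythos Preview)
Your proposal is correct and tracks the paper's own proof closely: the ``only if'' and uniqueness are handled by a Proposition~\ref{pr:d12}-style reduction, the recursions are verified for generic indices (easy here because (\ref{roow}) is \emph{factorized}, in contrast to the OZZ case), and Lemma~\ref{le:ooz} controls the boundary.

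One refinement is worth making. You frame the boundary step as an analytic-continuation or continuity argument in $q^a,q^b,q^i,q^j$; the paper's mechanism is simpler and purely algebraic. The formula (\ref{roow}) has been deliberately extended to \emph{all} integer values of $a,b,i,j$ (this is the point of the sentence right after (\ref{ef1})). One then checks that each recursion relation holds as an identity over $\Z^6$, which is a straightforward manipulation of $q$-Pochhammer ratios since the expression is factorized. The role of Lemma~\ref{le:ooz} is to ensure that the ``unwanted terms'' --- entries $R^{a,b,c}_{i,j,k}$ with $\min(a,b,i,j)<0$ that appear in the extended recursion but are absent from the actual $RLLL$ relation because ${\bf a}^-|0\rangle=0$ --- are identically zero in the extended formula. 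The paper's illustration is (\ref{84}) at $b=0$: the term $s_3 R^{a,-1,c}_{i,j,k}$ is unwanted, and the lemma gives it as $0$, so the extended identity specializes correctly. No limits are taken; the indices remain integers throughout. The step you flagged as the main obstacle is therefore more mechanical than you anticipated.
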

\begin{proof}
The only if part of the first claim can be shown by an argument similar to Proposition \ref{pr:d12}.
To show the rest, one first checks that 
the formula (\ref{roow}) satisfies the recursion relation when $a,b,c,i,j,k$ are generic, i.e., 
when $\theta(\min(i,j)\ge 0)=1$.
This can be done easily since (\ref{roow}) is factorized.
The remaining task is to verify the boundary condition (\ref{bc4}) 
to assure that the contribution from the ``unwanted terms" to the recursion relation is zero.
This has been guaranteed by Lemma \ref{le:ooz}.
For example in (\ref{84}) at $b=0$, i.e.,
\begin{align}
s_3 R_{i,j,k}^{a,-1,c} -q^k R_{i+1,j,k}^{a,0,c} 
+ q^{i+1} \mu_1^{-1} t_3 w_3 R_{i,j+1,k-1}^{a,0,c} = 0,
\end{align}
the first term is unwanted.
 \end{proof}

From (\ref{roow}) and (\ref{oozcg}), $R$ is locally finite.
From (\ref{Linv}), its inverse is given by 
\begin{align}\label{oozinv}
R^{-1} = 
R\left.\right|_{\mu_i \rightarrow \mu^{-1}_i \,(i=1,2), \,r_3 \leftrightarrow s_3, 
t_3 \rightarrow t_3w_3,w_3\rightarrow w^{-1}_3},
\end{align}
where the normalization has been  
deduced from $R^{a,b,c}_{0,0,0}(d)=\delta^a_0\delta^b_0\delta^c_d$
and $R^{a,b,c}_{0,0,d}(-d) = \delta^a_0\delta^b_0\delta^c_0$.

\subsection{ZOO type}\label{ss:wwo}

We consider the $RLLL$ relation
\begin{equation}\label{WOO}
R_{456} L_{236}^O L_{135}^O L_{124}^Z 
= L_{124}^Z L_{135}^O L_{236}^OR_{456}, 
\end{equation}
where $L_{135}^O$ and $L_{236}^O$ 
are given by (\ref{LO}) with $\mu = \mu_2$ and $\mu_3$, respectively, 
and $L_{124}^Z$ is given by (\ref{LW}) with $(r, s, t, w) = (r_1, s_1, t_1, w_1)$.
In this case, $R \in \mathrm{End}(F \otimes F_+ \otimes F_+)$ and the sum (\ref{rdef}) 
extends over $a \in \mathbb{Z}$ and $b, c \in \mathbb{Z}_{\geq 0}$. 
The equality (\ref{WOO}) holds in $\mathrm{End}(V \otimes V \otimes V \otimes F \otimes F_+ \otimes F_+)$. 

Here are some examples of the $RLLL$ relation (\ref{WOO}): 
\begin{align}
& (q^{j+k}-q^{b+c}) R_{i,j,k}^{a,b,c} = 0, \quad q^j R_{i-1,j,k}^{a,b,c} = q^b R_{i,j,k}^{a+1,b,c}, \\
& (\mu_2^{-1} q^{a+b} - \mu_3^{-1} q^{i+k})R_{i,j,k}^{a,b,c} =(1-q^{2k})  t_1 R_{i-1,j+1,k-1}^{a,b,c}, \\
& (\mu_2 q^{i+j} - \mu_1q^{a+c}) R_{i,j,k}^{a,b,c} =  (1-q^{2c+2}) t_1 w_1 R_{i,j,k}^{a+1,b-1,c+1}, \\
& s_1 R_{i,j,k}^{a,b-1,c} -q^i R_{i,j,k+1}^{a,b,c} + q^{k+1} \mu_3 t_1 R_{i-1,j+1,k}^{a,b,c} = 0,
\\
&q^{b+i}\mu_2s_1R^{a,b,c-1}_{i,j,k} 
+ q^{2+k}\mu_3t_1^2w_1R^{a,b,c}_{i-2,j+1,k}
-q^it_1w_1R^{a,b,c}_{i-1,j,k+1}-q^k\mu_3r_1s_1R^{a,b,c}_{i,j+1,k}=0.
\end{align}
As these examples indicate, every recursion relation consists of those 
$R_{i,j,k}^{a,b,c}$ having the same parity of $-a+c+i+j$. 
The boundary condition is given by
\begin{align}\label{bc5}
R^{a,b,c}_{i,j,k} = 0 \quad \text{if }\;  \min(b,c,j,k) <0.
\end{align}

For $a,b,c,i,j,k,d \in\Z$, set
\begin{align}
\begin{split}
R(d)_{i,j,k}^{a,b,c} &=
\theta(e \in \Z)\theta(\min(j,k)\ge 0)\delta_{j+k}^{b+c} \,s_1^k 
\left(\frac{\mu_2}{t_1w_1}\right)^{c} \left(\frac{s_1}{\mu_2t_1}\right)^j
\left(\frac{t_1^2w_1}{r_1s_1}\right)^e q^{aj-bi}
\frac{(q^{2+2e-2j};q^2)_j(q^{2+2c};q^2)_{k-c}}{(q^2;q^2)_f(q^{2c-2e};q^2)_{e-c}},
\end{split}
\label{rwoo}\\
e &= \frac{1}{2}(-a+c+i+j-d),\quad f = \frac{1}{2}(a+b-i+k+d).
\label{ef2}
\end{align}
We note that 
$R(d)^{a,b,c}_{i,j,k} = q^{-bd}R(0)^{a,b,c}_{i-d,j,k}$ and 
$\theta(e \in \Z)\delta_{j+k}^{b+c} = \theta(f \in \Z)\delta_{j+k}^{b+c}$
since $e+f=j+k\in \Z$ when $b+c=j+k$.
The combinations $e$ and $f$ can be either positive or negative.
From $b,c,j,k \ge 0$ and the definition 
(\ref{qfac}), $R^{abc}_{ijk}(d)$ is divergence-free and $R(d)^{a,b,c}_{i,j,k}=0$ 
unless $e \ge \max(c,j)$ and $f \ge 0$.
From these conditions it follows that
\begin{align}\label{zoocg}
R(d)^{a,b,c}_{i,j,k}=0 \quad \text{unless}\; 
|b-k| \le i-a-d \le b+k.
\end{align}

\begin{theorem}\label{th:zoo}
The $RLLL$ relation (\ref{WOO}) has a non-trivial solution if and only if 
$d:= \log_q\bigl(\frac{\mu_3}{\mu_2}\bigr) \in \Z$.
Up to overall normalization it is given by 
$R^{a,b,c}_{i,j,k} = R(d)^{a,b,c}_{i,j,k}$ specified by (\ref{rwoo}) and (\ref{ef2}).
\end{theorem}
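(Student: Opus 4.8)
The plan is to mirror, essentially verbatim, the strategy used for Theorem~\ref{th:ooz}, since the $RLLL$ relation (\ref{WOO}) for type ZOO is the ``reflection'' of (\ref{OOW}) for type OOZ. Concretely, one first checks the \emph{only if} direction of the first claim: by an argument parallel to Proposition~\ref{pr:d12}, examining the subsystem of recursion relations coming from (\ref{qybe}) with $L=L^O, L^O, L^Z$ and comparing the two relations analogous to $(\mu_2 q^{i+j}-\mu_1 q^{a+c})R^{a,b,c}_{i,j,k}=\dots$ and $(\mu_2^{-1}q^{a+b}-\mu_3^{-1}q^{i+k})R^{a,b,c}_{i,j,k}=\dots$, one sees that consistency of the overall $q$-power forces $\mu_3/\mu_2\in q^{\Z}$, i.e.\ $d\in\Z$. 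The same argument shows the solution space is at most one-dimensional in the relevant parity sector determined by $-a+c+i+j\bmod 2$.

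For the \emph{if} direction, I would verify that (\ref{rwoo})--(\ref{ef2}) satisfies every recursion relation in two stages, exactly as in the proof of Theorem~\ref{th:ooz}. First, assume the indices are generic, meaning $\theta(\min(j,k)\ge 0)=1$ and $\theta(e\in\Z)=1$, so all the $\theta$'s and $\delta$'s are inert; since $R(d)^{a,b,c}_{i,j,k}$ is then a fully factorized product of $q$-shifted factorials, each recursion relation reduces, after dividing by a common monomial, to a polynomial identity in $q^a,q^b,\dots,q^k$ (and the parameters) that follows from the elementary rule $(z;q^2)_{m+1}=(1-zq^{2m})(z;q^2)_m$ together with the defining relations of $\mathcal W_q$ and $\mathcal O_q$; these are routine to check. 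Second, I would handle the boundary: the identity $e+f=j+k\in\Z$ when $b+c=j+k$ (already noted after (\ref{ef2})), and the boundary vanishing (\ref{bc5}), which I would prove by a short lemma analogous to Lemma~\ref{le:ooz} — if $c<0$ then $(q^{2+2c};q^2)_{k-c}=0$, and if $b<0$ one restricts to $e\ge c$ (else $1/(q^{2c-2e};q^2)_{e-c}=0$) and then $f=j+k-e=(c-e)+b<0$ forces $1/(q^2;q^2)_f=0$. Together with the local finiteness from (\ref{zoocg}), this guarantees that all ``unwanted terms'' in which an index of $R$ has been pushed below zero contribute nothing, so the generic identities remain valid in the vicinity of $b=0$, $c=0$, $j=0$, $k=0$.

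Finally one records the consequences: $R$ is locally finite by (\ref{zoocg}), and its inverse is obtained from (\ref{Linv}) by the substitution $\mu_i\to\mu_i^{-1}$ ($i=2,3$), $r_1\leftrightarrow s_1$, $t_1\to t_1w_1$, $w_1\to w_1^{-1}$, with normalization fixed by the boundary values $R(d)^{a,b,c}_{0,0,0}$ and $R(-d)^{a,b,c}_{0,0,0}$, exactly as in (\ref{oozinv}).

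I expect the only genuinely delicate point to be the bookkeeping at the boundary: one must be careful that, for each of the eighteen recursion relations, the terms where $b,c,j,$ or $k$ is decremented to $-1$ are exactly the ones killed by (\ref{bc5}), and that no relation secretly needs a value of $R(d)$ at a \emph{positive} index that the generic identity does not cover. This is precisely the subtlety flagged in the last paragraph of the proof of Theorem~\ref{th:ooz} (and the worked example at $b=0$ there), so the safest route is to transcribe that verification, swapping the roles of the three tensor legs and replacing the parameter triple accordingly; everything else is a mechanical $q$-shifted-factorial computation.
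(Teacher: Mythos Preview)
Your proposal is correct and matches the paper's own approach: the paper simply states that the proof is similar to Theorem~\ref{th:ooz}, and what you have written is precisely a careful unpacking of that similarity, including the analogue of Lemma~\ref{le:ooz} for the boundary condition~(\ref{bc5}). There is no substantive difference in strategy.
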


The proof is similar to Theorem \ref{th:ooz}.
From (\ref{rwoo}) and (\ref{zoocg}), $R$ is locally finite.
From (\ref{Linv}), its inverse is given by 
\begin{align}\label{zooinv}
R^{-1} = 
R\left.\right|_{\mu_i \rightarrow \mu^{-1}_i \,(i=2,3), \,r_1 \leftrightarrow s_1, 
t_1 \rightarrow t_1w_1,w_1\rightarrow w^{-1}_1},
\end{align}
where the normalization has been  
deduced from $R^{a,b,c}_{0,0,0}(d)=\delta^a_{-d}\delta^b_0\delta^c_0$
and $R^{a,b,c}_{-d,0,0}(-d) = \delta^a_0\delta^b_0\delta^c_0$.

\subsection{OZO type}\label{ss:owo}

We consider the $RLLL$ relation
\begin{align}\label{OWO}
R_{456}L^{O}_{236}L^{Z}_{135}L^{O}_{124}
=L^{O}_{124}L^{Z}_{135}L^{O}_{236}R_{456},
\end{align}
where $L^{Z}_{135}$ is given by (\ref{LW}) with 
$(r,s,t,w)= (r_2, s_2,t_2,w_2)$, and 
$L^{O}_{124}$ and $L^{O}_{236}$ are given by (\ref{LO}) 
with $\mu=\mu_1$ and $\mu_3$, respectively.
In this case, 
$R \in \mathrm{End}(F_+\otimes F \otimes F_+)$ and  the sum (\ref{rdef}) 
extends over $a,c\in \Z_{\ge 0}$ and $b\in \Z$.
The equality (\ref{OWO}) holds in  
$\mathrm{End}(V\otimes V \otimes V \otimes F_+ \otimes F \otimes F_+)$.

Here are some examples of the $RLLL$ relation (\ref{OWO}):
\begin{align}
&q^kR^{a,b,c}_{i,j-1,k}=q^cR^{a,b+1,c}_{i,j,k},
\quad
q^iR^{a,b,c}_{i,j-1,k}= q^a R^{a,b+1,c}_{i,j,k},
\\
&\mu_3 r_2R^{a-1,b,c-1}_{i,j,k} = (q^{1+a+b+c}\mu_1+q^j\mu_3)R^{a,b,c}_{i,j,k},
\\
&\mu_1 s_2R^{a,b,c}_{i+1,j,k+1} = (q^{1+i+j+k}\mu_3+q^b\mu_1)R^{a,b,c}_{i,j,k},
\\
&q^a \mu_3r_2 R^{a,b,c-1}_{i,j,k} 
+q^{b+c}(1-q^{2a+2})\mu_1R^{a+1,b,c}_{i,j,k} 
- t_2\mu_1\mu_3R^{a,b,c}_{i,j-1,k+1}=0,
\\
\begin{split}
&q^{i+j}(1-q^{2k})\mu_3r_2R^{a,b,c}_{i,j,k-1}
-q^{2+k}\mu_1t_2^2w_2R^{a,b,c}_{i+1,j-2,k}
\\
&+q^k\mu_1r_2s_2R^{a,b,c}_{i+1,j,k}-q^j(1-q^{2+2c})\mu_1\mu_3t_2R^{a,b+1,c+1}_{i,j,k}=0.
\end{split}
\end{align}
As these examples indicate, every recursion relation consists of those 
$R^{a,b,c}_{i,j,k}$ having the same parity of $a+b+c-j$.
\begin{align}\label{bc6}
R^{a,b,c}_{i,j,k} = 0 \quad \text{if }\;  \min(a,c,i,k) <0.
\end{align}

For $a,b,c,i,j,k,d \in  \Z$, set 
\begin{align}
R(d)^{a,b,c}_{i,j,k} & = \theta(e\in \mathbb{Z})
\theta(\min(i,k) \ge 0)\delta^{a-c}_{i-k} r_2^c (\mu_3t_2)^{-k} \left(\frac{\mu_3r_2}{t_2w_2}\right)^i \left(\frac{t_2^2w_2}{r_2s_2}\right)^e  q^{bk -cj}
\frac{(q^{2+2e-2k};q^2)_{k}}{(q^2;q^2)_f (q^{2i-2e};q^2)_{e-i}},
\label{rowo}\\
e &= \frac{1}{2}(i+j+k-b-d-1),\quad 
f = \frac{1}{2}(a+b+c-j+d+1).
\label{ef3}
\end{align}
We note that 
$R(d)^{a,b,c}_{i,j,k} =q^{-dk}R(0)^{a,b+d,j}_{i,j,k}$ and 
$\theta(e\in \Z)\delta^{a-c}_{i-k} = \theta(f\in \Z)\delta^{a-c}_{i-k}$
since $e+f = c+i$ when $a-c=i-k$.
The combinations $e$ and $f$ can be either positive or negative.
From $a,c,i,k \ge 0$ and the definition 
(\ref{qfac}), $R(d)^{a,b,c}_{i,j,k}$ is divergence-free and $R(d)^{a,b,c}_{i,j,k}=0$ 
unless $e \ge \max(i,k)$ and $f \ge 0$.
From these conditions it follows that
\begin{align}\label{ozocg}
R(d)^{a,b,c}_{i,j,k} = 0 \quad \text{unless}\; |a-c| \le j-b-d-1 \le a+c.
\end{align}

\begin{theorem}\label{th:owo}
The $RLLL$ relation (\ref{OWO}) has a non-trivial solution if and only if 
$d:= \log_q\bigl(-\frac{\mu_1}{\mu_3}\bigr) \in \Z$.
Up to overall normalization it is given by 
$R^{a,b,c}_{i,j,k} = R(d)^{a,b,c}_{i,j,k}$
specified by (\ref{rowo}) and (\ref{ef3}).
\end{theorem}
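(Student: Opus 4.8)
The proof follows the pattern of Theorems \ref{th:ooz} and \ref{th:zoo}. First I would expand the $RLLL$ relation (\ref{OWO}) via (\ref{qybe}) and the parametrization (\ref{rdef}) into the system of linear recursion relations among the $R^{a,b,c}_{i,j,k}$, of which the examples displayed just above the statement are representative; as noted there, each relation couples only entries with a common parity of $a+b+c-j$. The uniqueness up to normalization within each parity sector, and the ``only if'' half of the first claim, are then obtained as in Proposition \ref{pr:d12} and Theorem \ref{th:ooz}: the two relations $q^kR^{a,b,c}_{i,j-1,k}=q^cR^{a,b+1,c}_{i,j,k}$ and $q^iR^{a,b,c}_{i,j-1,k}=q^aR^{a,b+1,c}_{i,j,k}$ force $a-c=i-k$ on the support and allow $b$ to be fixed; the relations lowering $(a,c)$ and raising $(i,k)$ then reduce the solution to a single initial datum; and reading the relation $\mu_3r_2R^{a-1,b,c-1}_{i,j,k}=(q^{1+a+b+c}\mu_1+q^j\mu_3)R^{a,b,c}_{i,j,k}$ at the boundary $a=0$ of $F_+$ in the first factor shows that a nonzero initial datum is admissible only if $-\mu_1/\mu_3$ is an integer power of $q$, i.e.\ $d=\log_q(-\mu_1/\mu_3)\in\Z$.

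For existence I would substitute the closed form (\ref{rowo})--(\ref{ef3}) into every recursion relation. Since $R(d)^{a,b,c}_{i,j,k}$ is a monomial in the parameters times the ratio $(q^{2+2e-2k};q^2)_k/\bigl((q^2;q^2)_f(q^{2i-2e};q^2)_{e-i}\bigr)$, each recursion collapses --- after cancelling the common prefactor --- to an elementary identity among finitely many $q$-shifted Pochhammer symbols, verified directly with the rules $(z;q^2)_{m+1}=(1-zq^{2m})(z;q^2)_m$ and $(z;q^2)_m=1/(zq^{2m};q^2)_{-m}$ recorded after (\ref{qhyp}); one may, if preferred, use the noted reduction of $R(d)$ to an index shift of $R(0)$ to restrict to $d=0$. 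This settles the recursions for generic $a,b,c,i,j,k$, i.e.\ when the factor $\theta(\min(i,k)\ge0)$ in (\ref{rowo}) is active and no index has been pushed out of range.

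It remains to control the boundaries of $F_+\otimes F\otimes F_+$ in the first and third slots, where the recursions acquire ``unwanted'' terms carrying a negative index. For this I would prove the analogue of Lemma \ref{le:ooz}, namely the boundary condition (\ref{bc6}): $R(d)^{a,b,c}_{i,j,k}=0$ whenever $\min(a,c,i,k)<0$. The cases $i<0$ and $k<0$ are immediate from $\theta(\min(i,k)\ge0)$. For $c<0$ with $i,k\ge0$, the constraint $\delta^{a-c}_{i-k}$ gives $e+f=c+i<i$, which is incompatible with simultaneously $e\ge i$ and $f\ge0$; hence either $f<0$, killing the denominator factor $(q^2;q^2)_f$, or $e<i$, in which case $(q^{2i-2e};q^2)_{e-i}=1/(1;q^2)_{i-e}$ is infinite and the entry vanishes. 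The case $a<0$ is treated the same way via $a=c+i-k$: if $f\ge0$ then $e+f=a+k$ forces $e\le k-1$, and then $(q^{2+2e-2k};q^2)_k=0$ when $0\le e\le k-1$ while $(q^{2i-2e};q^2)_{e-i}$ is infinite when $e<0$; and if $f<0$ the factor $(q^2;q^2)_f$ again degenerates. Granting this lemma, every unwanted contribution in every recursion relation vanishes, so the relations hold unconditionally, $R(d)$ genuinely solves (\ref{OWO}), and together with the uniqueness above this proves the theorem.

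I expect the bulk of the work --- and the only real subtlety --- to be this boundary bookkeeping: one must check, in each of the sub-cases $a<0$, $c<0$, $i<0$, $k<0$, that precisely the right $q$-Pochhammer factor in (\ref{rowo}) degenerates, and that this vanishing covers all the boundary specializations of the reduced recursion relations at once. This is exactly the case analysis underlying Lemma \ref{le:ooz} and the support statement (\ref{ozocg}), so no ingredient beyond those already used for types OOZ and ZOO should be needed.
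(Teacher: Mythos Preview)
Your proposal is correct and follows exactly the approach the paper itself adopts: the paper's proof of Theorem \ref{th:owo} is the single sentence ``The proof is similar to Theorem \ref{th:ooz}'', and your outline supplies precisely the OZO-analogue of that argument --- uniqueness and the necessity of $d\in\Z$ via a Proposition \ref{pr:d12}-style reduction, verification of the factorized formula in the recursions for generic indices, and a Lemma \ref{le:ooz}-type boundary check establishing (\ref{bc6}). Your boundary case analysis (the split into $f<0$, $0\le e\le k-1$, $e<0$ for the $a<0$ case, and the analogous split for $c<0$) is in fact more explicit than anything the paper writes out for this type.
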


The proof is similar to Theorem \ref{th:ooz}.
$R$ is not locally finite.
 
\subsection{OOO type}\label{ss:ooo}
We consider the $RLLL$ relation
\begin{align}\label{OOO}
R_{456}L^{O}_{236}L^{O}_{135}L^{O}_{124}
=L^{O}_{124}L^{O}_{135}L^{O}_{236}R_{456},
\end{align}
where $L^{O}_{124}, L^{O}_{135}, L^{O}_{236}$ are 
given by (\ref{LO}) with $\mu=\mu_1, \mu_2, \mu_3$.
In this case, 
$R \in \mathrm{End}(F_+\otimes F_+ \otimes F_+)$ and  the sum (\ref{rdef}) 
extends over $a,b,c\in \Z_{\ge 0}$.
The equality (\ref{OOO}) holds in  
$\mathrm{End}(V\otimes V \otimes V \otimes F_+ \otimes F_+ \otimes F_+)$.
The problem of finding the solution to (\ref{OOO}) was studied in \cite{BS06,BMS08}.
The result has been shown \cite[eq.(2.29)]{KO12} to coincide with the 
intertwiner of the quantized coordinate ring $A_q(sl_3)$ that had been obtained earlier in \cite{KV94}.
See also the explanation in Section \ref{ss:OOOint}.
For $\mu_i$'s general, the following formula is valid (cf. \cite[eq.(3.85)]{Ku22}):
 \begin{align}\label{hp2}
 R^{a,b,c}_{i,j,k} &= \delta^{a+b}_{i+j}\delta^{b+c}_{j+k}
 \left(\frac{\mu_3}{\mu_2}\right)^i 
  \left(-\frac{\mu_1}{\mu_3}\right)^b
   \left(\frac{\mu_2}{\mu_1}\right)^k
\! q^{ik + b(k-i+1)}\binom{a+b}{a}_{q^2}
{}_2\phi_1\left({q^{-2b},q^{-2i} \atop q^{-2a-2b}}; q^2,q^{-2c}\right).
\end{align}
$R$ is obviously locally finite.
From (\ref{Linv}) and \cite[eq.(3.60)]{Ku22}, its inverse is given by
\begin{align}\label{oooinv}
R^{-1} = 
R\left.\right|_{\mu_i \rightarrow \mu^{-1}_i\,(i=1,2,3)}.
\end{align}

\begin{remark}\label{re:cro}
Let $\mathscr{R}^{a,b,c}_{i,j,k} = R^{a,b,c}_{i,j,k}\left.\right|_{\mu_i=1(i=1,2,3)}$ be the parameter-free
3D $R$ of type OOO.
It satisfies the tetrahedron equation (\ref{rrrr}). See Section \ref{ss:OOOint}.
It is known (\cite[Prop.24]{KO12}, \cite[eq.(3.63)]{Ku22})  that 
$\mathscr{R}^{a,b,c}_{i,j,k}$ is a {\em polynomial} in $q$ with integer coefficients satisfying
\begin{align}\label{cro}
\mathscr{R}^{a,b,c}_{i,j,k} = \frac{(q^2;q^2)_i(q^2;q^2)_j(q^2;q^2)_k}{(q^2;q^2)_a(q^2;q^2)_b(q^2;q^2)_c}
\mathscr{R}_{a,b,c}^{i,j,k}. 
\end{align}
When $q$ is a primitive root of unity of odd degree $N\ge 3$, it follows that $\mathscr{R}^{a,b,c}_{i,j,k}=0$ 
if $\max(i,j,k)\ge N$ and $\max(a,b,c)<N$.
It implies that the subspace $\bigoplus_{i,j,k \ge 0, \max(i,j,k) \ge N} \C |i\rangle \otimes 
 |j\rangle \otimes  |k\rangle  \subset F_+ \otimes F_+ \otimes F_+$ is invariant
 under $\mathscr{R}$.
 This fact was originally shown in \cite[Th.2.2.1b]{KS93} for each tensor component by 
 resorting to the recursion relations,  where an important consequence on the quotient was also pointed out in 
 Proposition 2.3.2 therein.
 The above proof based on (\ref{cro}) is an illuminating simplification 
 and has a natural generalization to the quantized coordinate rings 
 of other types \cite[eqs. (5.75), (8.39)]{Ku22}.
\end{remark}

\section{Solutions of $RLLL$ relation for $L = L^Z$ and $L^X$}\label{sec:ZX}

In this section we deal with the $RLLL$ relations which contain 
$L^Z$ (\ref{LW}) and $L^X$ (\ref{LX}).
As mentioned after (\ref{L3}), 
the parameters $r, s, t, w$ are assumed to be generic, hence  
the boundary conditions like 
(\ref{bc1}), (\ref{bc2}), (\ref{bc3}), (\ref{bc4}), (\ref{bc5}) and (\ref{bc6})  need not be considered.
We shall only treat the types XXZ, ZXX and XZX,  
and leave ZZX, XZZ, ZXZ and XXX cases for future study as they are considerably more complicated. 
Throughout the section, $R \in \mathrm{End}(F \otimes F \otimes F)$ with 
the sum (\ref{rdef}) 
extending over $a, b, c \in \Z$, and the 
$RLLL$ relation holds in $\mathrm{End}(V \otimes V \otimes V \otimes F \otimes F \otimes F)$.

\subsection{XXZ type}

We consider the $RLLL$ relation
\begin{align}\label{XXZ}
R_{456} L^Z_{236} L^X_{135} L^X_{124}
= L^X_{124} L^X_{135} L^Z_{236} R_{456},
\end{align}
where $L^X_{124}$ and $L^X_{135}$ are given by (\ref{LX}) with $(r, s, t, w) = (r_1, s_1, t_1, w_1)$ and $(r_2, s_2, t_2, w_2)$, respectively, and $L^Z_{236}$ is given by (\ref{LW}) with $(r, s, t, w) = (r_3, s_3, t_3, w_3)$.

Here are some examples of the $RLLL$ relation (\ref{qybe}), 
which are natural extensions of those for the OOZ type:  
\begin{align}
& (q^{i+j}-q^{a+b}) R_{i,j,k}^{a,b,c} = 0, \quad q^j R_{i,j,k-1}^{a,b,c} = q^b R_{i,j,k}^{a,b,c+1}, \\
& (s_1 t_2 w_2 q^{b+c} - s_2 t_1 w_1 q^{i+k})R_{i,j,k}^{a,b,c} 
= t_3 w_3 (r_1 s_1 - t_1^2  w_1 q^{2i}) R_{i-1,j+1,k-1}^{a,b,c}, \\
& (r_1 t_2 q^{j+k} - r_2 t_1 q^{a+c}) R_{i,j,k}^{a,b,c} = t_3 (r_1 s_1 - t_1^2 w_1q^{2a+2}) R_{i,j,k}^{a+1,b-1,c+1}, \\
& s_1 s_3 R_{i,j,k}^{a,b-1,c} -  q^k s_2 R_{i+1,j,k}^{a,b,c} + q^{i+1} t_1 t_3 w_3 R_{i,j+1,k-1}^{a,b,c} = 0,
\\
&q^bs_3t_2R^{a-1,b,c}_{i,j,k} + q^{2+i-k}t_1t_3^2w_3R^{a,b,c}_{i,j+1,k-2}
-q^{i-k}r_3s_3t_1R^{a,b,c}_{i,j+1,k}-s_2t_3R^{a,b,c}_{i+1,j,k-1}=0.
\end{align}

For $a, b, c, i, j, k \in \mathbb{Z}$, set 
\begin{align}
\begin{split}
R^{a,b,c}_{i,j,k} & = \delta_{i+j}^{a+b} \left(\frac{s_1 s_3}{s_2}\right)^i \left(\frac{s_1 t_3}{t_2}\right)^{-a} \left(\frac{s_1 s_3 t_2 w_2}{r_1 s_2 t_3 w_3}\right)^j \left(\frac{r_2 s_2}{t_2^2 w_2}\frac{t_3^2 w_3}{r_3 s_3}\right)^{g} q^{cj-bk} \\
& \quad \times \frac{(q^{b+c+i-k+2} \frac{t_1 t_2 w_2}{r_1 s_2}; q^2)_{g-a-b} (q^{h+2}\frac{t_1 w_1 t_2}{r_2 s_1}; q^2)_{g} (q^{2a+2} \frac{t_1^2 w_1}{r_1 s_1}; q^2)_{i-a}}{(q^{-b+c+i-k} \frac{r_2 t_1}{r_1 t_2}; q^2)_{g-a} (q^{h + 2}\frac{s_2 t_1 w_1}{s_1 t_2 w_2}; q^2)_{g-j}} R^{0,0,0}_{0,0,h}, 
\end{split}
\label{rxxz}\\
2g + h & = a-c+j+k\ (g \in \Z,\ h = 0, 1),
\label{gh1}
\end{align}
where $R^{0,0,0}_{0,0,0}$ and $R^{0,0,0}_{0,0,1}$ can be taken arbitrarily. 

\begin{theorem}\label{th:xxz}
Recursion relations derived from (\ref{XXZ}) consists of only those $R^{a,b,c}_{i,j,k}$' s having the same parity of $a-c+j+k$.
Each subsystem specified by $h$ admits a unique solution up to normalization, which is given by 
(\ref{rxxz})--(\ref{gh1}).
\end{theorem}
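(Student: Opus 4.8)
The plan is to follow the three-step pattern used for Proposition~\ref{pr:d12} and Theorems~\ref{th:www} and~\ref{th:ooz}: read off the parity decomposition from the shape of the $18$ recursion relations coming from (\ref{qybe}), prove that each parity sector supports at most a one-parameter family of solutions by successive elimination, and finally verify that the factorized formula (\ref{rxxz}) solves every one of the $18$ relations.

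For the parity statement I would inspect the $18$ relations associated with (\ref{XXZ}), of which the displayed ones are representative, and check that in each of them all matrix elements $R^{a,b,c}_{i,j,k}$ carry the same value of $a-c+j+k$ modulo~$2$. For instance, in $(s_1t_2w_2q^{b+c}-s_2t_1w_1q^{i+k})R^{a,b,c}_{i,j,k}=t_3w_3(r_1s_1-t_1^2w_1q^{2i})R^{a,b,c}_{i-1,j+1,k-1}$ the two elements have the same value $a-c+j+k$, while in the four-term relation $q^bs_3t_2R^{a-1,b,c}_{i,j,k}+q^{2+i-k}t_1t_3^2w_3R^{a,b,c}_{i,j+1,k-2}-q^{i-k}r_3s_3t_1R^{a,b,c}_{i,j+1,k}-s_2t_3R^{a,b,c}_{i+1,j,k-1}=0$ the values are $a-c+j+k-1,\,a-c+j+k-1,\,a-c+j+k+1,\,a-c+j+k-1$, which coincide modulo~$2$; the remaining relations are checked the same way. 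This proves the first assertion and makes the splitting $2g+h=a-c+j+k$ of (\ref{gh1}) well defined on each sector.

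For uniqueness I would argue as in Proposition~\ref{pr:d12}. Since $q$ is generic, the relation $(q^{i+j}-q^{a+b})R^{a,b,c}_{i,j,k}=0$ forces $R^{a,b,c}_{i,j,k}=0$ off the diagonal $a+b=i+j$, removing one degree of freedom; the two-term relations such as $q^jR^{a,b,c}_{i,j,k-1}=q^bR^{a,b,c+1}_{i,j,k}$ together with the three-term relations are then applied in a fixed order to reduce $a,b,c,i,j$ to $0$ and $k$ to $h$, expressing an arbitrary $R^{a,b,c}_{i,j,k}$ as an explicit coefficient times $R^{0,0,0}_{0,0,h}$ --- the coefficient being precisely the one recorded in (\ref{rxxz}). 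Since the reduction is deterministic, each $h$-sector is at most one-dimensional, which together with the next step gives the uniqueness claim.

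For existence I would substitute (\ref{rxxz}) into each of the $18$ relations. Its right-hand side is factorized --- a monomial in $q,q^a,\dots,q^k$ and the parameters times a ratio of $q^2$-shifted factorials --- so neighbouring matrix elements differ only by explicit rational factors produced by $(z;q^2)_{m+1}=(1-zq^{2m})(z;q^2)_m$; after cancelling the common prefactor and the overall constant $R^{0,0,0}_{0,0,h}$, each relation reduces to a rational-function identity in $q,q^a,q^b,q^c,q^i,q^j,q^k$ and $r_\ell,s_\ell,t_\ell,w_\ell$ that is checked directly. In contrast to the OOZ case, no boundary terms occur here: the parameters are generic and the indices already run over all of $\Z$, so there are no ``unwanted terms''. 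The main obstacle is the bookkeeping in this last step, since the exponent $g$ enters four distinct $q^2$-shifted factorials in (\ref{rxxz}) together with the power $\bigl(\frac{r_2s_2}{t_2^2w_2}\frac{t_3^2w_3}{r_3s_3}\bigr)^{g}$, and its shift under each recursion must be tracked with care; each individual verification, however, is elementary.
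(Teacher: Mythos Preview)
Your proposal is correct and follows essentially the same three-step approach as the paper: verify the parity decomposition by direct inspection of the recursion relations, establish uniqueness by reducing via a subset of the relations to the single normalization constant $R^{0,0,0}_{0,0,h}$, and confirm existence by substituting the factorized formula (\ref{rxxz}) into the remaining relations. The paper's own proof is extremely terse --- it simply states that the parity claim ``can be verified directly'', that ``solving a partial set of recursion relations already leads to (\ref{rxxz})--(\ref{gh1}), proving the uniqueness'', and that the remaining relations are then ``straightforward to check'' --- so your version is a faithful elaboration of exactly the same argument.
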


\begin{proof}
The former assertion on the parity can be verified directly.
Solving a partial set of recursion relations already leads to (\ref{rxxz})--(\ref{gh1}),
proving the uniqueness.
Then it is straightforward to check that it actually satisfies all the remaining recursion relations.
\end{proof}

$R$ is not locally finite.

Let us compare the 3D $R$ (\ref{rxxz}) for XXZ with (\ref{roow}) for OOZ. 
To fit $L^X$ in (\ref{XXZ}) to $L^O$, we specialize the parameters as
\begin{equation}\label{spe}
r_i = s_i = 1, \quad t_i = \mu_i^{-1}, \quad t_i w_i = \mu_i 
\end{equation}
for $i = 1, 2$. Then (\ref{rxxz}) becomes
\begin{align}
\begin{split}
R^{a,b,c}_{i,j,k} & = \delta_{i+j}^{a+b}\,
s_3^i (\mu_2t_3)^{-a} \left(\frac{\mu_2 s_3}{t_3w_3}\right)^j \left(\frac{t_3^2 w_3}{r_3 s_3}\right)^g q^{cj-bk} \frac{(q^{-b-c+i+k+2}\frac{\mu_1}{\mu_2}; q^2)_{j} (q^{2a+2}; q^2)_{i-a}}{(q^{-b+c+i-k} \frac{\mu_2}{\mu_1}; q^2)_{b+1}} \left(1 - q^{-h} \frac{\mu_2}{\mu_1}\right) R^{0,0,0}_{0,0,h}. 
\end{split}
\end{align}
Using the notation $e, f$ in (\ref{ef1}), where $g = e - \frac{1}{2}(h+d)$, and assuming $d \in \mathbb{Z}$ is 
so chosen that $e \in \mathbb{Z}$, this is rewritten as 
\begin{align}
\begin{split}
R^{a,b,c}_{i,j,k} & = \delta_{i+j}^{a+b}\,
s_3^i (\mu_2t_3)^{-a} \left(\frac{\mu_2 s_3}{t_3w_3}\right)^j\left(\frac{t_3^2w_3}{r_3s_3}\right)^e
q^{cj-bk}\frac{(q^{2+2e-2j-d}\frac{\mu_1}{\mu_2};q^2)_j(q^{2a+2};q^2)_{i-a}}{(q^{2+d}\frac{\mu_2}{\mu_1};q^2)_f(q^{2a-2e+d}\frac{\mu_2}{\mu_1};q^2)_{e-a}} \\
& \quad \times \frac{1 - q^{-h} \frac{\mu_2}{\mu_1}}{1 - q^d \frac{\mu_2}{\mu_1}} \left(\frac{t_3^2 w_3}{r_3 s_3}\right)^{-\frac{1}{2}(h+d)} R^{0,0,0}_{0,0,h}. 
\end{split}
\end{align}
Note that the condition $e \in \mathbb{Z}$ is equivalent to $h + d \in 2\mathbb{Z}$. 
Therefore, if $R^{0,0,0}_{0,0,h}\ (h = 0, 1)$ are taken as
\begin{equation}
\frac{1 - q^{-h} \frac{\mu_2}{\mu_1}}{1 - q^d \frac{\mu_2}{\mu_1}} \left(\frac{t_3^2 w_3}{r_3 s_3}\right)^{-\frac{1}{2}(h+d)} R^{0,0,0}_{0,0,h} \quad \rightarrow \quad \theta(h + d \in 2 \mathbb{Z}) = \theta(e \in \mathbb{Z})
\end{equation}
in the limit $\mu_1 \rightarrow \mu_2 q^d$, the 3D $R$ (\ref{roow}) 
for the OOZ case is formally reproduced.

\subsection{ZXX type}

We consider the $RLLL$ relation
\begin{align}\label{ZXX}
R_{456} L^X_{236} L^X_{135} L^Z_{124}
= L^Z_{124} L^X_{135} L^X_{236} R_{456},
\end{align}
where $L^Z_{124}$ is given by (\ref{LW}) with $(r, s, t, w) = (r_1, s_1, t_1, w_1)$, and $L^X_{135}$ and $L^X_{246}$ are given by (\ref{LX}) with $(r, s, t, w) = (r_2, s_2, t_2, w_2)$ and $(r_3, s_3, t_3, w_3)$, respectively. 

Here are some examples of the $RLLL$ relation (\ref{qybe}), which are
natural extensions of those for the ZOO type: 
\begin{align}
& (q^{j+k}-q^{b+c}) R_{i,j,k}^{a,b,c} = 0, \quad q^j R_{i-1,j,k}^{a,b,c} = q^b R_{i,j,k}^{a+1,b,c}, \\
& (s_3 t_2 q^{a+b} - s_2 t_3 q^{i+k})R_{i,j,k}^{a,b,c} = t_1 (r_3 s_3 - t_3^2 w_3  q^{2k}) R_{i-1,j+1,k-1}^{a,b,c}, \\
& (r_3 t_2 w_2 q^{i+j} - r_2 t_3 w_3 q^{a+c}) R_{i,j,k}^{a,b,c} = t_1 w_1 (r_3 s_3 -  t_3^2 w_3 q^{2c+2}) R_{i,j,k}^{a+1,b-1,c+1}, \\
& s_1 s_3 R_{i,j,k}^{a,b-1,c} -  q^i s_2 R_{i,j,k+1}^{a,b,c} + q^{k+1} t_1 t_3 w_3 R_{i-1,j+1,k}^{a,b,c} = 0,
\\
&q^bs_1t_2w_2R^{a,b,c-1}_{i,j,k}+ q^{2-i+k}t_1^2t_3w_1w_3R^{a,b,c}_{i-2,j+1,k}
-s_2t_1w_1R^{a,b,c}_{i-1,j,k+1}-q^{-i+k}r_1s_1t_3w_3R^{a,b,c}_{i,j+1,k}=0.
\end{align}
Every recursion relation consists of those $R^{a,b,c}_{i,j,k}$ having the same parity of $-a+c+i+j$. 

For $a, b, c, i, j, k \in \mathbb{Z}$, set 
\begin{align}
\begin{split}
R^{a,b,c}_{i,j,k} & = \delta_{j+k}^{b+c} \left(\frac{s_1 s_3}{s_2}\right)^k \left(\frac{s_3 t_1 w_1}{t_2 w_2}\right)^{-c} \left(\frac{s_1 s_3 t_2}{r_3 s_2 t_1}\right)^j \left(\frac{r_2 s_2}{t_2^2 w_2}\frac{t_1^2 w_1}{r_1 s_1}\right)^{g} q^{aj-bi} \\
& \quad \times \frac{(q^{a+b-i+k+2} \frac{t_2t_3 w_3}{r_3 s_2}; q^2)_{g-b-c} 
(q^{h+2}\frac{t_2 t_3 w_2}{r_2 s_3}; q^2)_{g}
 (q^{2c+2} \frac{t_3^2 w_3}{r_3 s_3}; q^2)_{k-c}}{(q^{a-b-i+k} \frac{r_2 t_3 w_3}{r_3 t_2 w_2}; q^2)_{g-c} 
 (q^{h + 2}\frac{s_2 t_3}{s_3 t_2}; q^2)_{g-j}} R^{0,0,0}_{h,0,0},
\end{split}
\label{rzxx}\\
2g + h & = -a+c+i+j\ (g \in \mathbb{Z},\ h = 0, 1), 
\label{gh2}
\end{align}
where $R^{0,0,0}_{0,0,0}$ and $R^{0,0,0}_{1,0,0}$ can be taken arbitrarily. 

\begin{theorem}\label{th:zxx}
Recursion relations derived from (\ref{ZXX})  consists of only those $R^{a,b,c}_{i,j,k}$' s having the same parity of $-a+c+i+j$.
Each subsystem specified by $h$ admits a unique solution up to normalization, which is given by 
(\ref{rzxx})--(\ref{gh2}).
\end{theorem}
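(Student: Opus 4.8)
The plan is to follow the proof of Theorem~\ref{th:xxz} almost verbatim, since the two statements have the same anatomy: a parity invariant of the recursions, a uniqueness claim obtained by solving a sub-collection of them, and an existence claim obtained by substituting the factorized closed form~(\ref{rzxx}) into the remaining relations and reducing each to a rational-function identity.

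First I would spell out all $18$ recursion relations produced from~(\ref{qybe}) for~(\ref{ZXX}), using $\pi_X$ on the tensor slots $5,6$ and $\pi_Z$ on slot $4$; the five relations displayed just above~(\ref{rzxx}) are representative, and the remaining ones come out mechanically from the components of $\mathcal{L}$ in~(\ref{L3}). Reading off the powers of $q$ and the index shifts in each relation, one checks by inspection that any single recursion couples only matrix elements $R^{a,b,c}_{i,j,k}$ with a common value of $(-a+c+i+j)\bmod 2$; this is the first assertion. At the same time one records the two low-order relations $(q^{j+k}-q^{b+c})R^{a,b,c}_{i,j,k}=0$ and $q^{j}R^{a,b,c}_{i-1,j,k}=q^{b}R^{a+1,b,c}_{i,j,k}$: the former produces the factor $\delta_{j+k}^{b+c}$ in~(\ref{rzxx}), and the latter fixes how the pair of indices $(a,i)$ enters, exactly as the corresponding relations do in the XXZ case.

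For uniqueness I would isolate a ``triangular'' subsystem — the two relations just mentioned together with the other displayed ones above~(\ref{rzxx}) — and use it, in the spirit of the reductions in Proposition~\ref{pr:d12} and Theorem~\ref{th:oww}, to express an arbitrary $R^{a,b,c}_{i,j,k}$ as an explicit product of $q$-Pochhammer symbols times $R^{0,0,0}_{h,0,0}$, where $h\in\{0,1\}$ is determined by writing $-a+c+i+j=2g+h$. Carried out in order — first reduce via the low-order relations, then descend in $j$ and finally in the remaining indices using the rest of the displayed relations — this computation reproduces precisely~(\ref{rzxx})--(\ref{gh2}). Hence every sector labelled by $h$ is at most one-dimensional, and the whole solution space is two-dimensional, spanned by the choices of $(R^{0,0,0}_{0,0,0},R^{0,0,0}_{1,0,0})$.

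For existence I would substitute~(\ref{rzxx}) into each recursion not used in the reduction. Because that formula is completely factorized, each such relation collapses — after dividing out the shared monomial and Pochhammer prefactors and repeatedly applying $(z;q^{2})_{m+1}=(1-zq^{2m})(z;q^{2})_{m}$ together with the extended conventions for $(z;q)_m$ around~(\ref{qfac}) — to an identity between rational functions of $q$ and the parameters $r_i,s_i,t_i,w_i$, which is then checked term by term. Since the parameters are generic, no boundary contributions arise, as noted at the beginning of Section~\ref{sec:ZX}, so the check is uniform over all $a,b,c,i,j,k\in\Z$. I expect the only genuine difficulty to be bookkeeping: one must track how the integer $g$ in~(\ref{gh2}) shifts by $\pm1$ when a single index moves, and how that shift is redistributed among the five $q$-Pochhammer factors of~(\ref{rzxx}). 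A useful external check is that the specialization~(\ref{spe}) with the $\mu_i$'s must degenerate~(\ref{rzxx}) to the ZOO solution~(\ref{rwoo}) of Theorem~\ref{th:zoo}, just as~(\ref{rxxz}) degenerates to~(\ref{roow}); one could also hope to bypass most of the computation by exhibiting a symmetry of the $RLLL$ relation that interchanges the first and third $L$ operators and carries Theorem~\ref{th:xxz} to the present one, but the direct verification sketched here is self-contained and runs exactly parallel to the XXZ case.
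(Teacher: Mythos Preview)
Your proposal is correct and follows essentially the same approach as the paper, which simply states that the proof is similar to Theorem~\ref{th:xxz}: verify the parity claim by direct inspection of the recursions, solve a partial subsystem to derive~(\ref{rzxx})--(\ref{gh2}) and hence uniqueness, and then check the remaining relations by substitution. Your additional remarks on the degeneration to the ZOO case and a possible symmetry argument go beyond what the paper records, but the core argument is identical.
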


The proof is similar to Theorem \ref{th:xxz}.
$R$ is not locally finite.

As the XXZ type, by specializing the parameters 
as (\ref{spe}) with $i=2,3$ 
and taking the limit $\mu_3 \rightarrow \mu_2 q^d$ with appropriate tuning of 
$R^{0,0,0}_{h,0,0}\ (h = 0, 1)$, 
one can reproduce the 3D $R$ (\ref{rwoo}) for ZOO from (\ref{rzxx}).

\subsection{XZX type}

We consider the $RLLL$ relation
\begin{align}\label{XZX}
R_{456} L^X_{236} L^Z_{135} L^X_{124}
= L^X_{124} L^Z_{135} L^X_{236} R_{456},
\end{align}
where $L^X_{124}$ and $L^X_{236}$ are given by (\ref{LX}) with $(r, s, t, w) = (r_1, s_1, t_1, w_1)$ and $(r_3, s_3, t_3, w_3)$, respectively, and $L^Z_{135}$ is given by (\ref{LW}) with $(r, s, t, w) = (r_2, s_2, t_2, w_2)$. 

Here are some examples of the $RLLL$ relation (\ref{qybe}), which 
are natural extensions of those for the OZO type:  
\begin{align}
& q^k R^{a,b,c}_{i,j-1,k}=q^c R^{a,b+1,c}_{i,j,k},\quad q^i R^{a,b,c}_{i,j-1,k}= q^a R^{a,b+1,c}_{i,j,k},\\
& r_2 R^{a-1,b,c-1}_{i,j,k} = (t_1 t_3 w_1  q^{1+a+b+c} + r_1 r_3 q^j) R^{a,b,c}_{i,j,k}, \\
&s_2 R^{a,b,c}_{i+1,j,k+1} = (t_1 t_3 w_3 q^{1+i+j+k} + s_1 s_3 q^b) R^{a,b,c}_{i,j,k}, \\
& q^a r_2 t_1 R^{a,b,c-1}_{i,j,k} + t_3 (r_1 s_1 -  t_1^2 w_1 q^{2a+2}) q^{b+c} 
R^{a+1,b,c}_{i,j,k} - r_1 t_2 R^{a,b,c}_{i,j-1,k+1}=0,
\\
\begin{split}
&q^{-b+c}r_2s_2t_3w_3R^{a-1,b,c}_{i,j,k}
- q^{-b+c}t_2^2t_3w_2w_3R^{a-1,b+2,c}_{i,j,k}
\\
&\;\;-s_1t_2w_2(r_3s_3-t_3^2w_3q^{2k})R^{a,b,c}_{i,j-1,k-1}
+q^as_2t_1w_1(r_3s_3-t_3^2w_3q^{2+2c})R^{a,b,c+1}_{i,j,k}=0.
\end{split}
\end{align}

For $a, b, c, i, j, k \in \mathbb{Z}$, set 
\begin{align}
\begin{split}
R^{a,b,c}_{i,j,k} & = \delta^{a-c}_{i-k} \left(\frac{r_2}{r_1 r_3}\right)^c 
\left(\frac{s_1 t_3}{t_2}\right)^k \left(\frac{r_2 t_3 w_3}{r_3 t_2 w_2}\right)^i 
\left(\frac{r_3 s_3}{t_3^2 w_3}\frac{t_2^2 w_2}{r_2 s_2}\right)^{g} q^{bk-cj} \\
& \quad \times \frac{(-q^{h+1}\frac{t_1 t_3 w_3}{s_1 s_3}; q^2)_{g}}{(-q^{h+1}
\frac{r_3 t_1}{s_1 t_3}; q^2)_{g-k}} 
\frac{(-q^{-h+1} \frac{s_3 t_1 w_1}{r_1 t_3 w_3}; q^2)_{i-g}}{(-q^{-h+3}
 \frac{t_1 t_3w_1}{r_1 r_3}; q^2)_{c+i-g}} R^{0,0,0}_{0,h,0}
\end{split}
\label{rxzx}\\
2g+h & = -b+i+j+k\ (g \in \mathbb{Z},\ h = 0, 1),
\label{gh3}
\end{align}
where $R^{0,0,0}_{0,0,0}$ and $R^{0,0,0}_{0,1,0}$ can be taken arbitrarily. 

\begin{theorem}\label{th:xzx}
Recursion relations derived from (\ref{XZX}) 
consists of only those $R^{a,b,c}_{i,j,k}$' s having the same parity of $-b+i+j+k$.
Each subsystem specified by $h$ admits a unique solution up to normalization, which is given by 
(\ref{rxzx})--(\ref{gh3}).
\end{theorem}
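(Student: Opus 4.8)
The plan is to follow the same two-stage strategy that proved Theorems~\ref{th:xxz} and \ref{th:zxx}: first extract a minimal set of recursion relations from \eqref{XZX} that already pins down $R$ up to one scalar in each parity sector, thereby establishing uniqueness; then verify that the closed formula \eqref{rxzx}--\eqref{gh3} satisfies \emph{all} eighteen recursion relations. For the parity statement, I would note that each of the eighteen equations \eqref{qybe} associated with \eqref{XZX} relates $R^{a,b,c}_{i,j,k}$-entries whose index shifts change $-b+i+j+k$ by an even integer; this is a direct inspection of the five sample relations displayed above together with the remaining ones, and it shows the recursion system decouples into the sectors $h\equiv -b+i+j+k \bmod 2$ with $h\in\{0,1\}$.

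For uniqueness within a fixed sector, I would use the two ``arrow-type'' relations $q^k R^{a,b,c}_{i,j-1,k}=q^c R^{a,b+1,c}_{i,j,k}$ and $q^i R^{a,b,c}_{i,j-1,k}=q^a R^{a,b+1,c}_{i,j,k}$ (which also force the Kronecker factor $\delta^{a-c}_{i-k}$, since comparing them gives $q^k R=q^a R\cdot q^{c-i}$ hence $R=0$ unless $a-c=i-k$) to reduce $b$ to $0$; then the two three-term relations expressing $r_2R^{a-1,b,c-1}_{i,j,k}$ and $s_2R^{a,b,c}_{i+1,j,k+1}$ in terms of $R^{a,b,c}_{i,j,k}$ let me reduce $a$ (equivalently $i$, via $\delta^{a-c}_{i-k}$) down to $0$ and shift $c,k$; finally the longer four-term relations move $j$. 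Tracking which combination of indices is left free at the end shows exactly one scalar, namely $R^{0,0,0}_{0,h,0}$, survives in each sector $h$, which is the asserted uniqueness; carrying the reduction through carefully reproduces the prefactors and Pochhammer symbols in \eqref{rxzx}.

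The existence half is then the statement that \eqref{rxzx} genuinely solves the full $RLLL$ system, not merely the sub-system used for the reduction. Here I would not recompute everything by hand but instead observe, as in the proofs of Theorems~\ref{th:oww} and \ref{th:xxz}, that after the $\delta^{a-c}_{i-k}$ constraint and the explicit prefactor are stripped off, what remains in \eqref{rxzx} is a ratio of $q$-Pochhammer symbols of the form $(-q^{h+1}A;q^2)_g/(-q^{h+1}B;q^2)_{g-k}$ times $(-q^{-h+1}C;q^2)_{i-g}/(-q^{-h+3}D;q^2)_{c+i-g}$; each recursion relation, upon substituting this ansatz, collapses to an identity among finitely many rational functions of $q^a,q^b,q^c,q^i,q^j,q^k$ and the parameters, using only the elementary contiguity rule $(x;q^2)_{m+1}=(1-xq^{2m})(x;q^2)_m$ and its inverse $(x;q^2)_{m-1}=(x;q^2)_m/(1-xq^{2m-2})$. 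Since the parameters $r_i,s_i,t_i,w_i$ are generic (no boundary truncation is needed, as emphasized at the start of Section~\ref{sec:ZX}), these identities are purely formal and can be checked one at a time. As a sanity check I would also confirm that the specialization \eqref{spe} with $i=1,3$ followed by the limit producing an integer $d=\log_q(-\mu_1/\mu_3)$ recovers \eqref{rowo} for the OZO type, paralleling the XXZ$\to$OOZ and ZXX$\to$ZOO degenerations already recorded.

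The main obstacle is bookkeeping rather than conceptual: the four-term relations that move the index $j$ mix three different Pochhammer factors simultaneously, and verifying that the ansatz satisfies them requires a common-denominator manipulation in which the ``$-h+3$'' shift in the last Pochhammer symbol of \eqref{rxzx} and the parity split $2g+h=-b+i+j+k$ must be handled consistently for both $h=0$ and $h=1$. I expect the cleanest route is to feed the ansatz into these relations and reduce each to a single polynomial identity in $q^{2g}$, whose vanishing follows from the contiguity rules; getting the $h$-dependent signs and half-integer shifts right is the delicate point, but it is routine once organized.
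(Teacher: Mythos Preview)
Your proposal is correct and follows essentially the same approach as the paper: verify the parity decoupling by direct inspection of the eighteen relations, establish uniqueness by solving a partial subset of the recursions (which already forces the closed form \eqref{rxzx}--\eqref{gh3} up to the two normalization constants $R^{0,0,0}_{0,h,0}$), and then check that this formula satisfies all remaining relations. The paper's own proof is just the one-line remark that the argument is ``similar to Theorem~\ref{th:xxz}'', so your write-up is in fact more detailed than the paper's while following the identical route.
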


The proof is similar to Theorem \ref{th:xxz}.
$R$ is not locally finite.

Let us compare the 3D $R$ (\ref{rxzx}) for XZX with (\ref{rowo}) for OZO. 
To fit $L^X$ in (\ref{XZX}) to $L^O$, we specialize the parameters as
(\ref{spe}) with $i=1,3$.
Then (\ref{rxzx}) becomes
\begin{align}
\begin{split}
R^{a,b,c}_{i,j,k} & = \delta_{i-k}^{a-c} r_2^c(\mu_3t_2)^{-k}
\left(\frac{\mu_3r_2}{t_2w_2}\right)^i
\left(\frac{r_2 s_2}{t_2^2 w_2}\right)^{g} q^{bk-cj} \frac{(-q^{-b+i+j-k+1}\frac{\mu_3}{\mu_1}; q^2)_k}{(-q^{a+b-c-j+1}\frac{\mu_1}{\mu_3}; q^2)_{c+1}} \left(1+q^{-h+1} \frac{\mu_1}{\mu_3}\right)R^{0,0,0}_{0,h,0}. 
\end{split}
\end{align}
Using the notation $e, f$ in (\ref{ef3}), where $g = e - \frac{1}{2}(h-d-1)$, 
and assuming $d \in \mathbb{Z}$ is so chosen that $e \in \mathbb{Z}$, 
this is rewritten as 
\begin{align}
\begin{split}
R^{a,b,c}_{i,j,k} & = \delta_{i+j}^{a+b} r_2^c(\mu_3t_2)^{-k}
\left(\frac{\mu_3r_2}{t_2w_2}\right)^i \left(\frac{t_2^2w_2}{r_2s_2}\right)^e q^{bk-cj} \frac{(-q^{2e-2k+d+2}\frac{\mu_3}{\mu_1}; q^2)_k}{(-q^{-d+2} \frac{\mu_1}{\mu_3} ;q^2)_f (-q^{-2e+2i-d} \frac{\mu_1}{\mu_3} ;q^2)_{e-i}} \\
& \quad \times \frac{1+q^{-h+1} \frac{\mu_1}{\mu_3}}{1+q^{-d} \frac{\mu_1}{\mu_3}} \left(\frac{t_2^2 w_2}{r_2 s_2}\right)^{-\frac{1}{2}(h-d-1)} R^{0,0,0}_{0,h,0}. 
\end{split}
\end{align}
Note that the condition $e \in \mathbb{Z}$ 
is equivalent to $h - d - 1 \in 2\mathbb{Z}$. 
Therefore, if $R^{0,0,0}_{0,h,0}\ (h = 0, 1)$ are taken as
\begin{equation}
\frac{1+q^{-h+1} \frac{\mu_1}{\mu_3}}{1+q^{-d} \frac{\mu_1}{\mu_3}} \left(\frac{t_2^2 w_2}{r_2 s_2}\right)^{-\frac{1}{2}(h-d-1)} R^{0,0,0}_{0,h,0} \quad \rightarrow \quad \theta(h - d - 1 \in 2 \mathbb{Z}) = \theta(e \in \mathbb{Z})
\end{equation}
in the limit $\mu_1 \rightarrow -\mu_3 q^d$, 
the 3D $R$ (\ref{rowo}) for the OZO case is formally reproduced.

\section{Relation to the representation theory of the quantized coordinate ring.}\label{sec:rep}

\subsection{Quantized coordinate ring $A_q(sl_3)$}
The algebra $A_q(sl_3)$ is a Hopf algebra dual to 
the quantized universal enveloping algebra $U_q(sl_3)$.
See for example \cite{Dri86, RTF90, Kas93, CP94, KS98} and the references therein.
It is generated by $t_{ij}\,(1 \le i,j \le 3)$ with the relations
\begin{align}
&[t_{ik}, t_{jl}]=\begin{cases}0 &(i<j, k>l),\\
(q-q^{-1})t_{jk}t_{il} & (i<j, k<l),
\end{cases}
\label{Aqr1}\\
&t_{ik}t_{jk} = q t_{jk}t_{ik}\; (i<j),\quad 
t_{ki}t_{kj} = q t_{kj}t_{ki}\; (i<j),
\label{Aqr2}\\
&\sum_{\sigma \in \mathfrak{S}_3}(-q)^{l(\sigma)}
t_{1\sigma_1}t_{2\sigma_2} t_{3\sigma_3} = 1,
\label{Aqr3}
\end{align} 
where $\mathfrak{S}_3$ denotes the symmetric group of degree 3 and 
$l(\sigma)$ is the length of the permutation $\sigma$.
The coproduct 
$\Delta: A_q(sl_3) \rightarrow A_q(sl_3)^{\otimes N}$ is given by the matrix product form
$\Delta t_{ij} = \sum_{1 \le i_2,\ldots, i_N\le 3} t_{i  i_2} \otimes t_{i_2 i_3} \otimes \cdots \otimes t_{i_N j}$.

The following maps define the algebra homomorphisms to the $q$-Weyl algebra (\ref{qw}):
\begin{align}
\begin{split}
\rho_1: A_q(sl_3) &\rightarrow \mathcal{W}_q,
\\
\begin{pmatrix} 
t_{11} & t_{12} & t_{13}\\
t_{21} & t_{22} & t_{23} \\
t_{31} & t_{32} & t_{33}
\end{pmatrix} 
&\mapsto
\begin{pmatrix}
Z^{-1}(u_1 - g_1 h_1  X^2) & g_1 X & 0
\\
-qh_1 X & Z & 0
\\
0 & 0 & u_1^{-1}
\end{pmatrix},
\end{split}
\\
\begin{split}
\rho_2: A_q(sl_3) &\rightarrow \mathcal{W}_q,
\\
\begin{pmatrix} 
t_{11} & t_{12} & t_{13}\\
t_{21} & t_{22} & t_{23} \\
t_{31} & t_{32} & t_{33}
\end{pmatrix} 
&\mapsto
\begin{pmatrix}
u_2^{-1} & 0 & 0
\\
0 & Z^{-1}(u_2 - g_2 h_2  X^2) & g_2 X 
\\
0 & -qh_2 X & Z
\end{pmatrix}.
\end{split}
\end{align}
Here, $u_i, g_i, h_i$ are arbitrary parameters.

We let 
$\rho_{Z,i} = \pi_Z \circ \rho_i$ 
and 
$\rho_{X,i} = \pi_X \circ \rho_i$
denote the representations
$A_q(sl_3) \rightarrow \mathrm{End}(F)$
obtained by the compositions with 
$\pi_Z$ and $\pi_X$ in (\ref{piz}) and (\ref{pix}).

\subsection{3D $R$ of type OOO as an intertwiner of $\rho_{O,i}$.}\label{ss:OOOint}

From the remark after (\ref{lwinv}),
one can restrict $\rho_{X,i}$ 
with $(u_i, g_i, h_i) = 
(1,\mu_i, \mu_i^{-1})$ from $\mathrm{End}(F)$ to $\mathrm{End}(F_+)$.
The resulting representation will be denoted by 
$\rho_{O,i}: A_q(sl_3) \rightarrow \mathrm{End}(F_+)$.

The representation $\rho_{O,i}$ is irreducible and well-studied \cite{KS98}.
In fact, the isomorphism of the tensor product representations
$\rho_{O,1} \otimes \rho_{O,2} \otimes \rho_{O,1}
\simeq \rho_{O,2} \otimes \rho_{O,1} \otimes \rho_{O,2}$ 
is valid, and they turn out to be irreducible.
Let  $\Phi \in \mathrm{End}(F^{\otimes 3}_+)$ be the intertwiner, i.e., 
the unique solution to the intertwining relation 
$\Phi \circ (\rho_{O,1} \otimes \rho_{O,2} \otimes \rho_{O,1}) = 
(\rho_{O,2} \otimes \rho_{O,1} \otimes \rho_{O,2}) \circ \Phi$
up to normalization.
Set $\mathscr{R} = \Phi\circ P$ 
where $P$ is the transposition $P(|i \rangle \otimes |j \rangle \otimes |k \rangle) = 
|k \rangle \otimes |j \rangle \otimes |i \rangle$.
We also call $\mathscr{R}$ the intertwiner.
The intertwining relation for the generator $t_{l m}$ reads as 
\begin{align}\label{ir1}
\mathscr{R}
(\rho_{O,1} \otimes \rho_{O,2} \otimes \rho_{O,1}) (\Delta\!' t_{l m})
= (\rho_{O,2} \otimes \rho_{O,1} \otimes \rho_{O,2})(\Delta t_{l m})
\mathscr{R}
\qquad (1 \le l, m \le 3),
\end{align}
where $\Delta\!' = P \circ \Delta \circ P$, hence 
$\Delta\!' t_{l m} =
\sum_{j,k}t_{k m} \otimes t_{jk} \otimes t_{l j}$.
It is known that the set of equations (\ref{ir1}) are equivalent to 
the $RLLL$ relation $(\ref{OOO})|_{\mu_3 = \mu_1}$
under the identification $\mathscr{R}=R$.
See \cite[Sec.2]{KO12} and \cite[Lem.3.22]{Ku22}.
As a result, the 3D $R$ (\ref{hp2}) with $\mu_3=\mu_1$ 
is identified with the 
intertwiner of the $A_q(sl_3)$ modules.

\subsection{3D $R$ of type ZZZ as an intertwiner of $\rho_{Z,i}$.}\label{ss:ZZZint}

Consider the  equation on $R \in \mathrm{End}(F^{\otimes 3})$ given by
\begin{align}\label{ir2}
R (\rho_{Z,1} \otimes \rho_{Z,2} \otimes \rho_{Z,1}) (\Delta\!' t_{l m})
= (\rho_{Z,2} \otimes \rho_{Z,1} \otimes \rho_{Z,2})(\Delta t_{l m}) R
\qquad (1 \le l, m \le 3),
\end{align}
which includes the parameters 
$u_i, g_i, h_i (i = 1,2)$.
On the other hand, recall that the $RLLL$ relation (\ref{ZZZ}) of ZZZ type 
contains 18 equations depending on $r_\alpha, s_\alpha, t_\alpha, w_\alpha (\alpha = 1,2,3)$.
Now we state a result analogous to the OOO case in the previous subsection.

\begin{proposition}\label{pr:z}
The intertwining relation (\ref{ir2}) and the $RLLL$ relation (\ref{ZZZ}) are equivalent 
provided that the parameters in the former obey the constraint 
$u_1 = u_2( =:\!u)$ and  $g_1h_1 = g_2h_2(=:\!p)$, and those in the latter satisfy
\begin{align}\label{rrtt}
\frac{r_1}{t_1} = \frac{r_2}{t_2},\quad
\frac{s_2}{t_2} = \frac{s_3}{t_3},
\quad
\frac{r_2}{r_1r_3}=u,
\quad
\frac{s_1s_3}{s_2} = u^2,
\quad
\frac{t_1^2w_1}{r_1s_1}= 
\frac{t_2^2w_2}{r_2s_2}= 
\frac{t_3^2w_3}{r_3s_3}= \frac{p}{u}.
\end{align}
\end{proposition}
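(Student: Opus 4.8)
The plan is to argue exactly as in the OOO case treated in \cite[Sec.~2]{KO12} and \cite[Lem.~3.22]{Ku22}: spell out both systems of operator equations on $R\in\mathrm{End}(F^{\otimes3})$ and match them term by term under the stated change of parameters. First I would make the intertwining side explicit. Applying the matrix-product coproduct $\Delta t_{lm}=\sum_n t_{ln}\otimes t_{nm}$ twice and using $\Delta\!'=P\circ\Delta\circ P$, the relation (\ref{ir2}) becomes the identity of $3\times3$ matrices
\[
R\,\mathsf{M}_3\mathsf{M}_2\mathsf{M}_1 \;=\; \mathsf{N}_1\mathsf{N}_2\mathsf{N}_3\,R
\]
over $\mathrm{End}(F^{\otimes3})$, where $\mathsf{M}_k$ (resp.\ $\mathsf{N}_k$) denotes the $3\times3$ matrix whose $(p,q)$-entry is $\rho_{Z,i}(t_{pq})$ acting on the $k$-th tensor slot, with $i=1,2,1$ for $k=1,2,3$ on the left and $i=2,1,2$ on the right. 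Since $\rho_1(\mathbf t)$ and $\rho_2(\mathbf t)$ are block diagonal (a $2\times2$ block on indices $\{1,2\}$, resp.\ $\{2,3\}$, plus a scalar), these products contain only finitely many nonzero terms and (\ref{ir2}) unpacks into $9$ operator equations indexed by $(l,m)\in\{1,2,3\}^2$, to be compared with the $18$ equations of (\ref{ZZZ}) listed in Appendix~\ref{app:eqs}.

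The heart of the argument is the parameter dictionary. One identifies, slot by slot, the represented operators $\rho_{Z,i}(t_{pq})\in\mathrm{End}(F)$ with the Boltzmann weights $\mathcal L^{ab}_{ij}$ of $L^Z_{r_i,s_i,t_i,w_i}$ acting on the weight-one sector of $V\otimes V$: under $\pi_Z$ the nontrivial $2\times2$ block of $\rho_i(\mathbf t)$ is $\bigl(\begin{smallmatrix}Z^{-1}(u_i-g_ih_iX^2)&g_iX\\-qh_iX&Z\end{smallmatrix}\bigr)$, while the weight-one block of $L^Z$ in the basis $v_1\otimes v_0,\ v_0\otimes v_1$ is $\bigl(\begin{smallmatrix}t_iw_iX & Z^{-1}(r_is_i-t_i^2w_iX^2)\\ Z & -qt_iX\end{smallmatrix}\bigr)$, and the remaining one-dimensional sectors carry the scalars $r_i,s_i$ (resp.\ $u_i^{-1}$). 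Matching these — up to a permutation of the $\C^2$ basis, the passage to $q$-determinant/antipode cofactors if needed, the overall rescaling of each $L^Z$, and the diagonal similarity transformations on $V\otimes V$ and on $F$ that both sides tolerate — is precisely the step of \cite[Lem.~3.22]{Ku22} adapted to the present, parameter-carrying setting. It forces $h_i=t_i$, $g_i=t_iw_i$ (hence $g_ih_i=t_i^2w_i$) together with relations expressing $u_i$ through the $r_j,s_j$, and I would then check that, with $\rho_{Z,1}$ in slots $1$ and $3$ and $\rho_{Z,2}$ in slot $2$, the constraints $u_1=u_2(=:u)$ and $g_1h_1=g_2h_2(=:p)$ become exactly the five relations (\ref{rrtt}): the chain $t_i^2w_i/(r_is_i)=p/u$ is immediate, while $r_1/t_1=r_2/t_2$, $s_2/t_2=s_3/t_3$, $r_2/(r_1r_3)=u$ and $s_1s_3/s_2=u^2$ arise from aligning the off-diagonal normalizations and the scalar sectors across the three slots.

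With the dictionary fixed, the last step is to expand the $9$ equations of (\ref{ir2}) and the $18$ equations of (\ref{ZZZ}) as operator identities on $F^{\otimes3}$ and verify that each equation of one system is a parameter-dependent linear combination of the equations of the other; this makes the two systems — hence their solution spaces for $R$ — coincide. An organizational simplification is that the symmetry $r_i\leftrightarrow s_i$, $(a,b,c,i,j,k)\mapsto(1-a,1-b,1-c,1-i,1-j,1-k)$, which preserves (\ref{rrtt}), carries the weight-two block of the $RLLL$ equations onto the weight-one block, so only $9$ of the $18$ are independent, in bijection with the $9$ intertwining equations. As an independent confirmation one may substitute the explicit solution (\ref{rwww})--(\ref{pm}), specialized through (\ref{rrtt}), directly into the intertwining equations and check them with the recursion-relation technology of Section~\ref{sec:ZO}; by Proposition~\ref{pr:d12} this also pins $R$ down uniquely in each parity sector.

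The main obstacle I anticipate is bookkeeping the gauge freedom: both $\rho_i$ (rescalings of $X,Z$ and of $u_i,g_i,h_i$) and $\mathcal L_{r,s,t,w}$ (diagonal conjugation on $V\otimes V$ and on $F$, plus overall scalars) carry ambiguities, and the equivalence holds only once these are aligned simultaneously in all three slots so that ``$\rho$-entries $=$ $\mathcal L$-entries'' yields precisely (\ref{rrtt}) rather than some superficially different but equivalent presentation. Once that alignment is settled, the remainder is a routine if somewhat lengthy verification, structurally identical to the OOO case.
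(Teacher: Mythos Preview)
Your overall strategy---unpack both systems and show that each equation of one is a scalar multiple of an equation of the other---is indeed what the paper does, but two of your concrete steps do not work as written.

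First, the slot-by-slot identification you propose, forcing $h_i=t_i$ and $g_i=t_iw_i$, cannot be right on indexing grounds alone: there are only two parameter sets $(u_i,g_i,h_i)$, $i=1,2$, on the intertwining side but three sets $(r_\alpha,s_\alpha,t_\alpha,w_\alpha)$, $\alpha=1,2,3$, on the $RLLL$ side, and the nontrivial $2\times2$ block of $\rho_i(\mathbf t)$ does not match the weight-one block of any single $L^Z$ even after a basis permutation. The paper never matches factors; it matches the full three-fold composites. Concretely, it shows that under (\ref{rrtt}) one has
\[
(\rho_{Z,1}\otimes\rho_{Z,2}\otimes\rho_{Z,1})(\Delta' t_{lm})=A_{lm}\,\mathcal L^{abc}_{ijk},\qquad
(\rho_{Z,2}\otimes\rho_{Z,1}\otimes\rho_{Z,2})(\Delta t_{lm})=A_{lm}\,\tilde{\mathcal L}^{abc}_{ijk},
\]
with explicit nonzero scalars $A_{lm}$ and a fixed index table $(l,m)\leftrightarrow(abc,ijk)$; this covers the nine weight-one $RLLL$ equations directly.

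Second, your symmetry argument for the weight-two block is incorrect: the involution $r_i\leftrightarrow s_i$ alone does \emph{not} preserve (\ref{rrtt}) (for instance $r_2/(r_1r_3)=u$ would become $s_2/(s_1s_3)=u$, contradicting $s_1s_3/s_2=u^2$ unless $u^3=1$). The paper instead observes a second identity
\[
(\rho_{Z,1}\otimes\rho_{Z,2}\otimes\rho_{Z,1})(\Delta' t_{lm})
= B_{lm}\,\bigl(\tilde{\mathcal L}^{a'b'c'}_{i'j'k'}\big|_{r_\gamma\leftrightarrow s_\gamma,\;t_\gamma\to t_\gamma w_\gamma,\;w_\gamma\to w_\gamma^{-1}}\bigr)
\]
(and its partner), where $(a'b'c',i'j'k')$ is the weight-two label, and then uses the inverse formula (\ref{zzzinv}), namely $R^{-1}=(\text{scalar})\,R|_{r\leftrightarrow s,\;t\to tw,\;w\to w^{-1}}$, to convert the same nine intertwining relations into the nine weight-two $RLLL$ equations written as $R^{-1}\tilde{\mathcal L}^{a'b'c'}_{i'j'k'}=\mathcal L^{a'b'c'}_{i'j'k'}R^{-1}$. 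The correct involution thus involves the full substitution $r\leftrightarrow s$, $t\to tw$, $w\to w^{-1}$, not just $r\leftrightarrow s$; with that fix your reduction from $18$ to $9$ equations is essentially the paper's argument.
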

\begin{proof}
Set 
$\mathcal{L}^{abc}_{ijk} = \sum_{\alpha, \beta, \gamma}
\mathcal{L}^{\alpha \beta}_{ij} 
\otimes \mathcal{L}^{a \gamma}_{\alpha k} \otimes \mathcal{L}^{bc}_{\beta\gamma}$
and 
$\tilde{\mathcal{L}}^{abc}_{ijk} = \sum_{\alpha, \beta, \gamma}
\mathcal{L}^{ab}_{\alpha\beta} \otimes 
\mathcal{L}^{\alpha c}_{i\gamma} \otimes \mathcal{L}^{\beta\gamma}_{jk}$
so that (\ref{qybe}) reads as $R \mathcal{L}^{abc}_{ijk} = \tilde{\mathcal{L}}^{abc}_{ijk}R$.
Then one can directly check that the relations (\ref{rrtt}) validate the equalities
\begin{align}
(\rho_{Z,1} \otimes \rho_{Z,2} \otimes \rho_{Z,1}) (\Delta\!' t_{l m})
&= A_{l m} \mathcal{L}^{abc}_{ijk} 
= B_{lm}
\left(\tilde{\mathcal{L}}^{a'b'c'}_{i'j'k'}\left|_{r_\gamma \leftrightarrow s_\gamma,
t_\gamma \rightarrow 
t_\gamma w_\gamma, w_\gamma \rightarrow w_\gamma^{-1}}  \right.\right),
\label{AB1}
\\
(\rho_{Z,2} \otimes \rho_{Z,1} \otimes \rho_{Z,2})(\Delta t_{l m}) 
&= A_{l m}\tilde{\mathcal{L}}^{abc}_{ijk}
=B_{l m} \left(
\mathcal{L}^{a'b'c'}_{i'j'k'}\left|_{r_\gamma \leftrightarrow s_\gamma,
t_\gamma \rightarrow t_\gamma w_\gamma, 
w_\gamma \rightarrow w_\gamma^{-1}}  \right.\right),
\label{AB2}
\end{align}
where the constants $A_{lm}, B_{lm}$ are given by
\begin{align}
(A_{lm})_{1\le l, m \le 3} &= \begin{pmatrix}
\begin{array}{ccc}
 \frac{1}{r_2^2 s_2} & \frac{p u}{h_1 r_2 s_2 t_3} & \frac{p^2}{h_1 h_2 u} \\
 -\frac{q h_1 t_3}{p r_2^2 s_2} & -\frac{u}{r_2 s_2} & \frac{p}{h_2 r_2 t_1 u} \\
 \frac{q^2 h_1 h_2}{u} & -\frac{q h_2 t_1 u}{p r_2 s_2} & \frac{1}{r_2 u} \\
\end{array}
\end{pmatrix},
\\
(B_{lm})_{1\le l, m \le 3} &= \begin{pmatrix}
\begin{array}{ccc}
 -\frac{1}{r_2^2 s_2} & \frac{p u}{h_1 r_2 s_2 t_3} & \frac{p^2}{h_1 h_2 u} \\
 -\frac{q h_1 t_3}{p r_2^2 s_2} & \frac{u}{r_2 s_2} & \frac{p}{h_2 r_2 t_1 u} \\
 \frac{q^2 h_1 h_2}{u} & -\frac{q h_2 t_1 u}{p r_2 s_2} & \frac{1}{r_2 u} \\
\end{array}
\end{pmatrix}.
\end{align}
The correspondence between the indices $l, m$ and 
$a,b,c,i,j,k, a',b',c',i',j',k'$ is specified as follows:
\begin{table}[h]
 \begin{minipage}[t]{.45\textwidth}
 \begin{center}
\begin{tabular}{c| c| c }
$l$ & $abc$ & $i'j'k'$
\\
\hline
1 & 001 & 011\\
2 & 010 & 101 \\
3 & 100 & 110\\
\end{tabular}
 \end{center}
  \end{minipage}
  \hspace{-4cm}
  \begin{minipage}[t]{.45\textwidth}
    \begin{center}
\begin{tabular}{c| c|c }
$m$ & $ijk$ & $a'b'c'$
\\
\hline
1 & 100 & 110\\
2 & 010 & 101 \\
3 & 001 & 011
\end{tabular}
    \end{center}
  \end{minipage}
\end{table}

\noindent
The relations (\ref{AB1}) and (\ref{AB2}) including $A_{lm}$ enable us to identify 
(\ref{ir2}) with 
$R \mathcal{L}^{abc}_{ijk} = \tilde{\mathcal{L}}^{abc}_{ijk}R$,
covering the case $a+b+c=i+j+k=1$ of the latter.
Let us show the other case $a'+b'+c'=i'+j'+k'=2$ of 
the $RLLL$ relation in the form
$R^{-1}\tilde{\mathcal{L}}^{a'b'c'}_{i'j'k'} = \mathcal{L}^{a'b'c'}_{i'j'k'}R^{-1}$.
Due to (\ref{zzzinv}) it is equivalent to 
$R\left(\tilde{\mathcal{L}}^{a'b'c'}_{i'j'k'}\left|_{r_\gamma \leftrightarrow s_\gamma,
t_\gamma \rightarrow 
t_\gamma w_\gamma, w_\gamma \rightarrow w_\gamma^{-1}}  \right.\right)
=\left(
\mathcal{L}^{a'b'c'}_{i'j'k'}\left|_{r_\gamma \leftrightarrow s_\gamma,
t_\gamma \rightarrow t_\gamma w_\gamma, 
w_\gamma \rightarrow w_\gamma^{-1}}  \right.\right)R$.
This equality follows from (\ref{ir2}) by
applying the relations (\ref{AB1}) and (\ref{AB2}) including $B_{lm}$.
\end{proof}

\section{Discussion}\label{sec:dis}

\subsection{Summary}
In this paper we have studied the tetrahedron equation of the form 
$R_{456}L^C_{236}L^B_{135}L^A_{124}  = L^A_{124}L^B_{135}L^C_{236}R_{456}$ 
for the three kinds of 3D $L$ operators $L^Z, L^X, L^O$ in 
(\ref{LW})--(\ref{LO}) which can be regarded as quantized six-vertex models with 
Boltzmann weights taken from the $q$-Weyl algebra $\mathcal{W}_q$ (\ref{qw}) 
or the $q$-oscillator algebra $\mathcal{O}_q$ (\ref{qoa}).
In each case the solution $R$ has been obtained explicitly whose elements are factorized or expressed in terms of 
terminating $q$-hypergeometric type series as in Table \ref{tab:kekka}.
They are new except for the OOO case.

\begin{table}[H]
\begin{center}
\begin{tabular}{c|c|c|c|c|c}
ABC & $\sharp (\mathrm{Z})$ & feature & ${\rm locally \atop \rm \underset{}{finiteness}}$
& $\sharp$(sector) &formula \\
\hline
ZZZ & 3 & factorized & no & 4 & (\ref{rwww})\\
\hline
OZZ &   & ${}_2\phi_1$ & no&   & (\ref{roww2})\\
ZZO & 2 & ${}_2\phi_1$ & no& 1 & (\ref{rwwo2})\\
ZOZ &   & ${}_3\phi_2$-like & no&  & (\ref{rwow2})\\
\hline
OOZ &   & factorized & yes &   & (\ref{roow})\\
ZOO & 1 & factorized & yes & 1 & (\ref{rwoo})\\
OZO &   & factorized & no &   & (\ref{rowo})\\
\hline
OOO & 0 & ${}_2\phi_1$ & yes & 1 & (\ref{hp2})\\
\hline
XXZ &   & factorized & no &   & (\ref{rxxz})\\
ZXX & 1 & factorized & no & 2 & (\ref{rzxx})\\
XZX &   & factorized & no &   & (\ref{rxzx})
\end{tabular}
\end{center}\caption{Type ABC of $R_{456}L^C_{236}L^B_{135}L^A_{124}  = L^A_{124}L^B_{135}L^C_{236}R_{456}$ 
and the basic feature of the solution $R=R^{ABC}$.
We observe the factorization when the number $\sharp ({\rm Z})$ of Z in ABC is odd.
$\sharp$(sector) is the dimension of the solution space 
for the recursion relations of $R^{a,b,c}_{i,j,k}$.}
\label{tab:kekka}
\end{table}

\subsection{On tetrahedron equation of the form $RRRR= RRRR$}

Let us discuss the tetrahedron equation of the form
\begin{align}\label{rrrr}
R_{456}R_{236}R_{135}R_{124}=R_{124}R_{135}R_{236}R_{456}.
\end{align}
A standard strategy for the proof is to compare the two maneuvers:
\begin{equation}\label{SL1}
\begin{split}
&R_{124}R_{135}R_{236}R_{456}
\underline{L_{\alpha \beta 6}L_{\alpha \gamma 5}L_{\beta \gamma 4}}L_{\alpha \delta 3}
L_{\beta \delta 2}L_{\gamma \delta 1}\\
&=R_{124}R_{135}R_{236}L_{\beta \gamma 4}L_{\alpha \gamma 5}
\underline{L_{\alpha \beta 6}L_{\alpha \delta 3}L_{\beta \delta 2}}L_{\gamma \delta 1}R_{456}\\
&=R_{124}R_{135}L_{\beta \gamma 4}\underline{L_{\alpha \gamma 5}L_{\beta \delta 2}}
L_{\alpha \delta 3}\underline{L_{\alpha \beta 6}L_{\gamma \delta 1}}
R_{236}R_{456}\\
&=R_{124}R_{135}L_{\beta \gamma 4}L_{\beta \delta 2}
\underline{L_{\alpha \gamma 5}L_{\alpha \delta 3}L_{\gamma \delta 1}}L_{\alpha \beta 6}
R_{236}R_{456}\\
&=R_{124}\underline{L_{\beta \gamma 4}L_{\beta \delta 2}L_{\gamma \delta 1}}
L_{\alpha \delta 3}L_{\alpha \gamma 5}L_{\alpha \beta 6}
R_{135}R_{236}R_{456}\\
&=L_{\gamma \delta 1}L_{\beta \delta 2}\underline{L_{\beta \gamma 4}L_{\alpha \delta 3}}
L_{\alpha \gamma 5}L_{\alpha \beta 6}
R_{124}R_{135}R_{236}R_{456},\\
&=L_{\gamma \delta 1}L_{\beta \delta 2}L_{\alpha \delta 3}L_{\beta \gamma 4}
L_{\alpha \gamma 5}L_{\alpha \beta 6}
R_{124}R_{135}R_{236}R_{456},
\end{split}
\end{equation}
\begin{equation}\label{SL2}
\begin{split}
&R_{456}R_{236}R_{135}R_{124}
L_{\alpha \beta 6}L_{\alpha \gamma 5}\underline{L_{\beta \gamma 4}L_{\alpha \delta 3}}
L_{\beta \delta 2}L_{\gamma \delta 1}\\
&=R_{456}R_{236}R_{135}R_{124}
L_{\alpha \beta 6}L_{\alpha \gamma 5}L_{\alpha \delta 3}
\underline{L_{\beta \gamma 4}L_{\beta \delta 2}L_{\gamma \delta 1}}\\
&=R_{456}R_{236}R_{135}
L_{\alpha \beta 6}\underline{L_{\alpha \gamma 5}L_{\alpha \delta 3}L_{\gamma \delta 1}}
L_{\beta \delta 2}L_{\beta \gamma 4}R_{124}\\
&=R_{456}R_{236}
\underline{L_{\alpha \beta 6}L_{\gamma \delta 1}}L_{\alpha \delta 3}
\underline{L_{\alpha \gamma 5}L_{\beta \delta 2}}L_{\beta \gamma 4}R_{135}R_{124}\\
&=R_{456}R_{236}
L_{\gamma \delta 1}\underline{L_{\alpha \beta 6}L_{\alpha \delta 3}L_{\beta \delta 2}}
L_{\alpha \gamma 5}L_{\beta \gamma 4}R_{135}R_{124}\\
&=R_{456}
L_{\gamma \delta 1}L_{\beta \delta 2}L_{\alpha \delta 3}
\underline{L_{\alpha \beta 6}L_{\alpha \gamma 5}L_{\beta \gamma 4}}R_{236}R_{135}R_{124}\\
&=L_{\gamma \delta 1}L_{\beta \delta 2}L_{\alpha \delta 3}
L_{\beta \gamma 4}L_{\alpha \gamma 5}L_{\alpha \beta 6}R_{456}R_{236}R_{135}R_{124}.
\end{split}
\end{equation}
The underlines indicate the components to be rewritten by the 
$RLLL=LLLR$ relation or trivial commutativity of the operators acting on distinct set of components. 
The above relations show that the composition 
$(R_{124}R_{135}R_{236}R_{456})^{-1}R_{456}R_{236}R_{135}R_{124}$
commutes with 
$L_{\alpha \beta 6}L_{\alpha \gamma 5}L_{\beta \gamma 4}L_{\alpha \delta 3}
L_{\beta \delta 2}L_{\gamma \delta 1}$.
Therefore if the action of the latter is irreducible,  Schur's lemma compels
$R_{124}R_{135}R_{236}R_{456} = (\text{scalar}) R_{456}R_{236}R_{135}R_{124}$ and 
the scalar can be fixed by considering the special case.

In this type of argument, $RLLL=LLLR$ serves as an auxiliary linear problem for $RRRR=RRRR$,
which is analogous to the quantum group symmetry ensuring the Yang-Baxter equation. 
It indeed works when all the $L$'s are $L^O$, where 
$RLLL=LLLR$ is identified with the intertwining relation of the 
quantized coordinate ring $A_q(sl_3)$. See Section \ref{ss:OOOint}.
The corresponding 3D $R$ of type OOO (\ref{hp2}) certainly satisfies the tetrahedron equation
$R_{124}R_{135}R_{236}R_{456} = R_{456}R_{236}R_{135}R_{124}$ \cite{KV94,BS06}.

The results in this paper suggest a natural generalization where  
the six $L$ operators in (\ref{SL1}) and (\ref{SL2}) are taken either as $L^Z$ or $L^O$
(resp. $L^Z$ or $L^X$) in the context of Section \ref{sec:ZO} (resp. Section \ref{sec:ZX}).
Let us exhibit them as 
$L_{\alpha \beta 6}^FL_{\alpha \gamma 5}^EL_{\beta \gamma 4}^DL_{\alpha \delta 3}^C
L_{\beta \delta 2}^BL_{\gamma \delta 1}^A$, where 
A, B, C, D, E and F assume  Z, O or X.
The corresponding generalization of (\ref{rrrr}) reads as 
\begin{align}\label{rrrr2}
R_{456}^{DEF}R_{236}^{BCF}R_{135}^{ACE}R_{124}^{ABD}
=R_{124}^{ABD}R_{135}^{ACE}R_{236}^{BCF}R_{456}^{DEF},
\end{align}
where $R_{124}^{ABD}$ for example denotes the 3D $R$ of type ABD 
acting on the tensor components 1, 2 and 4.
Let us call (\ref{rrrr2})  the $RRRR$ relation of type ABCDEF.
Its proof or disproof is an important future problem.
It has been settled only for type OOOOOO as explained in the above. 
However, the argument employed there does not persist naively when $L^Z$ is involved 
since the irreducibility no longer holds due to Remark \ref{re:w}. 
Moreover, the presence of locally non-finite 3D $R$
makes the convergence of the compositions in $RRRR=RRRR$ non-trivial.\footnote{The convergence 
also matters when one attempts 
to perform the {\em reduction} of the 3D $R$'s to the 
solutions of the Yang-Baxter equation by the trace \cite{BS06} 
and the boundary vectors \cite{KuS13,KO15}. }
In spite of such difficulties,  we have made promising observations
which are reported below.

From Table \ref{tab:kekka}, the tetrahedron equation (\ref{rrrr2})
consisting of only locally finite 3D $R$' s are of type OOOOOO and the following:
\begin{align}
\text{Type $\quad\;\;$} \quad &   
\text{$\qquad\qquad \qquad $Tetrahedron equation}
\nonumber \\
\text{ZOOOOO}:  \quad &R_{456}^{OOO}R_{236}^{OOO}R_{135}^{ZOO}R_{124}^{ZOO}
=R_{124}^{ZOO}R_{135}^{ZOO}R_{236}^{OOO}R_{456}^{OOO},
\label{z1}\\
\text{OOOZOO}:  \quad &R_{456}^{ZOO}R_{236}^{OOO}R_{135}^{OOO}R_{124}^{OOZ}
=R_{124}^{OOZ}R_{135}^{OOO}R_{236}^{OOO}R_{456}^{ZOO},
\label{z4}\\
\text{OOOOOZ}:  \quad &R_{456}^{OOZ}R_{236}^{OOZ}R_{135}^{OOO}R_{124}^{OOO}
=R_{124}^{OOO}R_{135}^{OOO}R_{236}^{OOZ}R_{456}^{OOZ},
\\
\text{ZOOOOZ}:  \quad &R_{456}^{OOZ}R_{236}^{OOZ}R_{135}^{ZOO}R_{124}^{ZOO}
=R_{124}^{ZOO}R_{135}^{ZOO}R_{236}^{OOZ}R_{456}^{OOZ}.
\label{z16}
\end{align}
In these equations,  images of any given input vector 
$|i\rangle \otimes |j\rangle \otimes |k\rangle \otimes |l\rangle \otimes |m\rangle \otimes |n\rangle$ 
by the two sides are linear combinations of finitely many bases
with finite coefficients, so one can compare them directly.

The tetrahedron equation (\ref{rrrr2})
containing {\em only one} locally non-finite 3D $R$ on each side are the following:
\begin{align}
\text{Type $\quad\;\;$} \quad &   
\text{$\qquad\qquad \qquad $Tetrahedron equation}
\nonumber \\
\text{OZOOOO}:  \quad &R_{456}^{OOO}R_{236}^{ZOO}R_{135}^{OOO}R_{124}^{OZO}
=R_{124}^{OZO}R_{135}^{OOO}R_{236}^{ZOO}R_{456}^{OOO},
\label{z2}\\
\text{OOOOZO}:  \quad &R_{456}^{OZO}R_{236}^{OOO}R_{135}^{OOZ}R_{124}^{OOO}
=R_{124}^{OOO}R_{135}^{OOZ}R_{236}^{OOO}R_{456}^{OZO},
\\
\text{ZZOOOO}:  \quad &R_{456}^{OOO}R_{236}^{ZOO}R_{135}^{ZOO}R_{124}^{ZZO}
=R_{124}^{ZZO}R_{135}^{ZOO}R_{236}^{ZOO}R_{456}^{OOO},
\\
\text{ZOOZOO}:  \quad &R_{456}^{ZOO}R_{236}^{OOO}R_{135}^{ZOO}R_{124}^{ZOZ}
=R_{124}^{ZOZ}R_{135}^{ZOO}R_{236}^{OOO}R_{456}^{ZOO},
\\
\text{OZOZOO}:  \quad &R_{456}^{ZOO}R_{236}^{ZOO}R_{135}^{OOO}R_{124}^{OZZ}
=R_{124}^{OZZ}R_{135}^{OOO}R_{236}^{ZOO}R_{456}^{ZOO},
\\
\text{OOOZZO}:  \quad &R_{456}^{ZZO}R_{236}^{OOO}R_{135}^{OOZ}R_{124}^{OOZ}
=R_{124}^{OOZ}R_{135}^{OOZ}R_{236}^{OOO}R_{456}^{ZZO},
\\
\text{OOOZOZ}:  \quad &R_{456}^{ZOZ}R_{236}^{OOZ}R_{135}^{OOO}R_{124}^{OOZ}
=R_{124}^{OOZ}R_{135}^{OOO}R_{236}^{OOZ}R_{456}^{ZOZ},
\\
\text{OOOOZZ}:  \quad &R_{456}^{OZZ}R_{236}^{OOZ}R_{135}^{OOZ}R_{124}^{OOO}
=R_{124}^{OOO}R_{135}^{OOZ}R_{236}^{OOZ}R_{456}^{OZZ},
\\
\text{ZZOZOO}:  \quad &R_{456}^{ZOO}R_{236}^{ZOO}R_{135}^{ZOO}R_{124}^{ZZZ}
=R_{124}^{ZZZ}R_{135}^{ZOO}R_{236}^{ZOO}R_{456}^{ZOO},
\label{z124}
\\
\text{OOOZZZ}:  \quad &R_{456}^{ZZZ}R_{236}^{OOZ}R_{135}^{OOZ}R_{124}^{OOZ}
=R_{124}^{OOZ}R_{135}^{OOZ}R_{236}^{OOZ}R_{456}^{ZZZ}.
\end{align}
In these equations, transition amplitudes for any {\em pair} of input and output bases 
$|i\rangle \otimes |j\rangle \otimes |k\rangle \otimes |l\rangle \otimes |m\rangle \otimes |n\rangle
\rightarrow 
|a\rangle \otimes |b\rangle \otimes |c\rangle \otimes |d\rangle \otimes |e\rangle \otimes |f\rangle$
by the two sides are finite.
Explcitly they are the two sides of 
\begin{align}\label{rrrr3}
\sum_{u,v,w,x,y,z}
R^{d,e,f}_{x,y,z}R^{b,c,z}_{v,w,n}R^{a,w,y}_{u,k,m}R^{u,v,x}_{i,j,l} = 
\sum_{u,v,w,x,y,z}
R^{a,b,d}_{u,v,x}R^{u,c,e}_{i,w,y}R^{v,w,f}_{j,k,z}R^{x,y,z}_{l,m,n},
\end{align}
where each factor also depends on the type as specified in (\ref{rrrr2}) in general.
For instance the leftmost $R^{d,e,f}_{x,y,z}$ is an element of $R^{DEF}$, whereas 
the next $R^{b,c,z}_{v,w,n}$ is the one for $R^{BCF}$, etc.

There are a couple of tetrahedron equations involving more than one locally non-finite 3D $R$'s on each side, 
which, nevertheless, allow only finitely many quintet $(u,v,w,x,y,z)$ in (\ref{rrrr3}) thanks to the constraint (\ref{ozocg}).
Such types of ABCDEF are the following:
\begin{align}
\text{Type $\quad\;\;$} \quad &   
\text{$\qquad\qquad \qquad $Tetrahedron equation}
\nonumber \\
\text{OOZOOO}:  \quad &R_{456}^{OOO}R_{236}^{OZO}R_{135}^{OZO}R_{124}^{OOO}
=R_{124}^{OOO}R_{135}^{OZO}R_{236}^{OZO}R_{456}^{OOO},
\\
\text{ZOZOOO}:  \quad &R_{456}^{OOO}R_{236}^{OZO}R_{135}^{ZZO}R_{124}^{ZOO}
=R_{124}^{ZOO}R_{135}^{ZZO}R_{236}^{OZO}R_{456}^{OOO},
\\
\text{ZOOOZO}:  \quad &R_{456}^{OZO}R_{236}^{OOO}R_{135}^{ZOZ}R_{124}^{ZOO}
=R_{124}^{ZOO}R_{135}^{ZOZ}R_{236}^{OOO}R_{456}^{OZO},
\\
\text{OZZOOO}:  \quad &R_{456}^{OOO}R_{236}^{ZZO}R_{135}^{OZO}R_{124}^{OZO}
=R_{124}^{OZO}R_{135}^{OZO}R_{236}^{ZZO}R_{456}^{OOO},
\label{nf31}\\
\text{OZOOOZ}:  \quad &R_{456}^{OOZ}R_{236}^{ZOZ}R_{135}^{OOO}R_{124}^{OZO}
=R_{124}^{OZO}R_{135}^{OOO}R_{236}^{ZOZ}R_{456}^{OOZ},
\\
\text{OOZZOO}:  \quad &R_{456}^{ZOO}R_{236}^{OZO}R_{135}^{OZO}R_{124}^{OOZ}
=R_{124}^{OOZ}R_{135}^{OZO}R_{236}^{OZO}R_{456}^{ZOO},
\\
\text{OOZOZO}:  \quad &R_{456}^{OZO}R_{236}^{OZO}R_{135}^{OZZ}R_{124}^{OOO}
=R_{124}^{OOO}R_{135}^{OZZ}R_{236}^{OZO}R_{456}^{OZO},
\label{nf32}\\
\text{OOZOOZ}:  \quad &R_{456}^{OOZ}R_{236}^{OZZ}R_{135}^{OZO}R_{124}^{OOO}
=R_{124}^{OOO}R_{135}^{OZO}R_{236}^{OZZ}R_{456}^{OOZ},
\\
\text{ZOZOZO}:  \quad &R_{456}^{OZO}R_{236}^{OZO}R_{135}^{ZZZ}R_{124}^{ZOO}
=R_{124}^{ZOO}R_{135}^{ZZZ}R_{236}^{OZO}R_{456}^{OZO},
\label{nf33}\\
\text{OZZOOZ}:  \quad &R_{456}^{OOZ}R_{236}^{ZZZ}R_{135}^{OZO}R_{124}^{OZO}
=R_{124}^{OZO}R_{135}^{OZO}R_{236}^{ZZZ}R_{456}^{OOZ}.
\label{zlast}
\end{align}
Note that (\ref{nf31}), (\ref{nf32}), (\ref{nf33}) and (\ref{zlast}) contain {\em three} locally non-finite 3D $R$'s.
To summarize so far,  type OOOOOO and (\ref{z1})--(\ref{zlast}) are the complete list 
of tetrahedron equations of type $\text{ABCDEF} \in \{\text{O,Z}\}^6$ 
which allow finitely many $(u,v,w,x,y,z)$ in (\ref{rrrr3}) enabling us to perform a direct check
for various $(a,b,c,d,e,f)$ and $(i,j,k,l,m,n)$.

When doing so, parameters in the 3D $R$'s are to be chosen with care.
Let us illustrate it along the example (\ref{z4}).
The representations $\pi_O$ and $\pi_Z$ carry the parameter $\mu$ and the quartet $(r,s,t,w)$, respectively.
Thus the tensor components corresponding to 1, 2, 3, 4, 5 and 6 are assigned with the parameters 
$\mu_1, \mu_2, \mu_3, (r_4,s_4,t_4,w_4), \mu_5$ and $\mu_6$, respectively.
$R^{OOO}_{236}$ and $R^{OOO}_{135}$ are given by (\ref{hp2}) by 
replacing $(\mu_1,\mu_2,\mu_3)$ with $(\mu_2,\mu_3,\mu_6)$ and $(\mu_1,\mu_3,\mu_5)$, respectively.
In view of the presence of $R^{ZOO}_{456}$ and Theorem \ref{th:zoo}, we assume
$\frac{\mu_6}{\mu_5} = q^{d}$ for some $d \in \Z$, and take 
$R^{ZOO}_{456}$ to be (\ref{rwoo}) with $(r_1,s_1,t_1,w_1) \rightarrow (r_4,s_4,t_4,w_4)$, 
$\mu_2 \rightarrow \mu_5$ and $\mu_3 \rightarrow  \mu_6$.
Similarly from $R^{OOZ}_{124}$ and Theorem \ref{th:ooz},
we postulate $\frac{\mu_1}{\mu_2} = q^{d'}$ for some $d'\in \Z$,
and $R^{OOZ}_{124}$ is given by (\ref{roow}) with 
$(r_3,s_3,t_3,w_3) \rightarrow (r_4,s_4,t_4,w_4)$ and $d \rightarrow d'$.
With these choices, the tetrahedron equation (\ref{z4}) 
depends on the seven continuous parameters $\mu_1, \mu_3, \mu_5, r_4, s_4, t_4, w_4$ and
the two integer parameters $d, d'$  in addition to the ubiquitous $q$.
Parameters in the other tetrahedron equations are to be tuned similarly.
They can be arbitrary as long as the relevant 3D $R$'s are non-singular, being 
free from the vanishing $q$-shifted factorials in the denominators (if any).

Now we state a conjecture based on computer experiments, indicating
a sort of {\em  coherence}  prevailing the 3D $R$'s obtained in Section \ref{sec:ZO}.
\begin{conjecture}\label{con:r4}
The tetrahedron equations (\ref{z1})--(\ref{zlast}) 
are valid in full generality of parameters.
\end{conjecture}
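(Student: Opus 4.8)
The plan is to follow the standard derivation (\ref{SL1})--(\ref{SL2}) of the tetrahedron equation for type OOOOOO, with the six $L$ operators there read as $L^Z$ or $L^O$ according to the type ABCDEF and with the eight $RLLL=LLLR$ relations of types in $\{\text{O},\text{Z}\}^3$ as the auxiliary linear problem --- Theorem \ref{th:www} and the formula (\ref{hp2}) for ZZZ and OOO, Theorems \ref{th:ooz}, \ref{th:zoo}, \ref{th:owo} for OOZ, ZOO, OZO, and Theorems \ref{th:oww}, \ref{th:wwo}, \ref{th:wow} for OZZ, ZOZ, ZZO. Two preliminary points make this legitimate exactly for the equations (\ref{z1})--(\ref{zlast}). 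First, as recorded after (\ref{rrrr3}), for each of those equations only finitely many intermediate labels $(u,v,w,x,y,z)$ contribute, by local finiteness of the 3D $R$'s together with the support conditions (\ref{oozcg}), (\ref{zoocg}), (\ref{ozocg}); hence each step of (\ref{SL1})--(\ref{SL2}), evaluated between a fixed pair of external basis vectors, is an identity between finite sums, and the rearrangements are justified entry by entry. Second, the inverse of the right-hand side of (\ref{rrrr2}) is explicit by (\ref{zzzinv}), (\ref{oozinv}), (\ref{zooinv}), (\ref{oooinv}) and their evident analogues for OZZ, ZOZ, ZZO. The argument then produces, exactly as in the text, the composition $\Xi$ of the inverse of the right-hand side of (\ref{rrrr2}) with its left-hand side, and shows that $\Xi$ commutes with the six-fold product $\mathbf{L} := L_{\alpha\beta6}L_{\alpha\gamma5}L_{\beta\gamma4}L_{\alpha\delta3}L_{\beta\delta2}L_{\gamma\delta1}$, so the conjecture reduces to the statement that $\Xi=1$.

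The obstacle, flagged after (\ref{rrrr2}), is that when some $L=L^Z$ the product $\mathbf{L}$ preserves the cut-off subspaces built from the invariant subspaces $\bigoplus_{m\le n}$ of Remark \ref{re:w} in each $F$-slot, so its commutant is much larger than the scalars and Schur's lemma cannot conclude $\Xi=1$. I see two complementary routes around this, the first conceptual and the second computational. Route A: refine the sector analysis of Proposition \ref{pr:d12} and of the parity statements accompanying Theorems \ref{th:oww}, \ref{th:wwo}, \ref{th:wow}, \ref{th:ooz}, \ref{th:zoo}, \ref{th:owo} to the six-fold setting --- each $R^{ABC}$, hence $\Xi$, respects a parity decomposition of its domain (the $\mathcal{F}_{p_1,p_2}$ of (\ref{Fp}) for the ZZZ pieces, with images as in Figure \ref{fig:Rpp}) --- and then argue that an \emph{invertible} operator that commutes with $\mathbf{L}$ on a fixed sector and also respects the filtrations of Remark \ref{re:w} must be scalar; the organizing principle here should be the $A_q(sl_3)$ realizations of Proposition \ref{pr:z} and Section \ref{ss:OOOint}, extended to the six mixed types under the mixed analogue of (\ref{rrtt}), since (\ref{rrrr2}) is then the Zamolodchikov relation for the intertwiners attached to braid moves in a reduced word of the longest element of $\mathfrak{S}_4$, automatic for irreducible modules. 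Route B: verify (\ref{rrrr3}) directly by carrying out the $q$-hypergeometric summation over $(u,v,w,x,y,z)$ on each side into closed form for generic external indices --- where, as in the proofs of Theorems \ref{th:ooz} and \ref{th:zoo}, the $q$-powers $q^a,\dots,q^n$ may be treated as formal variables --- and then handle the boundary values of the external indices separately by means of the vanishing conditions. I expect Route B, though ``straightforward but laborious'', to be the realistic path, with the required multi-sum $q$-series identities the main point of difficulty; Route A is the cleaner story, but turning ``larger commutant'' into a genuine rigidity statement is itself the crux.

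Whichever route is taken, the residual scalar is fixed to $1$ by comparing a single leading ($\delta$-constrained) matrix element of the two sides of (\ref{rrrr3}), once the 3D $R$'s are normalized --- their inverses being scalar-free as in (\ref{oozinv}), (\ref{zooinv}), (\ref{oooinv}), and likewise for (\ref{zzzinv}) after a normalization. A subsidiary but necessary clerical step, of the kind illustrated around (\ref{z4}), is to check that the parameter identifications demanded for the six overlapping triples inside a given ABCDEF --- the mixed analogues of the constraints (\ref{rrtt}) --- are simultaneously satisfiable and leave the advertised continuous and integer parameters free; no genuine obstruction is expected there. Finally, since for fixed external indices both sides of (\ref{rrrr3}) are rational functions of $q$ and of the finitely many continuous parameters, one may always reduce to verifying equality generically, which also lets the partially mixed types OOZ, ZOO, OZO, OZZ, ZOZ, ZZO be obtained by degenerating a genericized ZZZ or OOO computation, in the spirit of the limits (\ref{spe}) that recover the OOZ, ZOO, OZO data from XXZ, ZXX, XZX.
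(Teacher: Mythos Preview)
The statement is a \emph{conjecture} in the paper; no proof is given.  The paper's only support is numerical: roughly 10000 checks of (\ref{rrrr3}) for specific pairs of external indices.  So there is no ``paper's proof'' to compare against.

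Your submission is a strategy outline rather than a proof, and you say so yourself: Route A hinges on a rigidity statement (``invertible, filtration-respecting, commutes with $\mathbf{L}$ $\Rightarrow$ scalar'') that you call ``itself the crux'' and do not establish, while Route B is an uncompleted multi-sum $q$-series computation that you describe as ``the realistic path'' but do not carry out.  Both are reasonable research directions, but neither closes the gap.  The obstruction you correctly identify --- that Remark \ref{re:w} kills irreducibility of the six-fold $L$-product whenever an $L^Z$ is present, so Schur's lemma cannot force $\Xi=1$ --- is precisely why the authors leave this as a conjecture, and your proposal does not overcome it.  Invoking Proposition \ref{pr:z} does not help directly: that result only matches the ZZZ-type $RLLL$ relation with an $A_q(sl_3)$ intertwining relation under the parameter constraints (\ref{rrtt}), not in the full generality of parameters claimed in the conjecture, and the representations $\rho_{Z,i}$ on $F$ are themselves reducible, so the Zamolodchikov argument does not go through automatically even there.

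One further technical point deserves care.  You assert that the matrix-element finiteness recorded after (\ref{rrrr3}) legitimizes ``each step of (\ref{SL1})--(\ref{SL2}), evaluated between a fixed pair of external basis vectors''.  But the finiteness statement in the paper concerns only the two final four-fold products of $R$'s; the intermediate lines of (\ref{SL1})--(\ref{SL2}) involve different partial compositions of non-locally-finite $R$'s (e.g.\ $R^{OZO}$, $R^{OZZ}$, $R^{ZZZ}$) with partial $L$-products, and those intermediate operators need their own well-definedness argument before the rearrangements can be justified entry by entry.
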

Typically, equalities have been checked for about 10000 choices of the pairs
$((a,b,c,d,e,f), (i,j,k,l,m,n))$.

\vspace{0.2cm}
Let us turn to the tetrahedron equation in which at least one side of (\ref{rrrr3}) becomes a sum over infinitely many 
$(u,v,w,x,y,z)$'s.
A typical examples is type ZZZZZZ.
A possible regularization in such a circumstance  
is to specialize $q$ to a root of unity and thereby to replace $F$ by a finite dimensional vector space.
For $R^{XXX}$, such a recipe is known \cite{BMS08} to yield the 3D $R$ corresponding to \cite{BB92},
which is closely  related with the generalized Chiral Potts models \cite{BKMS91, DJMM91, BB92, SMS96}.
It is an interesting problem to explore a similar connection for the other 3D $R$'s in this paper.
Remark \ref{re:cro} is a  key to such studies concerning $R^{OOO}$.

Finally the whole setting concerning the quantized Yang-Baxter equation 
$RLLL=LLLR$ in this paper has a natural analogue 
in the {\em quantized reflection equation}  $K(GLGL)=(LGLG)K$ \cite{KP18} which is 
related to the quantized coordinate rings of type $B$ and $C$.
It awaits a discovery of new 3D $K$'s  different from the known ones in 
\cite{KO12} and \cite[Chap. 5 \& 6]{Ku22}.

\appendix

\section{Explicit form of $R_{456}L^{Z}_{236}L^{Z}_{135}L^{Z}_{124}
=L^{Z}_{124}L^{Z}_{135}L^{Z}_{236}R_{456}$}\label{app:eqs}

We write down (\ref{ZZZ}) explicitly together with 
the corresponding choice of $(abcijk)$ in (\ref{qybe}) or in Figure \ref{fig:qybe}.
As mentioned around (\ref{rdef}), there are 18 non-trivial cases.
To save the space, we write 
$Y_\alpha = Z^{-1}(r_\alpha s_\alpha-t_\alpha^2w_\alpha X^2)$.
\begin{align}
(001001): \,
&R(1\otimes  X\otimes  X) = (1\otimes  X \otimes  X)R,
\\
(001010):\,&R\bigl(
r_2t_1X \otimes  1 \otimes  Y_3
+t_3Z\otimes  Y_2 \otimes  X\bigr) = r_1t_2\bigl( 1 \otimes  X \otimes  Y_3\bigr)R,
\\
(001100):\, &R\bigl(
-qt_1t_3w_1X \otimes Y_2\otimes X
+r_2 Y_1 \otimes 1 \otimes Y_3\bigr)
= r_1r_3\bigl(1 \otimes Y_2 \otimes 1\bigr)R,
\\
(010001):\, &
r_1t_2R(1 \otimes X \otimes Z) = 
\bigl(r_2t_1X \otimes 1 \otimes Z 
+ t_3Y_1 \otimes Z\otimes X \bigr)R,
\\
(010010):\, &R\bigl(
qr_2t_1t_3w_3 X \otimes 1 \otimes X - Z \otimes Y_2 \otimes Z\bigr)
= \bigl(
qr_2t_1t_3w_3 X \otimes 1 \otimes X
-Y_1 \otimes Z \otimes Y_3
\bigr)R,
\\
(010100):\, &R\bigl(
t_1w_1 X \otimes Y_2 \otimes Z
+ r_2t_3w_3Y_1\otimes 1 \otimes X\bigr)
= r_3t_2w_2 \bigl(Y_1 \otimes X \otimes 1\bigr) R,
\\
(011011):  \,
&R(X \otimes  X \otimes  1) = (X \otimes  X \otimes  1)R,
\\
(011101):\, &s_3t_2  R\bigl(
Y_1\otimes X \otimes 1\bigr)
= \bigl(
t_1X \otimes Y_2 \otimes Z 
+ s_2t_3 Y_1\otimes 1 \otimes X
\bigr)R,
\\
(011110):\, 
&s_1s_3 R\bigl(
1 \otimes Y_2  \otimes 1\bigr)
= \bigl(-
q t_1t_3w_3 X \otimes Y_2 \otimes X
+s_2Y_1\otimes 1 \otimes Y_3
\bigr)R,
\\
(100001): \,
&r_1r_3 R(1\otimes  Z \otimes  1) =  (-q t_1t_3w_1 X \otimes  Z \otimes  X +r_2 Z \otimes  1 \otimes  Z)R,
\\
(100010):\,
&r_3t_2w_2R(Z \otimes X \otimes 1)
=\bigl( t_1w_1 X \otimes Z \otimes Y_3
+ r_2t_3w_3 Z \otimes 1 \otimes X\bigr)R,
\\
(100100):  \,
&R(X \otimes  X \otimes  1) = (X \otimes  X \otimes  1)R,
\\
(101011): 
& R (t_1X \otimes  Z \otimes  Y_3 + s_2 t_3 Z \otimes  1 \otimes  X) 
= s_3 t_2 (Z \otimes  X \otimes  1)R, 
\\
(101101):\, &R\bigl(
-qs_2t_1t_3w_1 X \otimes 1 \otimes X 
+ Y_1 \otimes Z \otimes Y_3\bigr)
=\bigl( -q s_2 t_1t_3w_1X \otimes 1 \otimes X
+ Z \otimes Y_2 \otimes Z\bigr)R,
\\
(101110):\, &R\bigl(s_1t_2w_2 1 \otimes X \otimes Y_3)
= 
\bigl(s_2t_1w_1X \otimes 1 \otimes  Y_3
+ t_3w_3 Z \otimes Y_2 \otimes X\bigr)R,
\\
(110011): \,
&R(-qt_1t_3w_3 X \otimes  Z \otimes  X + s_2 Z \otimes  1 \otimes  Z) = s_1s_3 (1 \otimes  Z \otimes  1)R,  
\\
(110101):\,
&R\bigl (t_3 w_3 Y_1 \otimes  Z \otimes  X + s_2 t_1 w_1 X \otimes  1 \otimes  Z\bigr) 
= s_1 t_2 w_2 (1 \otimes  X \otimes  Z)R,
\\
(110110): \,
&R(1\otimes  X\otimes  X) = (1\otimes  X \otimes  X)R.
\end{align}

\section*{Acknowledgments}
A.Y. is supported by Grants-in-Aid for Scientific Research No.~21J11742 from JSPS.


\begin{thebibliography}{99}


\bibitem{Bax}
R.~J.~Baxter,
\textit{Exactly solved models in statistical mechanics},
Academic Press,  London  (1982)

\bibitem{BB92}
V.~V.~Bazhanov, R.~J.~Baxter,
New solvable lattice models in three-dimensions,
J. Stat. Phys. {\bf 69} 453--585 (1992)

\bibitem{BS06}
V.~V.~Bazhanov, S.~M.~Sergeev,
Zamolodchikov's tetrahedron equation and hidden structure of quantum groups,
J. Phys. A: Math. Theor. \textbf{39} 3295--3310 (2006)

\bibitem{BKMS91}
V.~V.~Bazhanov, R.~M.~Kashaev,V.~V.~Mangazeev, Yu.~G.~Stroganov, 
$(Z_N\times)^{n-1}$ generalization of the chiral Potts model,
Commun. Math. Phys. {\bf 138}   393--408  (1991)

\bibitem{BMS08}
V.~V.~Bazhanov, V.~V.~Mangazeev, S.~M.~Sergeev,
Quantum geometry of 3-dimensional lattices and tetrahedron equation,
16th Int. Congr. of Mathemtical Physics ed. P.~Exner (Singapore: World
Scientific) pp 23--44 (2010)

\bibitem{CP94}
V.~Chari, A.~Pressley,
{\em A guide to quantum groups},
Cambridge University Press, Cambridge (1994)

\bibitem{DJMM91}
E.~Date, M.~Jimbo, K.~Miki, T.~Miwa,
Generalized chiral Potts models
and minimal cyclic representations of $U_q(\hat{\mathfrak{gl}}(n,\C))$,
Commun. Math. Phys. {\bf 137} 133--147 (1991)

\bibitem{Dri86}
V.~G.~Drinfeld,
Quantum groups,
In Proceedings of the International Congress of Mathematicians, 
Vol. 1, 2 (Berkeley, Calif., 1986), pp. 798--820, Amer. Math. Soc., 
Providence, RI, (1987)

\bibitem{KV94}
M.~M.~Kapranov, V.~A.~Voevodsky,
2-categories and Zamolodchikov tetrahedron equations. 
Proc. Symposia in Pure Math. \textbf{56} 177--259 (1994)

\bibitem{Kas93}
M.~Kashiwara,
Global crystal bases of quantum groups,
Duke Math. J. {\bf 69} 455--485 (1993)

\bibitem{KS93}
D.~Kazhdan, Y.~Soibelman,
{\it Representations of the quantized function algebras, 
2 -categories and Zamolodchikov tetrahedra equation},
The Gelfand Math. Seminars 1990-1992, pp163--171, 
Birkh{\" a}user Boston, MA (1993). 

\bibitem{Kor93}
I.~G.~Korepanov,
Tetrahedral Zamolodchikov algebras corresponding 
to Baxter's	${L}$-operators,
Commun. Math. Phys. {\bf 154}  85--97 (1993)

\bibitem{KS98}
L.~I.~Korogodski, Y.~S.~Soibelman,
{\em Algebras of functions on quantum groups: Part I},
AMS (1998)

\bibitem{Ku22}
A.~Kuniba,
\textit{Quantum groups in three-dimensional integrability}, Springer,  Berlin  (2022)

\bibitem{KO12} A.~Kuniba, M.~Okado,
Tetrahedron and 3D reflection equations
from quantized algebra of functions, 
J. Phys. A: Math.Theor. {\bf 45} 465206 (27pp)  (2012) 

\bibitem{KO15} A.~Kuniba, M.~Okado,
Tetrahedron equation and quantum $R$ matrices
for $q$-oscillator representations of 
${U_q(A^{(2)}_{2n})}$, 
${U_q(C^{(1)}_{n})}$ and 
${U_q(D^{(2)}_{n+1})}$,
Commun. Math. Phys. {\bf 334} 1219--1244 (2015)

\bibitem{KP18}
A.~Kuniba, V.~Pasquier,
Matrix product solutions 
to the reflection equation from three dimensional integrability,
J. Phys. A: Math. Theor. {\bf 51} (2018) 255204 (26pp)

\bibitem{KuS13}
A.~Kuniba, S.~Sergeev,
Tetrahedron equation and quantum R matrices for spin representations of
$B^{(1)}_n, D^{(1)}_n$ and $D^{(2)}_{n+1}$,
Commun. Math. Phys. {\bf 324} 695--713 (2013)

\bibitem{MN89}
J.~M.~Maillet,  F.~Nijhoff,
Integrability for multidimensional lattices,
Phys. Lett. B{\bf 224} 389--396 (1989)

\bibitem{RTF90}
N.~Yu.~Reshetikhin, L.~A.~Takhtadzhyan, L.~D.~Faddeev, 
Quantization of Lie groups and 
Lie algebras, Leningrad Math. J. {\bf 1} 193--225 (1990)

\bibitem{Ser98}S.~M.~Sergeev, 
Solutions of the functional tetrahedron equation connected with 
the local Yang-Baxter equation for the ferro-electric condition,
Lett. Math. Phys. {\bf 45} 113--119 (1998)

\bibitem{SMS96}
S.~M.~Sergeev, V.~V.~Mangazeev, Yu.~G.~Stroganov,
The vertex formulation of the Bazhanov-Baxter model,
J. Stat. Phys. {\bf 82} 31--49 (1996)

\bibitem{Y21}A.~Yoneyama,
Tetrahedron and 3D reflection equation from 
PBW bases of the nilpotent subalgebra of quantum superalgebras,
Commun. Math. Phys. {\bf 387} 481--550 (2021)

\bibitem{Zam80}
A.~B.~Zamolodchikov,
Tetrahedra equations and integrable systems in three-dimensional space,
Soviet Phys. JETP {\bf 79} 641--664 (1980)

\end{thebibliography}
\end{document}